\newtheorem{definition}{Definition}[section]
\newtheorem{lemma}[definition]{Lemma}
\newtheorem{theorem}[definition]{Theorem}
\newtheorem{remark}{Remark}
\title{\textbf{Step-by-Step Community Detection in Volume-Regular Graphs}} 
\author{Luca Becchetti\thanks{Partially
supported by ERC Advanced Grant 788893 AMDROMA ``Algorithmic and Mechanism
Design Research in Online Markets'' and MIUR PRIN project ALGADIMAR ``Algorithms,
Games, and Digital Markets''} \\
		{\small{}Sapienza Università di Roma}\\
		{\small{}Rome, Italy}\\
		{\small{}\texttt{becchetti@diag.uniroma1.it}}\\
	\and Emilio Cruciani \\
		{\small{}Inria, I3S Lab, UCA, CNRS}\\
		{\small{}Sophia Antipolis, France}\\
		{\small{}\texttt{emilio.cruciani@inria.fr}}\\
  \and Francesco Pasquale\thanks{Partially supported by the University
of ``Tor Vergata'' under research programme ``Mission: Sustainability'' project
ISIDE (grant no.  E81I18000110005)}\\ 
		{\small{}Università di Roma Tor Vergata}\\
		{\small{}Rome, Italy}\\
		{\small{}\texttt{pasquale@mat.uniroma2.it}} 
  \and Sara Rizzo\\
		{\small{}Gran Sasso Science Institute}\\
		{\small{}L'Aquila, Italy}\\
		{\small{}\texttt{sara.rizzo@gssi.it}}\\
}
\date{}
\newcommand{\dv}{\delta} 
\newcommand{\w}{w}
\newcommand{\xerr}{\bm{e}} 
\newcommand{\Abs}[1]{\left \vert #1 \right \vert}
\newcommand{\abs}[1]{\vert #1 \vert}
\newcommand{\norm}[1]{\Vert #1 \Vert}
\newcommand{\Norm}[1]{\left \Vert #1 \right \Vert}
\newcommand{\Dsq}{D^{\frac{1}{2}}}
\newcommand{\Dsqinv}{D^{-\frac{1}{2}}}
\newcommand{\sgn}{\mathrm{sgn}}  
\newcommand{\vol}{\mathrm{vol}}
\newcommand{\Vunion}{\hat{V}}
\newcommand{\nmin}{n_{_{\min}}}
\newcommand{\nmax}{n_{_{\max}}}
\newcommand{\Pos}{i:\alpha_i\bm{v}_i(u)>0}
\newcommand{\Neg}{i:\alpha_i\bm{v}_i(u)<0}
\newcommand{\bu}{\bm{u}}
\newcommand{\bv}{\bm{v}}
\newcommand{\bw}{\bm{w}}
\newcommand{\bx}{\bm{x}}
\newcommand{\bone}{{\mathbf 1}}
\newcommand{\T}{\intercal}
\RenewDocumentCommand\Pr{gg}{%
	\ensuremath{%
		\mathbf{P} \IfNoValueTF{#1}{}{\left( #1 \IfNoValueTF{#2}{}{\cond #2} \right)}
	}
}
\NewDocumentCommand\Prob{mgg}{%
	\ensuremath{%
		\mathbf{P}_{#1} \IfNoValueTF{#2}{}{\left( #2 \IfNoValueTF{#3}{}{\cond #3} \right)}
	}
}
\NewDocumentCommand\Ex{gg}{%
	\ensuremath{%
		\mathbf{E} \IfNoValueTF{#1}{}{\left[ #1 \IfNoValueTF{#2}{}{\cond #2} \right]}
	}
}
\NewDocumentCommand\csa{gg}{%
    \IfNoValueTF{#2}{}{\ensuremath{\left(#1,\,#2\right)}-}community-sensitive algorithm}
\newcommand{\bigO}{\mathcal{O}}
\newcommand{\averaging}{\textsc{Averaging}\xspace} 
\newcommand{\avg}{\textsc{Averaging}\xspace} 
\newcommand{\avgbip}{\textsc{Averaging Bipartite}\xspace} 
\newcommand{\planted}{\mathcal{G}_{2n,p,q}}
\NewDocumentCommand\dham{gg}{\ensuremath{d_H\IfNoValueTF{#1}{}{\left(#1, #2\right)}}}
\renewcommand{\leq}{\leqslant}
\renewcommand{\le}{\leqslant}
\renewcommand{\geq}{\geqslant}
\renewcommand{\ge}{\geqslant}
\renewcommand{\epsilon}{\varepsilon}
\newcommand{\calA}{\mathcal{A}}
\begin{document}

\maketitle

\begin{abstract}
	Spectral techniques have proved amongst the most effective approaches to graph
clustering. However, in general they require explicit computation of the main
eigenvectors of a suitable matrix (usually the Laplacian matrix of the graph).

Recent work (e.g., Becchetti et al., SODA 2017) suggests that observing the
temporal evolution of the power method applied to an initial random vector may,
at least in some cases, provide enough information on the space spanned by the
first two eigenvectors, so as to allow recovery of a hidden partition without explicit
eigenvector computations. While the results of Becchetti et al.\ apply 
to perfectly balanced partitions and/or graphs that exhibit very strong 
forms of regularity, we extend their approach to graphs containing a 
hidden $k$ partition and characterized by a milder form of 
volume-regularity. We show that the class of $k$-\emph{volume regular} graphs is the largest class of
undirected (possibly weighted) graphs whose transition matrix admits $k$ ``stepwise'' eigenvectors
(i.e., vectors that are constant over each set of the hidden partition). 
To obtain this result, we highlight a connection between volume regularity and lumpability of Markov
chains. Moreover, we prove that if the stepwise eigenvectors are those
associated to the first $k$ eigenvalues and the gap between the $k$-th and the
($k$+1)-th eigenvalues is sufficiently large, the \avg{} dynamics of 
Becchetti et al.\ recovers the underlying community structure of the graph in logarithmic time, with high
probability.
\end{abstract}

\bigskip\noindent
\textbf{Keywords:} Distributed algorithms, Community detection, Markov chains, Spectral analysis
\newpage

\section{Introduction}\label{sec:intro}

Clustering a graph in a way that reflects underlying community structure 
is a very important mining task~\cite{fortunato_community_2010}.
Informally speaking, in the classical setting, we are given a possibly weighted
graph $G$ and an integer $k$.  
Our goal is to partition the vertex set of $G=(V,E)$ into $k$ disjoint subsets,
so that the $k$ induced subgraphs have high inner and low outer expansion. 
Spectral techniques have proved amongst the most effective approaches to graph
clustering~\cite{ng2002spectral,shi2000normalized,von2007tutorial}. 
The general approach to spectral graph clustering~\cite{von2007tutorial} 
normally implies embedding the vertices of $G$ into the $k$-dimensional 
subspace spanned by the main $k$ eigenvectors of a matrix defined in terms 
of $G$'s adjacency matrix, typically its (normalized) Laplacian.
Intuitively, one expects that, for a well-clustered graph
with $k$ communities, the profiles of the first $k$ eigenvectors are correlated
with the underlying community structure of $G$.  Recent work has provided
theoretical support to this approach. In particular,~\cite{lee2014multiway}
showed that, given the first $k$ orthonormal eigenvectors of the normalized
Laplacian, it is possible to produce a $k$-partition of the vertex set,
corresponding to $k$ suitably-defined indicator vectors, such that the
associated values of the Rayleigh quotient are relatively small. More
recently,~\cite{PSZ17} proved that, under suitable hypotheses on the spectral
gap between the $k$-th and ($k$+1)-th eigenvalue of the normalized Laplacian of
$G$, the span of the first $k$ eigenvectors largely overlaps with the span of
$\{D^{\frac{1}{2}}\bm{g}_1,\ldots, D^{\frac{1}{2}}\bm{g}_k\}$, where 
$D$ is the diagonal degree matrix of $G$, while the 
$\bm{g}_i$'s are indicator vectors describing a $k$-way partition 
$\{S_i\}_{i=1}^k$ of $V$ such that, for every $i$, the conductance of $S_i$ 
is at most the $k$-way expansion constant $\rho(k)$~\cite{lee2014multiway}. 
Note that, if $\bm{v}$ is an eigenvector associated to the $i$-th smallest
eigenvalue of the normalized Laplacian,
$D^{-\frac{1}{2}}\bm{v}$ is an eigenvector corresponding to the $i$-th largest
eigenvalue of the random walk's transition matrix associated to $G$.
Hence, when $G$ is well-clustered, one might reasonably expect the first $k$ 
eigenvectors of $P$ to exhibit almost-``stepwise'' profiles reflecting 
$G$'s underlying community structure.
The aforementioned spectral approaches require explicit computation of the $k$
main eigenvectors of a (generally symmetric) matrix. 

In~\cite{becchetti2017find}, the authors considered the case $k = 2$ for which 
they proposed the following distributed algorithm (\averaging{} dynamics, 
\cref{alg:avg_dyn}): 
``At the outset, every node picks an initial value, independently and uniformly
at random in $\{-1,1\}$; then, in each synchronous round, every node updates 
its value to the average of those held by its neighbors. 
A node also tags itself \texttt{blue} if the last update increased its value, 
\texttt{red} otherwise''~\cite{becchetti2017find}.  
The authors showed that, under a variety of graph models exhibiting sparse
balanced cuts, including the \emph{stochastic block
model}~\cite{holland1983stochastic}, the process resulting from the above
simple local rule converges, in logarithmic time, to a coloring that, depending on the model, exactly or approximately reflects the underlying cut.
They further elaborated on how to extend the proposed approach to the case
of multiple communities, providing an analysis for a strongly regular version
of the stochastic block model with multiple communities. 
While results like those presented in~\cite{lee2014multiway,PSZ17} provide further theoretical
justification for spectral clustering, the approach proposed
in~\cite{becchetti2017find} suggests that observing the temporal evolution of
the power method applied to an initial random vector may, at least in some
cases, provide equivalent information, without requiring explicit eigenvector
computations. 

\subsection{Our contributions}
The goal of this work is to take a further step in this direction by 
considering a more general class of graphs, even if still relatively ``regular'', 
than the one considered in~\cite{becchetti2017find}.
The analysis of the \avg{} dynamics on this class is considerably harder,
but it is likely to provide insights into the challenges of analyzing the
general case, without all the intricacies of the latter. 
Our contribution is as follows:
\begin{itemize}
    \item We define the class of $k$-\emph{volume-regular} graphs. This class
    of edge-weighted graphs includes those considered in~\cite{becchetti2017find}
    and it is the largest class of undirected, possibly weighted graphs that
    admit $k$ ``stepwise'' eigenvectors (i.e., having constant values 
    over the $k$ steps that identify the hidden partition).
    This result uses a connection between volume regularity and lumpability 
    of Markov chains~\cite{kemeny1960finite,tian2006lumpability}.
    \item If the stepwise eigenvectors are those associated to the first $k$
    eigenvalues and the gap between the $k$-th and the ($k$+1)-th eigenvalues
    is sufficiently large, we show that running the \avg{} dynamics
    for a suitable number of steps allows recovery of the underlying community
    structure of the graph, with high probability.%
    \footnote{An event $\mathcal{E}_n$ holds 
    \emph{with high probability} (\emph{w.h.p.}) 
    if $\Pr{\mathcal{E}_n} = 1 - \bigO(n^{-\gamma})$, 
    for some constant $\gamma > 0$.} 
    To prove this, we provide a family of mutually orthonormal vectors which,
    when the graph is volume-regular, span the eigenspace of the main $k$ 
    eigenvectors of the normalized adjacency matrix of the graph. 
    It should be noted that the first and second of these vectors are 
    respectively the main eigenvector and the Fiedler 
    vector~\cite{fiedler1989laplacian} associated to the normalized adjacency matrix.
    \item While the results of~\cite{becchetti2017find} apply when the
    underlying communities are of the same size, our results do not require
    this assumption and they apply to weighted graphs. It should also be noted
    that volume regularity is a weaker notion than regularity of the graph. 
    \item We further show that variants of the \avg{} dynamics 
    (and/or its labeling rule) can address different 
    problems (e.g., identifying bipartiteness) and/or other graph classes.
\end{itemize}

We further note that the overall algorithm we consider can be viewed as a fully
decentralized, synchronous algorithm that works in \emph{anonymous} networks,%
\footnote{Nodes do not possess distinguished identities.} 
with a completely local clustering criterion, though it cannot be considered a
\emph{dynamics} in the sense of~\cite{becchetti2017find} since it requires a
bound on the number of nodes in the underlying network.

Finally, this paper extends a preliminary version \cite{becchetti2019step} in several ways. 
To begin, the main result presented in \cite{becchetti2019step} was weaker, in the sense
that the constraints imposed on the eigenvalues in \cite[Theorem~9]{becchetti2019step} polynomially
depend on network parameters like the maximum degree and the number of vertices.
In this respect, they are substantially stronger than those imposed to prove \cref{thm:main}, where
results (in particular, the time window in which recovery of the hidden partition is
possible) are expressed in terms of the spectrum of the graph, while constraints imposed
on the second eigenvalue only logarithmically depend on the aforementioned network
parameters. In reframing these results, we also realized that the presence of a window in which
recovery is possible is something that is hardly avoidable in general using the simple averaging
heuristic of \cite{becchetti2017find}. This is something we remark 
right after \cref{thm:main} (see \cref{rmk1}), while  we also 
observe (see \cref{rmk:equa_sized,rmk:two_com}) that the analysis presented here also encompasses the class of
regular graphs considered in \cite{becchetti2017find} as a special case, something that was not obvious
in \cite{becchetti2019step}.
Finally, the result given in \cite{becchetti2019step} for bipartite graphs assumed 
volume regularity, an assumption that is not necessary as we show in \cref{sec:bipartite}.

\subsection{Further related work}\label{ssec:related}
We briefly discuss further work that bears some relationship to this paper,
either because it adopts simple and/or decentralized heuristics to uncover
community structure, or because it relies on the use of spectral techniques.

\paragraph{Decentralized heuristics for block reconstruction.}
\emph{Label propagation algorithms} \cite{raghavan2007near} 
are dynamics based on majority updating rules~\cite{AAE07} and have been applied
for detecting communities in complex networks.
Several papers present experimental results for such protocols on specific
classes of clustered graphs~\cite{barber2009detecting,liu2010advanced,
raghavan2007near}. 
The only available rigorous analysis of a label propagation algorithm on planted
partition graphs is the one presented in~\cite{kothapalli2013analysis}, where
the authors analyze a label propagation algorithm on $\planted$ graphs 
in the case of dense topologies.
In particular, their analysis considers the case where
$p = \Omega(1/n^{\frac{1}{4}-\epsilon})$ and $q = \bigO(p^2)$, 
a parameter range in which very dense clusters of constant diameter separated 
by a sparse cut occur w.h.p.
In this setting, characterized by a polynomial gap between $p$ and
$q$, simple combinatorial and concentration arguments show that the protocol
converges in constant expected time.
A logarithmic bound for sparser topologies is conjectured in~\cite{kothapalli2013analysis}.

Following~\cite{becchetti2017find}, a number of recent papers analyze simple 
distributed  algorithms  for community detection that rely on elementary dynamics.
In the \avg dynamics considered in this paper, every node communicates in 
parallel with all its neighbors in each round. 
While this might be too expensive in scenarios characterized by dense topologies,
it is simply infeasible in other settings (for instance, when links represent 
opportunistic meetings that occur asynchronously). 
Motivated by similar considerations, a first line of follow-up work 
considered ``sparsified'', asynchronous variants of the \avg 
dynamics~\cite{BCMNPRT18,TMM18,SZ18}. 
  
Another interesting direction is the rigorous analysis of well-known (non-linear)
dynamics based on \emph{majority rules} on graphs that exhibit community 
structure. 
In~\cite{CNNS18}, Cruciani et al.\ consider the \emph{2-Choices} 
dynamics where, in each round, every node picks two random neighbors 
and updates its value to the most frequent among its value and those held by 
its sampled neighbors. 
They show that if the underlying graph has a suitable core-periphery 
structure and the process starts in a configuration where nodes in core and 
periphery have different states, the system either rapidly converges to the
core's state or reaches a metastable regime that reflects the underlying 
graph structure.
Similar results have been also obtained for clustered regular graphs 
with dense communities in~\cite{CNS18}, where the \emph{2-Choices} dynamics 
is proposed as a distributed algorithm for community detection.

Although based on the \avg dynamics and thus extremely simple and fully 
decentralized, the algorithm we consider in this paper is not itself a dynamics 
in the sense proposed in~\cite{becchetti2017find}, since its clustering 
criterion is applied within a time window, which in turn requires (at 
least approximate) knowledge of the network size.

Because of their relevance for the reconstruction problem, we also 
briefly discuss the class of \emph{belief propagation algorithms},
best known as message-passing algorithms for performing inference in graphical 
models~\cite{MAC03}. 
Though not a dynamics, belief propagation is still a simple approach. 
Moreover, there is non-rigorous, strong supporting evidence that some 
\emph{belief propagation algorithms} might be optimal for the reconstruction
problem~\cite{decelle_asymptotic_2011}.
A rigorous analysis is a major challenge; in particular, convergence to
the correct value of belief propagation is far from being fully-understood on
graphs which are not trees~\cite{MK07,WEI00}.
As we discuss in the next subsection, more complex algorithms inspired by 
belief propagation have been rigorously shown to perform reconstruction 
optimally.

\paragraph{General algorithms for block reconstruction.}
Several algorithms for community detection are \textit{spectral}: They typically 
consider the eigenvector associated to the second largest eigenvalue of the
adjacency matrix $A$ of $G$, or the eigenvector corresponding to the largest 
eigenvalue of the matrix $A- \frac dn J$~\cite{boppana1987eigenvalues,
coja-oghlan_spectral_2005,coja2010graph,mcsherry2001spectral},%
\footnote{$A$ is the adjacency matrix of $G$, $J$ is the matrix having all 
entries equal to $1$, $d$ is the average degree, and $n$ is the number of 
vertices.}
since these are correlated with the hidden partition.  
More recently spectral algorithms have been proposed~\cite{abbe2015detection,
BLM15,coja2010graph,krzakala2013spectral,mossel_proof_2013,PSZ17} 
that find a weak reconstruction even in the sparse, tight regime. 

Interestingly, spectral algorithms turn out to be a feasible approach 
also in distributed settings.
In particular, Kempe and McSherry~\cite{kempe2004decentralized}
show that eigenvalue computations can be performed in a distributed fashion,
yielding distributed algorithms for community detection under various models,
including the stochastic block model. 
However, their algorithm does not match any simple decentralized computing model.
In particular, the algorithm of Kempe and McSherry as well as any distributed 
version of the above mentioned centralized algorithms are neither dynamics, 
nor do they correspond to the notion of \emph{light-weight} algorithm of 
Hassin and Peleg~\cite{hassin2001distributed}. 
Moreover, the mixing time of the simple random walk on the graph is a
bottleneck for the distributed algorithm of Kempe and McSherry and for any
algorithm that performs community detection in a graph $G$ by employing the
power method or the Lanczos method~\cite{lanczos1950iteration} as a subroutine.
This is not the case for the \avg dynamics, since it removes the component of 
the state in the span of the main eigenvector.

In general, the reconstruction problem has been studied extensively using a 
multiplicity of techniques, which include 
combinatorial algorithms~\cite{dyer1989solution}, 
belief propagation~\cite{decelle_asymptotic_2011} 
and variants of it~\cite{MNS14}, 
spectral-based techniques~\cite{coja2010graph,mcsherry2001spectral},
Metropolis approaches~\cite{jerrum_metropolis_1998},
and semidefinite programming~\cite{abbe2014exact}, among others.

\subsection{Roadmap}
The rest of this paper is organized as follows. In \cref{sec:preli}, we
formally define the \avg dynamics and briefly recall how it is connected with
the transition matrix of a random walk on the underlying graph. We also define
the notion of \textit{community-sensitive algorithm} and the class of
\textit{clustered volume-regular graphs}. 
In \cref{sec:lump} we show the relation between lumpability of Markov chains and
volume-regular graphs. In \cref{sec:mainresult} we state the
main result of the paper (see \cref{thm:main}) on the analysis of the \avg for
clustered volume-regular graphs: We give the two main technical lemmas and show
how the main theorem derives from them. In  
\cref{sec:bipartite}, we show how slightly modified versions of the \avg{} 
dynamics can be used to identify the hidden partition of other non-clustered 
volume-regular graphs, e.g., bipartite graphs.
In \cref{sec:concl} we briefly show how our approach can be extended to 
slightly more general graph classes than the ones considered in this 
paper. We finally highlight some open problems and directions for further 
research on the topic. 

\section{Preliminaries}\label{sec:preli}
\paragraph{Notation.}
Consider an undirected edge-weighted graph $G=(V,E,w)$ with nonnegative weights.
For each node $u \in V$, we denote by $\dv(u)$ the \emph{volume}, 
or \emph{weighted degree}, of node $u$, namely
\(
    \dv(u) = \sum_{v:(u,v)\in E} \w(u,v).
\)
Similarly, we denote the volume of a set of nodes $T \subseteq V$ as 
$\vol(T) := \sum_{u \in T} \dv(u)$.
$D$ denotes the diagonal matrix, such that $D_{uu} = \dv(u)$ for each $u \in V$.
Without loss of generality we assume $\min_u \dv(u) = 1$, 
since the behavior of the \averaging{} dynamics (and the corresponding 
analysis) is not affected by a normalization of the weights.
We refer to the maximum volume of a node as $\Delta := \max_u \dv(u)$.

In the remainder, $W$ denotes the \emph{weighted adjacency matrix} of 
$G$, while $P = D^{-1}W$ is the \emph{transition matrix} of a random 
walk on $G$, in which a transition from node $u$ to node $v$ occurs 
with probability proportional to $\w(u,v)$.
We call $\lambda_1,\ldots,\lambda_n$ the eigenvalues of $P$, 
in non-increasing order,
and $\bm{v}_1,\ldots,\bm{v}_n$ a family of eigenvectors of $P$, such 
that $P\bm{v}_i = \lambda_i \bm{v}_i$.
We let $N=\Dsqinv W\Dsqinv =\Dsq P\Dsqinv$ denote the \emph{normalized weighted
adjacency matrix} of $G$.
Note that $N$ is real and symmetric (thus, the eigenvectors of $N$ are orthogonal)
and that its spectrum is the same as that of $P$.
We denote by $\bm{w}_1,\ldots,\bm{w}_n$ a family of eigenvectors of 
$N$, such that $N \bm{w}_i = \lambda_i \bm{w}_i$.
It is important to note that $\bm{w}_i$ is an eigenvector of $N$ if and only if $\Dsqinv{}\bm{w}_i$
is an eigenvector of $P$.

We use the Bachmann--Landau asymptotic notation 
(i.e., $\omega, \Omega, \Theta, \bigO, o$) 
to describe the limiting behavior of functions depending on $n$. 
In this sense, our results only hold for large $n$.
We say that an event $\mathcal{E}_n$ holds \emph{with high probability} 
(\emph{w.h.p.}, in short) if $\Pr{\mathcal{E}_n} = 1 - \bigO(n^{-\gamma})$,
for any positive constant $\gamma$.

\subsection{Averaging dynamics}
The simple algorithm we consider in this paper, named \averaging{} dynamics 
(\cref{alg:avg_dyn}) after \cite{becchetti2017find} in which the algorithm was 
first proposed, can be seen as an application of the power method, 
augmented with a Rademacher initialization and a suitable labeling 
scheme. In this form, it is best described as a distributed process, 
executed by the nodes of an underlying edge-weighted graph.
The \averaging{} dynamics can be used as
a building-block to achieve ``community detection'' in some classes
of ``regular'' and ``almost regular'' graphs.  Herein, we extend its 
use and analysis to broader graph classes and, in one case, to a different 
problem.
\begin{center}
\begin{algorithm}[!ht]
\caption{\averaging{} dynamics}
\begin{minipage}{0.95\textwidth}
  \begin{description}
    \item[Rademacher initialization:] At round $t = 0$, every node $v \in V$ 
    independently samples its value $\bx^{(0)}(v)$ from $\{- 1, +1\}$ uniformly at random.
    \item[Update rule:] At each subsequent round $t \geq 1$, every node $v \in V$:
    \begin{enumerate} 
      \item \emph{Averaging}: updates its value $\bx^{(t)}(v)$ to the 
      weighted average of the values of its neighbors at the end of the previous round.
      \item \emph{Labeling}: if $\bx^{(t)}(v) \geqslant \bx^{(t-1)}(v)$ 
      then $v$ sets $\texttt{label}^{(t)}(v) = 1$;
      otherwise $v$ sets $\texttt{label}^{(t)}(v) = 0$.
    \end{enumerate}
  \end{description}
\end{minipage}
\label{alg:avg_dyn}
\end{algorithm}
\end{center}

\paragraph{Spectral decomposition of the transition matrix.}
Let $\bx^{(t)}$ denote the {\em state vector} at time $t$, i.e., the 
vector whose $u$-th entry is the value held by node $u$ at time $t$. We 
let $\bx^{(0)} = \bx$ denote the initial state vector.
Globally, the \emph{averaging} update rule of \cref{alg:avg_dyn} 
corresponds to one iteration of the power method, in this case an 
application of the transition matrix $P$ to the current state vector, i.e.,
$\bx^{(t)} = P\bx^{(t-1)}$. We can write
\[
    \bm{x}^{(t)} 
    = P^t \bm{x}
    = \Dsqinv N^t \Dsq \bm{x}
    \stackrel{(a)}{=} \Dsqinv \sum_{i=1}^n \lambda_i^t \bm{w}_i \bm{w}_i^\T \sum_{i=1}^n \beta_i \bm{w}_i
    \stackrel{(b)}{=} \sum_{i=1}^n \lambda_i^t \beta_i \Dsqinv \bm{w}_i,
\]
where in $(a)$ we spectrally decomposed the matrix $N^t$ and expressed 
the vector $\Dsq \bm{x}$ as a linear combination of the eigenvectors of $N$, 
i.e., $\Dsq \bm{x} = \sum_{i=1}^n \beta_i \bm{w}_i$,
with $\beta_i = \langle \Dsq \bm{x}, \bm{w}_i \rangle$;
in $(b)$ we used that the eigenvectors of $N$ are orthonormal, 
i.e., that $\bm{w}_i^\T \bm{w}_i = 1$ for every $i \in \{1,\ldots,n\}$ 
and that $\bm{w}_i^\T \bm{w}_j = 0$ for every $i,j \in \{1,\ldots,n\}$ 
and such that $i\neq j$.
By explicitly writing the $\beta_i$s and by noting that
$\bm{w}_i = \frac{\Dsq \bm{v}_i}{\norm{\Dsq \bm{v}_i}}$
we conclude that
\begin{equation}\label{eq:state_decomp}
  \bm{x}^{(t)} 
  = \sum_{i=1}^n \lambda_i^t \frac{\langle \Dsq \bm{x}, \Dsq \bm{v}_i \rangle}{\norm{\Dsq \bm{v}_i}} \Dsqinv \frac{\Dsq \bm{v}_i}{\norm{\Dsq \bm{v}_i}}
  = \sum_{i=1}^n \lambda_i^t \alpha_i \bm{v}_i,
\end{equation}
where $\alpha_i := 
\frac{\langle \Dsq \bm{x}, \Dsq \bm{v}_i\rangle}{\norm{\Dsq \bm{v}_i}^2} 
= \frac{\bm{x}^\T D \bm{v}_i}{\norm{\Dsq \bm{v}_i}^2}$
is the length of the projection of $\Dsq \bm{x}$ on $\Dsq \bm{v}_i$.

Note that $\lambda_1 = 1$ and $\bm{v}_1 = \bm{1}$,\footnote{Here and in 
the remainder, $\bm{1}$ denotes the vector whose entries are $1$.}
since $P$ is stochastic, and $\lambda_i \in(-1,1)$ for every $i>1$,
if $G$ is connected and non bipartite.
The long term behavior of the dynamics can be written as
\[
  \lim_{t \rightarrow \infty} \bm{x}^{(t)} 
  = \lim_{t \rightarrow \infty} \sum_{i=1}^n \lambda_i^t \alpha_i \bm{v}_i 
  = \alpha_1 \bm{1},
  \quad
  \text{with }
  \alpha_1 = \frac{\sum_{u \in V} \dv(u) \bm{x}(u)}{\sum_{u \in V} \dv(u)}
  = \sum_{u \in V} \frac{\dv(u)}{\vol(V)} \bm{x}(u),
\]
i.e., each node converges to the initial global weighted average of the network.

\subsection{Community-sensitive algorithms}
We give the following definition of \textit{community-sensitive algorithm},
that closely resembles that of locality-sensitive
hashing (see, e.g.,~\cite{rajaraman2011mining}).
\begin{definition}[Community-sensitive algorithm]\label{def:comm_sen_alg}
Let $\mathcal{A}$ be a randomized algorithm that takes in input a (possibly
weighted) graph $G = (V, E)$ with a hidden partition $\mathcal{V} = \{V_1,
\dots, V_k\}$ and assigns a Boolean value $\mathcal{A}(G)[v] \in \{0,1\}$ to
each node $v \in V$. We say $\mathcal{A}$ is an \csa{\varepsilon}{\delta}, for
some $\varepsilon, \delta > 0$, if the following two conditions hold:
\begin{enumerate}
\item For each set $V_i$ of the partition and for each pair of nodes $u,v \in
V_i$ in that set, the probability that the algorithm assigns the same Boolean 
value to $u$ and $v$ is at least $1-\varepsilon$:
\[
\forall i \in [k], \, \forall u,v \in V_i, \; 
\Prob{}{\mathcal{A}(G)[u] = \mathcal{A}(G)[v]} \geqslant 1 - \varepsilon.
\]
\item For each pair $V_i, V_j$ of distinct sets of the partition and for each
pair of nodes $u \in V_i$ and $v \in V_j$, the probability that the algorithm
assigns the same value to $u$ and $v$ is at most $\delta$:
\[
\forall i,j \in [k] \, \mbox{ with } i \neq j, \, \forall u \in V_i, \forall v
\in V_j,\; \Prob{}{\mathcal{A}(G)[u] = \mathcal{A}(G)[v]} \leqslant \delta.
\]
\end{enumerate}
\end{definition}

\noindent For example, for $(\varepsilon, \delta) = (1/n, 1/2)$, an algorithm
that simply assigns the same value to all nodes would satisfy the first
condition but not the second one, while an algorithm assigning $0$ or $1$ to
each node with probability $1/2$, independently of the other nodes, would
satisfy the second condition but not the first one.

Note that \cref{alg:avg_dyn} is a distributed algorithm that, at each round
$t$, assigns one out of two labels to each node of a graph. In the next section
(see \cref{thm:main}) we  prove that a time window $[T_1, T_2]$ exists, such
that for all rounds $t \in [T_1, T_2]$, the assignment of the \averaging{}
dynamics satisfies both conditions in \cref{def:comm_sen_alg}: The first
condition with $\varepsilon = \varepsilon(n) = \bigO(n^{-\frac{1}{2}})$, the
second with $\delta = \delta(n) = 1 - \Omega(1)$.

\paragraph{Community-sensitive labeling.} We here generalize the concept of 
\emph{community-sensitive labeling} (appeared in~\cite[Definition~3]{BCMNPRT18}),
given only for the case of two communities, to the case of multiple communities. 
If we execute $\ell = \Theta(\log n)$ independent runs of an
$(\varepsilon, \delta)$-community-sensitive algorithm $\mathcal{A}$, each node
is assigned a \textit{signature} of $\ell$ binary values, with pairwise Hamming
distances probabilistically reflecting community membership of the nodes. More
precisely, let $\calA$ be an \csa{\varepsilon}{\delta} and let $\calA_1, \dots,
\calA_\ell$ be $\ell = \Theta(\log n)$ independent runs of $\calA$. For each
node $u \in V$, let $\bm{s}(u) = (s_1(u), \dots, s_\ell(u))$ denote the
\textit{signature} of node $u$, where $s_i(u) = \calA_i(G)[u]$.  For each pair
nodes $u,v$, let $h(u,v) = \abs{\{i \in [\ell] \,:\, s_i(u) \neq s_i(v)\}}$ be
the Hamming distance between $\bm{s}(u)$ and $\bm{s}(v)$.

\begin{lemma}[Community-sensitive labeling]
	Let $\calA$ be an \csa{\varepsilon}{\delta} with 
	$\varepsilon = \mathcal{O}(\frac{1}{n^{\gamma}})$ 
	for any arbitrarily small positive constant $\gamma$, 
	and $\delta = 1 - \Omega(1)$. 
	Let $\ell = \Theta(\log n)$, 
	$\alpha = \Omega(\frac{1}{n^{\gamma - c}})$ with $c \in (0,\gamma]$, 
	and $\beta = b(1-\delta)$ for any constant $b \in (0,1)$ 
	and such that $0 \leq \alpha \leq \beta \leq 1$.
	Then, for each pair of nodes $u, v \in V$ it holds that: 
	\begin{enumerate}
		\item If $u$ and $v$ belong to the same community then 
		$h(u,v) < \alpha \ell$, w.h.p.
		\item If $u$ and $v$ belong to different communities then 
		$h(u,v) \geq \beta \ell$, w.h.p.
	\end{enumerate}
\end{lemma}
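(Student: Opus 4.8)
The plan is to fix a pair of nodes $u,v$, write the Hamming distance as a sum of independent indicators, and then apply standard multiplicative Chernoff bounds, treating the two claims separately. For each run $i\in[\ell]$ I set $X_i = \bone\{s_i(u)\neq s_i(v)\}$, so that $h(u,v)=\sum_{i=1}^{\ell}X_i$. Since $\calA_1,\dots,\calA_\ell$ are independent executions of $\calA$, the variables $X_1,\dots,X_\ell$ are mutually independent Bernoullis, and \cref{def:comm_sen_alg} controls their success probabilities: if $u,v$ lie in the same community then $\Pr{X_i=1}\le\varepsilon$, whereas if they lie in different communities then $\Pr{X_i=1}\ge 1-\delta$. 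Everything thus reduces to concentration of a sum of independent Bernoullis, with a very small mean in the first case and a mean that is a constant fraction of $\ell$ in the second.

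For the same-community case I would use $\Ex{h(u,v)}\le\varepsilon\ell$ and observe that the threshold $\alpha\ell$ lies far above this mean: since $\alpha=\Omega(n^{-(\gamma-c)})$ and $\varepsilon=\bigO(n^{-\gamma})$, the ratio $\alpha/\varepsilon=\Omega(n^{c})$ diverges, so $\alpha$ exceeds $\varepsilon$ by an unbounded factor for large $n$. An upper-tail Chernoff bound $\Pr{h(u,v)\ge a}\le(e\,\Ex{h(u,v)}/a)^{a}$ with $a=\alpha\ell$ gives $\Pr{h(u,v)\ge\alpha\ell}\le(e\varepsilon/\alpha)^{\alpha\ell}$; whenever $\alpha\ell\ge 1$ the base $e\varepsilon/\alpha=\bigO(n^{-c})$ is below $1$, so the whole expression is at most $e\varepsilon/\alpha=\bigO(n^{-c})=n^{-\Omega(1)}$. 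The one delicate regime is $\alpha\ell<1$, where this bound is vacuous; there, however, $\{h(u,v)\ge\alpha\ell\}=\{h(u,v)\ge 1\}$ because $h(u,v)$ is integer valued, and a union bound yields $\Pr{h(u,v)\ge 1}\le\varepsilon\ell=\bigO(n^{-\gamma}\log n)=n^{-\Omega(1)}$, using $\log n=n^{o(1)}$. Either way $h(u,v)<\alpha\ell$ w.h.p.

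For the different-community case I would instead use $\Ex{h(u,v)}\ge(1-\delta)\ell$. Because $\delta=1-\Omega(1)$, the factor $1-\delta$ is a positive constant and $\Ex{h(u,v)}=\Omega(\ell)=\Omega(\log n)$. Writing $\beta=b(1-\delta)$ with constant $b\in(0,1)$, we have $\beta\ell=b(1-\delta)\ell\le b\,\Ex{h(u,v)}$, so $\{h(u,v)<\beta\ell\}\subseteq\{h(u,v)<b\,\Ex{h(u,v)}\}$, a deviation below the mean by the fixed relative amount $1-b$. A lower-tail Chernoff bound then gives $\Pr{h(u,v)<\beta\ell}\le e^{-(1-b)^2\Ex{h(u,v)}/2}=e^{-\Omega(\log n)}=n^{-\Omega(1)}$, hence $h(u,v)\ge\beta\ell$ w.h.p. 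Since each per-pair failure probability is $n^{-\Omega(1)}$ and its exponent can be made as large as desired by tuning the constant in $\ell=\Theta(\log n)$, a union bound over the $\bigO(n^{2})$ pairs upgrades both guarantees to hold simultaneously for all pairs w.h.p.

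The main obstacle is the bookkeeping in the first claim: the textbook multiplicative Chernoff bound phrased relative to the mean (which would give $2^{-\alpha\ell}$) is useless exactly when $\alpha\ell=o(1)$, i.e.\ when $\gamma-c>0$, so one must either pass to the generic form above or argue directly that $h(u,v)=0$ w.h.p. Care is also needed to confirm that the residual $\log n$ factors inherited from $\varepsilon\ell$ still leave a strictly positive exponent in the final $n^{-\Omega(1)}$ bounds, and that the constraint $0\le\alpha\le\beta\le 1$ indeed keeps the two thresholds separated so that the classification is consistent.
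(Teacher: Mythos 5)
Your proof is correct for the lemma as stated, and for the different-community case it coincides with the paper's argument: both lower-bound $\Ex{h(u,v)}$ by $(1-\delta)\ell=\Omega(\log n)$ and apply a lower-tail Chernoff bound with relative deviation $1-b$, obtaining $\exp\left(-\tfrac{(1-b)^2}{2}(1-\delta)\ell\right)=n^{-\Omega(1)}$. For the same-community case you diverge: the paper simply applies Markov's inequality, $\Pr{h(u,v)\geq\alpha\ell}\leq\Ex{h(u,v)}/(\alpha\ell)\leq\varepsilon/\alpha=\bigO(n^{-c})$, which handles all regimes of $\alpha\ell$ uniformly with no case analysis; you instead use the generic multiplicative Chernoff bound $(e\mu/a)^{a}$ and must split on whether $\alpha\ell\geq 1$, falling back on the union bound $\Pr{h(u,v)\geq 1}\leq\varepsilon\ell$ when $\alpha\ell<1$. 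Your route is sound, and in the sub-unit regime it even yields the slightly sharper bound $\bigO(n^{-\gamma}\log n)$ versus the paper's $\bigO(n^{-c})$; but the extra machinery buys essentially nothing, since at a threshold this close to zero the upper tail of a sum with mean at most $\varepsilon\ell$ is genuinely only polynomially small, and Markov is already the right tool.

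One caveat: your closing claim that a union bound over the $\bigO(n^{2})$ pairs upgrades \emph{both} guarantees to hold simultaneously for all pairs is not justified for the first claim. Tuning the constant in $\ell=\Theta(\log n)$ only improves the exponent in the second (Chernoff) bound; the first failure probability is $\bigO(n^{-c})$ (or $\bigO(\varepsilon\ell)$) with $c\leq\gamma$ allowed to be arbitrarily small, and it does not shrink as $\ell$ grows, so it cannot absorb an $n^{2}$ factor in general. This does not affect the lemma itself, which is a per-pair statement and is proved per-pair in the paper, but the simultaneous version you assert at the end would require stronger hypotheses on $\gamma$.
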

\begin{proof}
	From the definition of \csa{\varepsilon}{\delta} we have that, if $u$ and $v$ belong to the same community, then 
	$\Ex{h(u,v)} = \sum_{i=1}^\ell \Prob{}{s_i(u) \neq s_i(v)}\leqslant
	\varepsilon \ell$ . Similarly, if they belong to different communities, then 
	$\Ex{h(u,v)} = \sum_{i=1}^\ell \Prob{}{s_i(u) \neq s_i(v)} \geqslant (1 - \delta) \ell$. 
	If $u$ and $v$ belong to the same community, we compute $\Prob{}{h(u,v) > \alpha \ell}$ 
	and by Markov inequality we get that
	\begin{equation}\label{eq:LB_hamming}
	\Prob{}{h(u,v) \geq \alpha \ell} \leqslant
	\frac{\Ex{h(u,v)}}{\alpha \ell} \leqslant \frac{\varepsilon}{\alpha} 
	= \mathcal{O}\left(\frac{1}{n^{c}}\right),
	\end{equation}
	where in the last inequality we use the hypothesis $\varepsilon = \mathcal{O}(\frac{1}{n^{\gamma}})$
	and $\alpha = \Omega(\frac{1}{n^{\gamma-c}})$. 
	On the other hand, if $u$ and $v$ belong to different communities, 
	we apply \cref{thm_chernoff_ext} to $h(u,v)$ by using the lower bound on the expected value of $h(u,v)$ and the hypothesis $\ell = \Theta(\log n)$. Thus,
	\begin{equation}\label{eq:UB_hamming}
	\Prob{}{h(u,v) < b(1 - \delta) \ell} \leqslant
	\exp\left( - \frac{(1-b)^2}{2}(1 - \delta)\ell\right) = \mathcal{O}\left(\frac{1}{n^d}\right),
	\end{equation}
	where $d$ is a positive constant. 
	The thesis follows by combing \cref{eq:LB_hamming,eq:UB_hamming}.
\end{proof}

\subsection{Volume-regular graphs}
Recall that, for an undirected edge-weighted graph $G = (V, E, w)$, 
we denote by $\dv(u)$ the volume a node $u \in V$,
i.e., $\dv(u) = \sum_{v:(u,v)\in E} \w(u,v)$.
Note that the transition matrix $P$ of a random walk on $G$ is such that 
$P_{uv} = w \left(u,v\right) / \delta(u)$.
Given a partition ${\mathcal V} = \{V_1,\ldots , V_k\}$ of the set of nodes
$V$, for a node $u \in V$ and a partition index $i \in [k]$, 
$\dv_i(u)$ denotes the overall weight of edges connecting $u$ to nodes in $V_i$,
\(
  \dv_i(u) = \sum_{v \in V_i \,:\, {u, v} \in E}\w\left({u, v}\right).
\)
Hence, $\dv(u) = \sum_{i=1}^k \dv_i(u)$.

\begin{definition}[Volume-regular graph]\label{def:vol_reg}
Let $G = (V, E, w)$ be an undirected edge-weighted graph with $|V| = n$ nodes
and let ${\mathcal V} = \{V_1,\ldots, V_k\}$ be a $k$-partition of the nodes,
for some $k \in [n]$. We say that $G$ is \emph{volume regular} with respect to
$\mathcal{V}$ if, for every pair of partition indexes $i, j \in [k]$ and for
every pair of nodes $u, v \in V_i$, 
\(
\frac{\dv_j(u)}{\dv(u)} = \frac{\dv_j(v)}{\dv(v)}.
\)
We say that $G$ is $k$-volume regular if there exists a $k$-partition 
$\mathcal{V}$ of the nodes such that $G$ is volume regular with respect to 
$\mathcal{V}$.
\end{definition}
In other words, $G$ is volume regular if there exists a partition of the nodes 
such that the fraction of a node's volume toward a set of the partition is 
constant across nodes of the same set. Note that all graphs with $n$ nodes are
trivially $1$- and $n$-volume regular.

Let $G = (V, E, w)$ be a $k$-volume regular graph and let $P$ be the transition 
matrix of a random walk on $G$. In the next lemma we prove that
the span of $k$ linearly independent eigenvectors of $P$ equals
the span of the indicator vectors of the $k$ communities of $G$.
The proof makes use of the correspondence between random walks on volume regular
graphs and \textit{ordinary lumpable} Markov chains~\cite{kemeny1960finite};
in particular the result follows from \cref{le:lum_reg} and \cref{le:lum_step}, 
that we prove in \cref{sec:lump}.
\begin{lemma}\label{lem:span_stepwise}
Let $P$ be the transition matrix of a random walk on a $k$-volume regular graph
$G = (V,E,w)$ with $k$-partition $\mathcal{V} = \{V_1, \dots, V_k\}$.  
There exists a family $\{\bm{v}_1, \ldots, \bm{v}_k\}$ of linearly independent
eigenvectors of $P$ such that 
\(
  Span\left( \{\bm{v}_1, \dots, \bm{v_k} \}\right) 
  = Span\left( \{\bone_{V_1}, \dots, \bone_{V_k} \} \right),
\)
with $\bone_{V_i}$ the indicator vector of the
$i$-th set of the partition, for $i \in [k]$.
\end{lemma}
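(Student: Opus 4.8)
The plan is to exploit the dictionary between volume regularity and lumpability: volume regularity of $G$ is exactly the statement that the random walk $P$ is (ordinarily) lumpable with respect to $\mathcal{V}$. Concretely, for a node $u$ and a block $V_j$ the one-step probability of landing in $V_j$ is $\sum_{v \in V_j} P_{uv} = \dv_j(u)/\dv(u)$, and \cref{def:vol_reg} says precisely that this quantity is constant over all $u$ lying in a common block $V_i$. I would therefore first record the resulting $k \times k$ lumped matrix $\hat P$, defined by $\hat P_{ij} := \dv_j(u)/\dv(u)$ for any $u \in V_i$; this is well defined exactly because $G$ is volume regular, and it is the content I would extract from \cref{le:lum_reg}.

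The second step is to show that the space of stepwise vectors $U := \mathrm{Span}(\{\bone_{V_1}, \dots, \bone_{V_k}\})$ is $P$-invariant. Applying $P$ to an indicator gives $(P\bone_{V_j})(u) = \sum_{v \in V_j} P_{uv} = \dv_j(u)/\dv(u) = \hat P_{ij}$ for every $u \in V_i$, so $P\bone_{V_j} = \sum_{i=1}^k \hat P_{ij}\bone_{V_i} \in U$. Hence $PU \subseteq U$, and in the basis $(\bone_{V_1},\dots,\bone_{V_k})$ the restriction $P|_U$ is represented by the matrix $\hat P$. Consequently, any eigenvector $\bm{c}$ of $\hat P$ lifts, via the coordinate map $\bm{c} \mapsto \sum_i c_i \bone_{V_i}$, to an eigenvector of $P$ lying in $U$ (a direct check shows $P\sum_i c_i\bone_{V_i} = \lambda \sum_i c_i\bone_{V_i}$ whenever $\hat P \bm{c} = \lambda \bm{c}$). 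It therefore remains only to produce $k$ linearly independent eigenvectors of $\hat P$, i.e.\ to argue that $P|_U$ is diagonalizable.

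This last point is where the real work sits, and I would handle it using the symmetrization already set up in the preliminaries. Since $N = \Dsqinv W \Dsqinv = \Dsq P \Dsqinv$ is real symmetric, it is diagonalizable, and $P$ is similar to $N$, so $P$ itself is diagonalizable; equivalently, the minimal polynomial of $P$ is a product of distinct linear factors. The minimal polynomial of the restriction $P|_U$ divides that of $P$, hence it too has only simple roots, so $P|_U$ is diagonalizable and $U$ admits a basis $\{\bm{v}_1,\dots,\bm{v}_k\}$ of eigenvectors of $P$; this gives $\mathrm{Span}(\{\bm{v}_i\}) = U = \mathrm{Span}(\{\bone_{V_i}\})$, as required. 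As an alternative (and presumably the route of \cref{le:lum_step}), one can verify diagonalizability of $\hat P$ intrinsically: $\hat P$ is reversible with respect to $\hat\pi(i) \propto \vol(V_i)$, since $\hat P_{ij}\vol(V_i) = \sum_{u \in V_i}\dv_j(u)$ equals the total edge weight between $V_i$ and $V_j$, which is symmetric in $i,j$; thus $\hat P_{ij}\vol(V_i) = \hat P_{ji}\vol(V_j)$, exhibiting $\hat P$ as similar to a symmetric matrix. The main obstacle is exactly this diagonalizability step: invariance of $U$ by itself only yields a $P$-invariant subspace, and upgrading it to an honest basis of eigenvectors of $P$ is precisely what requires the global diagonalizability coming from the symmetric normalization $N$ (or the reversibility of the lumped chain).
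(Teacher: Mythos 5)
Your proof is correct and follows essentially the same route as the paper: the paper also derives \cref{lem:span_stepwise} by combining \cref{le:lum_reg} (volume regularity corresponds to reversibility plus ordinary lumpability) with \cref{le:lum_step}, whose forward direction lifts eigenvectors of the lumped matrix $\widehat{P}$ to stepwise eigenvectors of $P$ --- exactly what you do via the $P$-invariance of $U=\mathrm{Span}(\{\bone_{V_1},\dots,\bone_{V_k}\})$. The one substantive difference is that you explicitly justify the step the paper leaves implicit: the paper's proof of \cref{le:lum_step} simply takes ``the $k$ eigenvectors of $\widehat{P}$'' and asserts their linear independence, which presupposes that $\widehat{P}$ is diagonalizable, a property that can fail for a general lumpable chain (any non-diagonalizable stochastic matrix is trivially lumpable with respect to the partition into singletons). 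Your two arguments --- that the minimal polynomial of $P|_U$ divides the squarefree minimal polynomial of $P$, which is similar to the symmetric matrix $N$, or alternatively that $\widehat{P}$ is reversible with respect to $\widehat{\pi}(i)\propto\vol(V_i)$ and hence similar to a symmetric matrix --- both close this gap correctly; the reversibility of the lumped chain is precisely the extra structure that the restriction to random walks provides and that the paper's proof uses only tacitly. So your write-up is, if anything, more careful than the paper's at the one point where the argument is genuinely delicate.
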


In the rest of the paper we call ``stepwise'' the eigenvectors of $P$ that
can be written as linear combinations of the indicator vectors of the
communities. In the next definition, we formalize the fact that a
$k$-volume regular graph is \textit{clustered} if the $k$ linearly independent
stepwise eigenvectors of $P$, whose existence is guaranteed by the above lemma,
are associated to the $k$ largest eigenvalues of $P$.

\begin{definition}[Clustered volume regular graph]\label{def:clust_vol_reg}
Let $G = (V,E,w)$ be a $k$-volume regular graph and let $P$ be the transition
matrix of a random walk on $G$.  We say that $G$ is a \emph{clustered
$k$-volume regular graph} if the $k$ stepwise eigenvectors of $P$ are
associated to the first $k$ largest eigenvalues of $P$.
\end{definition}

\section{Volume-regular graphs and lumpable Markov chains}\label{sec:lump}

The class of volume-regular graphs is deeply connected with the definition of
\textit{lumpability}~\cite{kemeny1960finite} of Markov chains. We here first
recall the definition of lumpable Markov chain and then show that a graph $G$
is volume-regular if and only if the associated weighted random walk is a
lumpable Markov chain.

\begin{definition}[Ordinary lumpability of Markov Chains]\label{def:lumpable}
Let $\{X_t\}_t$ be a finite Markov chain with state space $V$ and transition
matrix $P = (P_{uv})_{u,v \in V}$ and let $\mathcal{V} = \{V_1, \ldots, V_k\}$
be a partition of the state space.  Markov chain $\{X_t\}_t$ is \emph{ordinary
lumpable} with respect to $\mathcal{V}$ if, for every pair of partition indexes
$i, j \in [k]$ and for every pair of nodes in the same set of the partition $u,
v \in V_i$, it holds that
\begin{equation}\label{eq:lump_cond}
	\sum_{w \in V_i} P_{uw} = \sum_{w \in V_i} P_{vw},
	\quad
	\forall~u,v \in V_j.
\end{equation}	
We define the \emph{lumped matrix} $\widehat{P}$ of the Markov Chain 
as the matrix such that $\widehat{P}_{ij} = \sum_{w \in V_i} P_{uw}$,
for any $u \in V_j$.
\end{definition}

We first prove that random walks on Volume-regular graphs define 
exactly the subset of reversible and ordinary lumpable Markov chains.
\begin{lemma}\label{le:lum_reg}
A reversible Markov chain $\{X_t\}_t$ is ordinary lumpable if and only if 
it is a random walk on a volume-regular graph.
\label{thm:comreg-lumpable}
\end{lemma}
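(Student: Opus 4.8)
The plan is to prove both directions of the equivalence by directly matching the defining conditions of the two notions, exploiting the fact that the Markov chain in question is a random walk, so its transition probabilities have the explicit form $P_{uv} = \w(u,v)/\dv(u)$.

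\medskip\noindent
\textbf{From volume-regular to ordinary lumpable.} Suppose $G$ is volume regular with respect to $\mathcal{V} = \{V_1,\dots,V_k\}$, and let $P$ be the transition matrix of the random walk on $G$. First I would observe that, for any node $u$ and any partition index $i$, the total probability mass that the walk sends from $u$ into $V_i$ in one step is exactly
\[
  \sum_{w \in V_i} P_{uw}
  = \sum_{w \in V_i} \frac{\w(u,w)}{\dv(u)}
  = \frac{\dv_i(u)}{\dv(u)}.
\]
By \cref{def:vol_reg}, for every pair of nodes $u,v$ lying in the same set $V_j$ and every target index $i$ we have $\dv_i(u)/\dv(u) = \dv_i(v)/\dv(v)$; hence $\sum_{w \in V_i} P_{uw} = \sum_{w \in V_i} P_{vw}$ for all $u,v \in V_j$, which is precisely the lumpability condition \eqref{eq:lump_cond}. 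So volume regularity immediately gives ordinary lumpability, and this direction holds for \emph{any} random walk, not just reversible ones.

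\medskip\noindent
\textbf{From ordinary lumpable to volume-regular.} Conversely, suppose the chain is ordinary lumpable with respect to $\mathcal{V}$. The goal is to recover the graph interpretation. Here the hypothesis of \emph{reversibility} becomes essential: a reversible finite Markov chain with stationary distribution $\pi$ satisfies the detailed balance equations $\pi_u P_{uv} = \pi_v P_{vu}$, and the standard way to realize such a chain as a random walk on an edge-weighted graph is to set $\w(u,v) := \pi_u P_{uv}$, which is symmetric by detailed balance and therefore defines a legitimate undirected weighted graph. With this choice one checks $\dv(u) = \sum_v \w(u,v) = \pi_u \sum_v P_{uv} = \pi_u$ and $P_{uv} = \w(u,v)/\dv(u)$, so $P$ really is the transition matrix of the random walk on this $G$. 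It then remains to translate the lumpability condition back into volume regularity, again via the identity $\sum_{w \in V_i} P_{uw} = \dv_i(u)/\dv(u)$: the equality $\sum_{w\in V_i}P_{uw}=\sum_{w\in V_i}P_{vw}$ for $u,v$ in the same block yields $\dv_i(u)/\dv(u)=\dv_i(v)/\dv(v)$, i.e.\ \cref{def:vol_reg}.

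\medskip\noindent
The routine direction is the forward one, where the computation $\sum_{w\in V_i}P_{uw} = \dv_i(u)/\dv(u)$ does all the work; I expect the main obstacle to be the reverse direction, specifically the step that reconstructs an undirected weighted graph from an abstract reversible chain. The subtlety is that an arbitrary lumpable chain need not \emph{a priori} be a random walk on a \emph{symmetric} weight matrix, and it is exactly reversibility (detailed balance with respect to the stationary distribution) that guarantees the weights $\w(u,v)=\pi_u P_{uv}$ are symmetric and nonnegative, so that $G$ is a genuine undirected edge-weighted graph whose induced walk reproduces $P$. I would make sure to address existence and uniqueness of the stationary distribution (e.g.\ by restricting to the connected/irreducible case, or noting that reversibility presupposes a reversing measure) so that $\pi$ and hence the weights are well defined.
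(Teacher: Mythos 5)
Your proof is correct and follows essentially the same route as the paper: the forward implication via the identity $\sum_{w\in V_i}P_{uw}=\dv_i(u)/\dv(u)$, and the converse via the construction $\w(u,v)=\pi(u)P_{uv}$, whose symmetry is exactly detailed balance, with $\dv(u)=\pi(u)$. Your added remarks (that one direction needs no reversibility, and that $\pi$ must be well defined) are sound refinements of the same argument rather than a different approach.
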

\begin{proof}
Assume first that $\{X_t\}_t$ is ordinary lumpable and let $P$ be the corresponding transition matrix. 
Consider the weighted graph $G = (V, E, w)$ obtained from $P$ as follows: 
$V$ corresponds to the set of states in $P$, while $\w(u,v) = \pi(u)P_{uv}$,
for every $u, v\in V$, with $\pi$ the stationary distribution of $P$.
Note that $G$ is an undirected graph, i.e., 
\(
	\w(u,v) = \pi(u)P_{uv} 
	\stackrel{(a)}{=} \pi(v)P_{vu} = \w(v,u),
\)
where $(a)$ holds because $P$ is reversible.
Moreover
\[
	\dv(u) = \sum_{z \in V} \w(u,v) = \sum_{z \in V} \pi(u)P_{uv} 
	= \pi(u) \sum_{z \in V} P_{uv} \stackrel{(a)}{=} \pi(u),
\]
where $(a)$ holds because $P$ is stochastic.
Thus $G$ meets \cref{def:vol_reg} because, for any $u,v \in V_i$,
\[
	\frac{\dv_j(u)}{\dv(u)} = \frac{1}{\pi(u)} \sum_{z \in V_j} \w(u,z)
	= \sum_{z \in V_j} P_{uz}
	= \sum_{z \in V_j} P_{vz}
	= \frac{1}{\pi(v)} \sum_{z \in V_j} \w(v,z) = \frac{\dv_j(v)}{\dv(v)}.
\]

Next, assume $G$ is $k$-volume-regular with respect to the partition 
${\mathcal V} = \{V_1,\ldots , V_k\}$. Let $P$ be the transition 
matrix of the corresponding random walk. For every $i, j\in [k]$ 
and for every $u, v\in V_i$ we have:
\begin{align*}
	&\sum_{z\in V_j}P_{uz} 
	= \sum_{z\in V_j} \frac{\w(u, z)}{\dv(u)} 
	= \frac{\dv_j(u)}{\dv(u)} 
	\stackrel{(a)}{=} \frac{\dv_j(v)}{\dv(v)} 
	= \sum_{z\in V_j} \frac{\w(v, z)}{\dv(v)} 
	= \sum_{z\in V_j}P_{vz},
\end{align*}
where $(a)$ follows from \cref{def:vol_reg}. 
Moreover note that $P$ is reversible with respect to distribution
$\pi$, where $\pi(u) = \frac{\dv(u)}{\vol(G)}$.
\end{proof}

Note that infinitely many $k$-volume-regular graphs have the same 
$k$-ordinary lumpable random walk chain.

We next show that a Markov chain is $k$-ordinary lumpable if and 
only if the corresponding transition matrix $P$ has $k$ stepwise, linearly independent eigenvectors.
\begin{lemma}\label{le:lum_step}
Let $P$ be the transition matrix of a Markov chain. 
Then $P$ has $k$ stepwise linearly independent eigenvectors
if and only if $P$ is ordinary lumpable.
\end{lemma}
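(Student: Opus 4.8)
The plan is to reduce the whole statement to a single linear-algebraic observation: a vector is \emph{stepwise} exactly when it lies in the subspace $S := \mathrm{Span}(\{\bone_{V_1},\dots,\bone_{V_k}\})$, and $P$ admits $k$ linearly independent stepwise eigenvectors exactly when $S$ is spanned by eigenvectors of $P$. I would first record the elementary fact that the indicator vectors $\bone_{V_1},\dots,\bone_{V_k}$ have pairwise disjoint supports, hence are linearly independent; thus $\dim S = k$, the $n\times k$ matrix $U := [\,\bone_{V_1}\mid\cdots\mid\bone_{V_k}\,]$ has full column rank, and any $k$ linearly independent stepwise vectors automatically form a basis of $S$.

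The technical heart is the equivalence between ordinary lumpability and $P$-invariance of $S$. For each block index $i$, the $u$-th entry of $P\bone_{V_i}$ equals $\sum_{w\in V_i}P_{uw}$, i.e.\ the one-step probability of moving from $u$ into $V_i$. Hence $P\bone_{V_i}$ is itself stepwise (constant on every block) precisely when $\sum_{w\in V_i}P_{uw}=\sum_{w\in V_i}P_{vw}$ whenever $u,v$ lie in a common block; ranging over all $i$, this is exactly condition \cref{eq:lump_cond}. Therefore $P(S)\subseteq S$ if and only if $P$ is ordinary lumpable, and in that case the stepwise images are governed by the lumped matrix $\widehat{P}$ of \cref{def:lumpable} through the identity $P U = U\widehat{P}^{\T}$, since $P\bone_{V_i}=\sum_j \widehat{P}_{ij}\bone_{V_j}$.

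Given this, the forward direction is immediate: if $\bm{v}_1,\dots,\bm{v}_k$ are linearly independent stepwise eigenvectors, they span $S$, and since $P\bm{v}_m=\lambda_m\bm{v}_m\in S$ for each $m$, the subspace $S$ is $P$-invariant, so $P$ is lumpable by the equivalence above. For the converse I would start from lumpability, which makes $S$ invariant and yields $P U=U\widehat{P}^{\T}$. Every eigenpair $(\lambda,\bm{c})$ of $\widehat{P}^{\T}$ then lifts to a stepwise eigenpair of $P$, because $P(U\bm{c})=U\widehat{P}^{\T}\bm{c}=\lambda\,U\bm{c}$ and $U\bm{c}\neq\bm{0}$ (as $U$ has full column rank), and conversely every stepwise eigenvector of $P$ arises in this way. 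Consequently $P$ has $k$ linearly independent stepwise eigenvectors if and only if $\widehat{P}$ admits a full eigenbasis.

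This last point is the main obstacle: invariance of $S$ only shows that the restriction of $P$ to $S$ is a $k\times k$ operator conjugate to $\widehat{P}^{\T}$, and an arbitrary stochastic $\widehat{P}$ need not be diagonalizable, so lumpability by itself does not force $k$ \emph{independent} eigenvectors. I would close this gap using the reversibility that is available whenever the lemma is applied, via \cref{le:lum_reg}, to the random walk on a volume-regular graph. If $P$ is reversible with respect to $\pi$, a short computation shows that the lumped chain is reversible with respect to the lumped distribution $\hat\pi(V_i)=\sum_{u\in V_i}\pi(u)$: summing $\pi(u)P_{uw}=\pi(w)P_{wu}$ over $u\in V_i$ and $w\in V_j$ gives $\hat\pi(V_i)\,[\text{block }i\!\to\!j]=\hat\pi(V_j)\,[\text{block }j\!\to\!i]$. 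A reversible chain is self-adjoint in the $\hat\pi$-weighted inner product and hence diagonalizable, which supplies the required eigenbasis of $\widehat{P}$ and therefore the $k$ independent stepwise eigenvectors of $P$. Equivalently, one may pass through the construction of \cref{le:lum_reg} to the symmetric normalized adjacency matrix $N$ and use that a symmetric operator restricted to an invariant subspace is again symmetric, hence orthogonally diagonalizable on $S$.
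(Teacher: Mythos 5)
Your proposal is correct and follows essentially the same route as the paper's proof: the ``lumpable $\Rightarrow$ eigenvectors'' direction lifts eigenvectors of the lumped matrix $\widehat{P}$ to stepwise eigenvectors of $P$, and the converse uses the fact that $k$ linearly independent stepwise vectors must span all of $\mathrm{Span}(\{\bone_{V_1},\dots,\bone_{V_k}\})$ to force the lumpability condition --- the paper phrases this as the vector $\bu_{xy}$ being orthogonal to a full basis of $\mathbb{R}^k$, you phrase it as $P$-invariance of the subspace $S$, and these are the same argument. The genuinely valuable part of your write-up is the obstacle you flag at the end. The paper's proof of the forward direction begins ``let $\lambda_i,\bv_i$ be the eigenvalues and eigenvectors of $\widehat{P}$'' and later uses that ``the eigenvectors $\bv_i$ of $\widehat{P}$ are linearly independent,'' i.e.\ it silently assumes $\widehat{P}$ is diagonalizable. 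As you observe, this can fail for a general Markov chain; indeed, taking the partition into singletons, every chain is ordinary lumpable and every vector is stepwise, so the lemma as literally stated would assert that every stochastic matrix is diagonalizable, which is false. Your repair --- invoking reversibility, which is available in the only setting where the lemma is used (random walks on volume-regular graphs, via \cref{le:lum_reg}), checking by the detailed-balance summation that the lumped chain is reversible with respect to $\hat\pi$ and hence self-adjoint in the $\hat\pi$-weighted inner product, or equivalently passing to the symmetric matrix $N$ restricted to the invariant subspace $\Dsq S$ --- is exactly what is needed, and your computation is correct. So: same overall approach, but your version closes a real gap; strictly speaking the lemma should either carry a reversibility (or diagonalizability of $\widehat{P}$) hypothesis or be read only in the reversible setting in which the paper applies it.
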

\begin{proof}
We divide the proof in two parts. 
First, we assume that $P$ is ordinary lumpable
and show that $P$ has $k$ stepwise linearly independent eigenvectors.
Second, we assume that $P$ has $k$ stepwise linearly independent eigenvectors 
and show that $P$ is ordinary lumpable.

\medskip\noindent 1.
Let $P$ be ordinary lumpable and $\widehat{P}$ its lumped matrix.
Let $\lambda_i, \bv_i$ be the eigenvalues and eigenvectors of $\widehat{P}$,
for each $i \in [k]$.
Let $\bw_i \in \mathbb{R}^n$ be a stepwise vector defined as
\[
\bw_i =
(\underbrace{\bv_i(1),\ldots,\bv_i(1)}_\text{$\abs{V_1}$ times},\,
\underbrace{\bv_i(2),\ldots,\bv_i(2)}_\text{$\abs{V_2}$ times},\,
\ldots,\,
\underbrace{\bv_i(k),\ldots,\bv_i(k)}_\text{$\abs{V_k}$ times})^\intercal,
\]
where $\bv_i(j)$ indicates the $j$-th component of $\bv_i$,
and then the $\abs{V_j}$ components relative to $V_j$ are all equal to $\bv_i(j)$.

Since the eigenvectors $\bv_i$ of $\widehat{P}$ are linearly independent,
the vectors $\bw_i$ are also linearly independent.
Moreover, it is easy to see that $P\bw_i = \lambda_i \bw_i$
by just verifying the equation for every $i \in [k]$.

\medskip\noindent 2.
Assume $P$ has $k$ stepwise linearly independent eigenvectors $\bw_i$, 
associated to $k$ eigenvalues $\lambda_i$, for each $i \in [k]$.
Let $\bv_i \in \mathbb{R}^k$ the vector that has as components
the $k$ constant values in the steps of $\bw_i$. 
Since the $\bw_i$ are linearly independent, the $\bv_i$ also are.

For every eigenvector $\bw_i$
and for every two states $x,y \in V_l$, for every $l\in [k]$,
we have that $\lambda_i\bw_i(x) = \lambda_i\bw_i(y)$
since $\bw_i$ is stepwise. 
Then, since $P\bw_i = \lambda_i\bw_i$, we have that
\[
\sum_{j=1}^{k}{\sum_{z \in V_j} P_{xz}} \bv_i(j) = (P\bw_i)(x)
= (P\bw_i)(y) = \sum_{j=1}^{k}{\sum_{z \in V_j} P_{yz}} \bv_i(j).
\]
Thus
\(
\sum_{j=1}^{k} \bv_i(j) {\sum_{z \in V_j} \left(P_{xz} - P_{yz}\right)} = 0
\)
and then it follows that
\[
\sum_{j=1}^{k} \bv_i(j) \, \bu_{xy}(j) =
\langle \bu_{xy}, \bv_i \rangle = 0,
\]
where $\bu_{xy}(j) := \sum_{z \in V_j} \left(P_{xz} - P_{yz}\right)$.
Since the $\bv_i$ are $k$ linearly independent vectors in a $k$-dimensional space,
$\bu_{xy}$ cannot be orthogonal to all of them 
and then it has to be the null vector, 
i.e., $\bu_{xy}(j) = 0$ for all $j \in [k]$.
This implies that $P$ is ordinary lumpable, 
i.e., $\sum_{z \in V_j} P_{xz} = \sum_{z \in V_j} P_{yz}$.
It is easy to verify that the eigenvalues and eigenvectors of $\widehat{P}$
are exactly $\lambda_i, \bv_i$, with $i \in [k]$.
\end{proof}

\section{Averaging dynamics on clustered volume regular graphs}\label{sec:mainresult}
Let $\nmin := \min_{i \in [k]} |V_i|$ and $\nmax := \max_{i \in [k]} |V_i|$ be
the maximum and minimum sizes of the communities of a volume-regular graph $G =
(V, E, w)$ with $n$ nodes and $k$-partition $\mathcal{V} = \{V_1, \ldots,
V_k\}$. Recall also that $\Delta$ is the maximum weighted degree of the nodes
of $G$ and $\lambda_1, \dots, \lambda_n$ are the eigenvalues of the
transition matrix of a random walk on $G$ (see Section~\ref{sec:preli}). In
this section we prove the following result. 

\begin{theorem}\label{thm:main}
Let $G = (V, E, w)$ be a connected clustered $k$-volume-regular graph 
with $n$ nodes and $k$-partition $\mathcal{V} = \{V_1,\ldots,V_k\}$, such that
$\Delta\leq\frac{\sqrt{\nmin}}{25}$
and $2 \Delta (\nmax/\nmin) < k \leq \sqrt{n}$. Assume further that 
$1-\lambda_2\leq \frac{\lambda_k \log(\lambda_k/\lambda_{k+1})}{7\log(2\Delta n)}$
and $\lambda_k \geq \frac{7\lambda_2-5}{2}$.
A non-empty time interval $[T_1, T_2]$ exists,
with $T_1 = \bigO\left(\frac{\log n}{\log(\lambda_k / \lambda_{k+1})}\right)$ 
and $T_2 = \Omega\left( \frac{\lambda_k}{1-\lambda_2} \right)$, 
such that for each $t \in [T_1, T_2]$,  the \averaging{} dynamics truncated at 
round $t$ is a $(\bigO(n^{-1/2}),\,1 - \Omega(1))$-community-sensitive algorithm.
\end{theorem}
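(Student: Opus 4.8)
The plan is to analyze the decomposition \eqref{eq:state_decomp} of the state vector $\bm{x}^{(t)} = \sum_i \lambda_i^t \alpha_i \bm{v}_i$ and show that, inside the window $[T_1,T_2]$, two nodes in the same community almost always receive the same label, while two nodes in different communities receive the same label only with probability bounded away from $1$. Since the graph is clustered $k$-volume-regular, by \cref{lem:span_stepwise} the first $k$ eigenvectors $\bm{v}_1,\dots,\bm{v}_k$ are stepwise (constant on each $V_i$), and $\lambda_1=1$, $\bm{v}_1=\bone$. The labeling compares $\bm{x}^{(t)}(v)$ with $\bm{x}^{(t-1)}(v)$, so the relevant quantity is the increment $\bm{y}^{(t)}(v) := \bm{x}^{(t)}(v) - \bm{x}^{(t-1)}(v) = \sum_{i\geq 2}\lambda_i^{t-1}(\lambda_i-1)\alpha_i \bm{v}_i(v)$, where the $i=1$ term drops out because $\lambda_1=1$. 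The label of $v$ is the sign of $\bm{y}^{(t)}(v)$. I would split the sum into the \emph{signal} part $S(v)=\sum_{i=2}^{k}\lambda_i^{t-1}(\lambda_i-1)\alpha_i\bm{v}_i(v)$, which is stepwise (hence identical for nodes in the same community), and the \emph{noise} part $R(v)=\sum_{i>k}\lambda_i^{t-1}(\lambda_i-1)\alpha_i\bm{v}_i(v)$.

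The heart of the argument is to control these two pieces via the two main technical lemmas the theorem promises (the ones stated as forthcoming in the excerpt). First I would establish, for a suitable orthonormal family spanning the stepwise eigenspace (the vectors mentioned in the contributions, with $\bm{v}_1$ the main eigenvector and $\bm{v}_2$ the Fiedler vector), that the projections $\alpha_i$ for $i\leq k$ are, with probability $1-\bigO(n^{-1/2})$ over the Rademacher initialization $\bm{x}$, simultaneously large in absolute value — of order at least $1/\mathrm{poly}(n)$ — so that the signal does not accidentally vanish. This is where the constraint $k\leq\sqrt{n}$ and the balance condition $2\Delta(\nmax/\nmin)<k$ enter: there are few enough stepwise modes that each receives a non-negligible random projection, and the volume-regularity makes the stepwise structure exact. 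Second, I would bound the noise $R(v)$: using orthonormality of the $\bm{w}_i$ and $\Ex{}{\langle \Dsq\bm{x},\bm{w}_i\rangle^2}=1$, the tail coefficients $\alpha_i$ have controlled variance, and the factor $\lambda_i^{t-1}$ with $|\lambda_i|\leq\lambda_{k+1}$ for $i>k$ gives geometric decay. The lower bound $t\geq T_1=\bigO(\log n/\log(\lambda_k/\lambda_{k+1}))$ is precisely what forces the ratio $\lambda_{k+1}^{t-1}/\lambda_k^{t-1}$ down to $1/\mathrm{poly}(n)$, so that $|R(v)|$ is dominated by the signal $|S(v)|$ for every node w.h.p.; this is what guarantees $\sgn(\bm{y}^{(t)}(v))$ equals the stepwise prediction and hence the intra-community agreement with $\varepsilon=\bigO(n^{-1/2})$.

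For the separation condition I would show that nodes in different communities have different stepwise signal signs with probability $\Omega(1)$. Since the $\alpha_i$ ($2\leq i\leq k$) have independent-ish random signs inherited from the Rademacher start, and distinct communities correspond to distinct rows of the stepwise eigenvector matrix, the vector of signal values $(S(v))_v$ separates any fixed pair $u\in V_i,\,v\in V_j$ with constant probability; this yields $\delta = 1-\Omega(1)$. The upper end $t\leq T_2=\Omega(\lambda_k/(1-\lambda_2))$ of the window is needed so that the slowly-decaying second mode (governed by $\lambda_2$, the spectral gap $1-\lambda_2$) has not yet collapsed the signal toward the single dominant stepwise direction — once $t$ is too large, $\lambda_2^{t}$ shrinks relative to $\lambda_k^t$ and the higher stepwise modes disappear, destroying separation. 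The conditions $1-\lambda_2\leq \lambda_k\log(\lambda_k/\lambda_{k+1})/(7\log(2\Delta n))$ and $\lambda_k\geq(7\lambda_2-5)/2$ are exactly the compatibility constraints making $T_1\leq T_2$, i.e. ensuring the window is non-empty.

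The main obstacle I expect is the simultaneous two-sided control of the signal: I need the stepwise projections to be neither too small (for intra-community robustness against noise, requiring all $\alpha_i$, $i\le k$, to be bounded below) nor dominated by a single mode (for inter-community separation, requiring the \emph{relative} magnitudes of the $\lambda_i^t\alpha_i$ across $2\leq i\leq k$ to stay comparable throughout the window). Reconciling these — a uniform lower bound on every signal coefficient together with a geometric-decay bound on the noise, all holding w.h.p.\ \emph{simultaneously over all $n$ nodes and all pairs} — is the delicate step, and it is where the precise interplay of $\Delta\leq\sqrt{\nmin}/25$, the eigenvalue ratio $\lambda_k/\lambda_{k+1}$, and the orthonormal stepwise basis must be exploited. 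I would prove the signal and noise estimates as the two technical lemmas, then combine them: the intersection of the "signal large" and "noise small" events holds w.h.p.\ by a union bound over the $\bigO(n^2)$ pairs, and on this event both community-sensitivity conditions follow directly.
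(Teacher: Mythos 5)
Your overall architecture matches the paper's: split the increment $\bm{x}^{(t)}(u)-\bm{x}^{(t+1)}(u)$ into a stepwise ``core'' part and a tail ``error'' part, prove anti-concentration for the core, geometric decay for the tail, and a Berry--Esseen--type constant-probability separation for distinct communities. However, there are two genuine gaps in how you propose to control the signal. First, your plan to show that \emph{all} stepwise projections $\alpha_2,\dots,\alpha_k$ are simultaneously bounded below w.h.p.\ cannot deliver $\varepsilon=\bigO(n^{-1/2})$: each coefficient is only bounded away from zero with probability $1-\Theta(n^{-1/2})$ (Littlewood--Offord is essentially tight here), so a union bound over up to $k=\sqrt{n}$ coefficients leaves failure probability $\Theta(1)$; and even granting all the bounds, a lower bound on each $|\alpha_i|$ does not lower-bound the node value $|\bm{y}(u)|=|\sum_i \gamma_i\bm{\chi}_i(u)|$, because the terms can cancel. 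The paper sidesteps both problems with a reordering trick: the generalized Fiedler vectors $\bm{\chi}_i$ of \cref{eq:fiedler_vectors} are defined relative to an \emph{arbitrary} ordering of the communities, and if one reorders so that $u\in V_1$, then $\bm{y}(u)=\gamma_1\bm{\chi}_1(u)$ depends on a \emph{single} coefficient, so one application of \cref{thm:littlewood-offord} gives $|\bm{y}(u)|\ge 1/(\Delta n)$ with probability $1-\bigO(n^{-1/2})$ (\cref{lem:length_projection}). Since $\bm{y}$ is stepwise, this single event already covers both nodes of an intra-community pair, and no union bound over nodes or pairs is needed (nor would one survive: a union bound over $n$ nodes of an $\bigO(n^{-1/2})$ failure event is vacuous). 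The community-sensitivity definition is per-pair, which is exactly what makes the per-pair probability sufficient.

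Second, you treat the window's upper end $T_2$ too loosely, and with the comparison reversed: since $\lambda_2\ge\lambda_k$, it is $\lambda_2^t$ that \emph{grows} relative to $\lambda_k^t$, and the danger at large $t$ is that the increment $\bm{c}^{(t)}-\bm{c}^{(t+1)}=\sum_{i=2}^k\lambda_i^t(1-\lambda_i)\alpha_i\bm{v}_i$ weights mode $i$ by $\lambda_i^t(1-\lambda_i)$ rather than by $\lambda_i^t$, so the slowly-decaying but weakly-damped $\lambda_2$ term can eventually flip the sign of the increment away from $\sgn(\bm{y}(u))$. Showing that this does not happen before $T_2=\Omega(\lambda_k/(1-\lambda_2))$ — by writing the increment as $\lambda_k^t(1-\lambda_2)\bm{y}+\sum_i c_i\alpha_i\bm{v}_i$ with $\lambda_k^t(\lambda_2-\lambda_k)\le c_i\le(\lambda_2^t-\lambda_k^t)(1-\lambda_2)$ and splitting $\bm{y}(u)$ into its positive and negative parts — is the content of \cref{lemma:lower_bound_diff_c} and is where the hypotheses $\lambda_k\ge(7\lambda_2-5)/2$ and the bound on $1-\lambda_2$ actually do their work; your sketch identifies these as ``compatibility constraints'' but does not supply the argument. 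A smaller point: the tail bound in the paper (\cref{lemma:norm_of_e}) is deterministic, via $\|\bm{x}\|^2=n$ and orthonormality, so no probabilistic control of the tail coefficients is required.
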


\begin{remark}[The extent of the time-window]\label{rmk1}
Notice that the time window cannot be too long: by Cheeger's inequality
$1-\lambda_2\ge\frac{h_G^2}{2} \geq 1 / (2\Delta^2 n^2)$,%
\footnote{This can be seen by observing that: $i)$ the minimum 
volume of a cut must be at least half the minimum degree of the graph, 
which we normalize to $1$, and $ii)$ in computing $h_G$, we restrict to 
subsets of volume at most $\vol(G)$, which is at most $\Delta n$.} 
thus $T_2 = \bigO(\Delta^2 n^2)$.
\end{remark}

\begin{remark}[The extent of non-regularity]\label{rmk2}
Notice that the condition $k > 2 \Delta (\nmax/\nmin)$ implies
\[
	\max_{i \in [k]}\{\vol(V_i)\} 
	\leq \Delta \nmax 
	< \frac{k}{2} \nmin
  \leq \frac{k}{2} \min_{i \in [k]} \{\vol(V_i)\}.
\]
In other words, the \averaging{} dynamics gives a good community-sensitive
labeling when the communities are not too unbalanced in terms of their volumes.
Moreover, the smaller the number of communities the more the volume-balance
requirement is tight.
\end{remark}

In the remainder of this section, we first introduce further notation and then
state the main technical lemmas
(\cref{lem:length_projection,lem:constant_prob_different_colors,lem:sign_difference}),
that will be used in the proof of \cref{thm:main}, which concludes this
section.

Let $G = (V,E,w)$ be a clustered $k$-volume regular graph and, without loss of
generality, let $V_1, \dots, V_k$ be an arbitrary ordering of its communities.
We introduce a family of stepwise vectors that generalize Fiedler
vector~\cite{fiedler1989laplacian}, namely
\begin{equation}\label{eq:fiedler_vectors}
\left\{
\bm{\chi}_i 
= {\sqrt{{ \frac{\hat{m}_i}{m_i}}}} \bm{1}_{V_i} 
- {\sqrt{{\frac{m_i}{\hat{m}_i}}}} \bm{1}_{\Vunion_i}
\; : \; i \in [k-1]
\right\},
\end{equation}
where $\bone_{V_i}$ is the indicator vector of the set $V_i$
and, for convenience sake, we denoted by $m_i$  the volume of the $i$-th community, 
$\Vunion_i$ the set of all nodes in communities $i+1, \dots, k$, and $\hat{m}_i$
the volume of $\Vunion_i$, i.e.,
$m_i := \sum_{u \in V_i} \dv(u)$, $\Vunion_i := \bigcup_{h=i+1}^{k}V_h$, 
and $\hat{m}_i := \sum_{h=i+1}^{k} m_h.$
Note that vectors $\bm{\chi}_i$s are ``stepwise'' with respect to the 
communities of $G$ (i.e., for every $i \in [k-1]$, 
$\bm{\chi}_i(u) = \bm{\chi}_i(v)$ whenever $u$ and $v$ belong to the same community). 

Recall from \cref{eq:state_decomp} that the initial state vector 
can be written as $\bm{x} = \sum_{i=1}^{n}\alpha_i\bm{v}_i$.
Let $\bm{z} := \sum_{i=1}^{k} \alpha_i \bm{v}_i$ and note that 
$\bm{z} = \alpha_1\bone{} + \sum_{i=1}^{k-1} \gamma_i \bm{\chi}_i$
by applying \cref{lem:span_stepwise} and because 
$Span\left( \{\bone{}, \bm{\chi}_1, \dots, \bm{\chi}_{k-1} \}\right) 
= Span\left( \{\bone_{V_1}, \dots, \bone_{V_k} \} \right)$.
Let us now define the vector $\bm{y} := \bm{z} - \alpha_1 \bone{}$ 
or, equivalently,
\begin{equation}\label{eq:vector_y}
	\bm{y} = \sum_{i=1}^{k-1} \gamma_i \bm{\chi}_i,
	\text{ where }
	\gamma_i = \frac{\bm{x}^\T D \bm{\chi}_i}{\Norm{D^{1/2}  \bm{\chi}_i}^2}.
\end{equation}
Note that  the coefficients $\gamma_i$s
are proportional to the length of the projection of the (inhomogeneously)
contracted state vector on the (inhomogeneously) contracted
$\Dsq{}\bm{\chi}_i$s; 
the previous expression is valid since the vectors in $\{\Dsq{}\bone{}\}
\cup \{\Dsq{}\bm{\chi}_i : i \in [k-1]\}$ are mutually orthogonal.%
\footnote{The mutual orthogonality of the vectors, including $\Dsq{}\bone{}$,
is also one of the reasons why other ``simpler'' families of stepwise vectors,
e.g., the indicator vectors of the communities, are not used instead.}

In~\cref{lem:length_projection} we show that every component of $\bm{y}$, i.e.,
the projection of the (inhomogeneously) contracted initial state vector
$\Dsq\bm{x}$ on the (inhomogeneously) contracted vectors $\Dsq{}\bm{\chi}_i$s,
is not too small, w.h.p.

\begin{lemma}[Length of the projection of the state vector]\label{lem:length_projection}
	Let $G = (V, E, w)$ be a connected clustered $k$-volume-regular graph with $n$
	nodes and $k$-partitions $\mathcal{V} = \{V_1,\ldots,V_k\}$. Under 
	the hypotheses of Theorem \ref{thm:main}, for every $u \in V$,
	\[
	\Pr\left(\Abs{\bm{y}(u)} \geq \frac{1}{\Delta n} \right) 
	\geq 1 - \bigO\left(\frac{1}{\sqrt{n}}\right).
	\]
\end{lemma}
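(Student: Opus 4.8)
The plan is to use the fact that, by construction, $\bm{y}$ is a fixed linear image of the Rademacher vector $\bm{x}$, so that for each node $u$ the scalar $\bm{y}(u)$ is a weighted sum of independent $\pm1$ variables; the statement is then an anti-concentration bound for such a sum. First I would put $\bm{y}(u)$ in closed form. The paper establishes that $\{\Dsq\bone\}\cup\{\Dsq\bm{\chi}_i:i\in[k-1]\}$ is an orthogonal family spanning the ``scaled stepwise'' space $\{\Dsq\bm{g}:\bm{g}\text{ constant on each }V_j\}$, and $\Dsq\bm{y}=\sum_{i=1}^{k-1}\gamma_i\Dsq\bm{\chi}_i$ with $\gamma_i=\langle\Dsq\bm{x},\Dsq\bm{\chi}_i\rangle/\Norm{\Dsq\bm{\chi}_i}^2$; hence $\Dsq\bm{y}$ is exactly the orthogonal projection of $\Dsq\bm{x}$ onto the orthogonal complement of $\Dsq\bone$ inside that space. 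Computing this projection in coordinates (the stepwise space is the orthogonal sum of the $\Dsq\bone_{V_j}$, with squared norms $m_j=\vol(V_j)$, minus the rank-one part along $\Dsq\bone$, of squared norm $M:=\vol(V)$) yields, for $u$ in the community $V_\ell$,
\[
  \bm{y}(u)=\frac{1}{m_\ell}\sum_{v\in V_\ell}\dv(v)\bm{x}(v)-\frac1M\sum_{v\in V}\dv(v)\bm{x}(v)=\sum_{v\in V}a_v\,\bm{x}(v),
\]
where $a_v=\dv(v)(1/m_\ell-1/M)$ for $v\in V_\ell$ and $a_v=-\dv(v)/M$ otherwise. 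I would sanity-check this closed form against the telescoping expression obtained by summing the stepwise values $\gamma_i\bm{\chi}_i(u)$ directly; it holds uniformly for all $\ell$, including the edge case $\ell=k$ (where there is no diagonal $\bm{\chi}_k$ term).

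The second step is a uniform lower bound on the coefficient magnitudes. For $v\notin V_\ell$ we have $\Abs{a_v}=\dv(v)/M\geq 1/M\geq 1/(\Delta n)$, using $M=\vol(V)\leq\Delta n$. For $v\in V_\ell$ the crucial point is that $m_\ell$ is a small fraction of $M$: by \cref{rmk2} the volume-balance hypothesis $k>2\Delta(\nmax/\nmin)$ forces $\max_j m_j<M/2$, hence $m_\ell<M/2$ and $\Abs{a_v}=\dv(v)(1/m_\ell-1/M)>\dv(v)/(2m_\ell)\geq 1/(2m_\ell)\geq 1/(2\Delta\nmax)\geq 1/(2\Delta n)$, since $\dv(v)\geq1$ and $m_\ell\leq\Delta\nmax\leq\Delta n$. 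Thus every coefficient satisfies $\Abs{a_v}\geq r:=1/(2\Delta n)$, and in particular is nonzero.

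The final step is anti-concentration via the Erd\H{o}s--Littlewood--Offord inequality: for a Rademacher combination $\sum_{v\in V}a_v\bm{x}(v)$ with $\Abs{a_v}\geq r$ for all $v\in V$, every interval of length $2r$ has probability at most $\binom{n}{\lfloor n/2\rfloor}2^{-n}=\bigO(1/\sqrt{n})$. Since the target interval $(-1/(\Delta n),\,1/(\Delta n))$ has length $2/(\Delta n)=2\cdot(2r)$, it is covered by $\bigO(1)$ such intervals, whence $\Pr{\Abs{\bm{y}(u)}<1/(\Delta n)}=\bigO(1/\sqrt{n})$, which is the claim.

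I expect the main obstacle to be twofold. First, recognizing that a central-limit / Berry--Esseen route is the \emph{wrong} tool: the Gaussian-approximation bound for $\bm{y}(u)$ is of order $\sum_v\Abs{a_v}^3/(\sum_v a_v^2)^{3/2}=\bigO(\Delta^2/\sqrt{\nmin})$, which the hypothesis $\Delta\leq\sqrt{\nmin}/25$ makes vacuous (it can exceed $1$). One must therefore use Littlewood--Offord, which needs only the single uniform lower bound $r$ and is insensitive to the large disparity between the ``diagonal'' coefficients ($\sim 1/m_\ell$) and the ``background'' ones ($\sim 1/M$). Second, and relatedly, the only delicate inequality is bounding the diagonal coefficients from below, which is exactly where the volume-balance hypothesis enters through $m_\ell<M/2$; every other estimate is a one-line consequence of $1\leq\dv(v)\leq\Delta$ and $\vol(V)\leq\Delta n$.
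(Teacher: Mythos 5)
Your proposal is correct and follows essentially the same route as the paper: after the paper's WLOG relabeling $u\in V_1$, its expression $\bm{y}(u)=\frac{1}{m}\bm{x}^\T D\bigl(\frac{\hat{m}_1}{m_1}\bm{1}_{V_1}-\bm{1}_{\Vunion_1}\bigr)$ is exactly your closed form $\sum_v a_v\bm{x}(v)$, and both arguments conclude via Littlewood--Offord (\cref{thm:littlewood-offord}) using $\dv(v)\ge 1$ and the volume-balance bound $m_\ell<\vol(V)/2$ from \cref{rmk2}. The only cosmetic difference is that you lower-bound all coefficients uniformly and cover the target interval by $\bigO(1)$ length-$2r$ intervals, whereas the paper splits into the cases $\hat{m}_1\ge m_1$ and $m_1>\hat{m}_1$ to keep $\abs{\bw(v)}\ge 1$ before invoking the same theorem.
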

\begin{proof}
	Without loss of generality, we assume $u\in V_1$, which possibly 
	just amounts to a relabeling of the nodes. With this assumption, we 
	have 
	\[
		\bm{y}(u) = \gamma_1\bm{\chi}_1(u) = 
		\gamma_1\sqrt{\frac{\hat{m}_1}{m_1}},
	\]
	where the second equality follows from the definitions of the 
	$\bm{\chi}_i$'s (\cref{eq:fiedler_vectors}) and the fact that $u\in V_1$. 
	Next, observe that we 
	have: 
	\[
		\Norm{D^{1/2}  \bm{\chi}_1}^2 
		= {\frac{\hat{m}_1}{m_1}}\sum_{v \in V_1} \dv(v) + { 
		\frac{m_1}{\hat{m}_1}}\sum_{v \in \Vunion_1} \dv(v) 
		= \hat{m}_1 + m_1 = m,
	\]
	where $m := \vol(V)$. We now bound 
	\[
		\Abs{\gamma_1} = \frac{\Abs{\bm{x}^\T D 
		\bm{\chi}_1}}{\Norm{D^{1/2}  \bm{\chi}_1}^2} = \frac{\Abs{\bm{x}^\T D 
		\bm{\chi}_1}}{m}.
	\]
	More precisely, we prove that it is at least $1/m$ with probability 
	$1 - \bigO\left(\frac{1}{\sqrt{n}}\right)$, where probability is computed over 
	the randomness of $\bm{x}$. 
	
	Assume for the moment that $\hat{m}_1\ge m_1$. 
	From the definition of $\bm{\chi}_1$ we have:
	\begin{align*}
		&\bm{x}^\T D\bm{\chi}_1 = \bm{x}^\T D\left(\sqrt{{ 
		\frac{\hat{m}_1}{m_1}}} \bm{1}_{V_1} 
		- \sqrt{\frac{m_1}{\hat{m}_1}} \bm{1}_{\Vunion_1}\right) = 
		\sqrt{\frac{m_1}{\hat{m}_1}}\bm{x}^\T 
		D\left(\frac{\hat{m}_1}{m_1}\bm{1}_{V_1} - \bm{1}_{\Vunion_1}\right).
	\end{align*}
	Now, set $\bw = D\left(\frac{\hat{m}_1}{m_1}\bm{1}_{V_1} - 
	\bm{1}_{\Vunion_1}\right)$ and note 
	that $|\bw(u)|\ge 1$ from the hypothesis that $\hat{m}_1\ge m_1$ 
	and since $\dv(v)\ge 1$, for every $v\in V$. We can thus apply 
	Theorem \ref{thm:littlewood-offord} to $\bw$ with $r = 0$, so that we can write:
	\[
		\Prob{}{\Abs{\bm{x}^\T D 
		\bm{\chi}_1}<\sqrt{\frac{m_1}{\hat{m}_1}}} 
		= \Prob{}{|\bm{x}^T\bw| < 1}\le 
		\bigO\left(\frac{1}{\sqrt{n}}\right),
	\] 
	where the equality follows since $\bm{x}^\T D \bm{\chi}_1 = 
	\sqrt{\frac{m_1}{\hat{m}_1}}\bm{x}^T\bw$. Hence, with probability 
	$1 - \bigO\left(\frac{1}{\sqrt{n}}\right)$ we have 
	$\Abs{\gamma_1}\ge\sqrt{\frac{m_1}{\hat{m}_1}}\cdot\frac{1}{m}$ and thus, 
	with the same probability:
	\[
		\Abs{\bm{y}(u)} = 
		\Abs{\gamma_1}\sqrt{\frac{\hat{m}_1}{m_1}}\ge\frac{1}{m}\ge\frac{1}{\Delta 
		n}.
	\]

	Assume now that $m_1 > \hat{m}_1$. This time we write:
	\begin{align*}
		&\bm{x}^\T D\bm{\chi}_1 = 
		\sqrt{\frac{\hat{m}_1}{m_1}}\bm{x}^\T 
		D\left(\bm{1}_{V_1} - \frac{m_1}{\hat{m}_1}\bm{1}_{\Vunion_1}\right) 
	\end{align*}
	and we set $\bw = D\left(\bm{1}_{V_1} - 
	\frac{m_1}{\hat{m}_1}\bm{1}_{\Vunion_1}\right)$. Note that, again, 
	$|\bw(v)|\ge 1$ for every $v\in V$. Proceeding as in the previous 
	case we obtain 
	$\Abs{\gamma_1}\ge\sqrt{\frac{\hat{m}_1}{m_1}}\cdot\frac{1}{m}$ with 
	probability $1 - \bigO\left(\frac{1}{\sqrt{n}}\right)$ and thus, 
	with the same probability:
	\[
		\Abs{\bm{y}(u)} = 
		\Abs{\gamma_1}\sqrt{\frac{\hat{m}_1}{m_1}}
		\ge \frac{\hat{m}_1}{m m_1}
		= \frac{m - m_1}{m m_1}
		\stackrel{(a)}{\ge} \frac{1}{m}
		\ge \frac{1}{\Delta n},
	\]
	where in $(a)$ we used that $m_i < \frac{m}{2}$ (see \cref{rmk2}).
	This concludes the proof.
\end{proof}

In \cref{lem:constant_prob_different_colors} we show that given any
``pair of steps'' of the vector $\bm{y}$ (defined in \cref{eq:vector_y}), 
the two steps have different signs, with constant probability.

\begin{lemma}[Different communities, different signs]\label{lem:constant_prob_different_colors}
Let $G=(V,E,\w)$ be a clustered $k$-volume regular graph with maximum weighted
degree $\Delta\leq\frac{\sqrt{\nmin}}{25}$ and with $k > 2 \Delta (\nmax / \nmin)$. For each
pair of nodes $u \in V_i$ and $v \in V_j$, with $i \ne j$, it holds that
\[
	\Pr{ \,\sgn(\bm{y}(u)) \neq \sgn(\bm{y}(v))\, } = \Omega(1).
\]
\end{lemma}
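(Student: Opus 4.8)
The plan is to first discard the nested $\bm\chi$-representation of $\bm y$ in favour of a transparent formula for its value on an \emph{arbitrary} community. Since the eigenvectors $\bm v_1,\dots,\bm v_k$ are mutually $D$-orthogonal (because the $\bm w_i=\Dsq\bm v_i/\norm{\Dsq\bm v_i}$ are orthonormal), the vector $\bm z=\sum_{i=1}^k\alpha_i\bm v_i$ is exactly the $D$-orthogonal projection of $\bm x$ onto the stepwise subspace, which by \cref{lem:span_stepwise} is the span of $\{\bone_{V_1},\dots,\bone_{V_k}\}$. The $D$-orthogonal projection onto a stepwise subspace assigns to each community its own $\dv$-weighted average, so $\bm z$ is constant on each $V_l$ with value $z_l:=\frac{1}{m_l}\sum_{v\in V_l}\dv(v)\bm x(v)$, while $\alpha_1=\frac{1}{\vol(V)}\sum_{v}\dv(v)\bm x(v)=:\bar x$. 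Hence $\bm y=\bm z-\alpha_1\bone$ is stepwise with $\bm y(u)=z_l-\bar x$ for $u\in V_l$. The decisive features of this form are that the $z_l$ are independent across communities (each depends only on the Rademacher signs inside $V_l$), each is symmetric with mean $0$, and $\bar x=\sum_l (m_l/\vol(V))\,z_l$ is a convex combination in which, by \cref{rmk2}, every weight is below $1/2$.

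Fixing $u\in V_i$, $v\in V_j$ with $i\ne j$, I would condition on the Rademacher signs outside $V_i\cup V_j$, which freezes $C:=\sum_{w\notin V_i\cup V_j}\dv(w)\bm x(w)$ and leaves $z_i,z_j$ independent. Writing $\bar x=(m_i z_i+m_j z_j+C)/\vol(V)$, an elementary computation shows that whenever $z_i\ge\theta$ and $z_j\le-\theta$ for a threshold $\theta>2\abs{C}/\vol(V)$, one gets $\bm y(u)=z_i(1-m_i/\vol(V))-(m_j z_j+C)/\vol(V)\ge\theta/2-\abs{C}/\vol(V)>0$ and, symmetrically, $\bm y(v)\le-\theta/2+\abs{C}/\vol(V)<0$; the only facts used are $m_i,m_j<\vol(V)/2$ (from \cref{rmk2}) and the assumed signs of $z_i,z_j$. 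Thus it suffices to produce a single $\theta$ that simultaneously exceeds $2\abs{C}/\vol(V)$ and is at most a constant fraction of $\min(\sqrt{\mathrm{Var}(z_i)},\sqrt{\mathrm{Var}(z_j)})$, and then lower-bound $\Pr{z_i\ge\theta}$ and $\Pr{z_j\le-\theta}$.

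The two scales can be controlled separately. A power-mean inequality gives $\sum_{w\in V_l}\dv(w)^2\ge m_l^2/\abs{V_l}$, whence $\mathrm{Var}(z_l)\ge 1/\abs{V_l}$ and the standard deviation of $z_l$ is at least $1/\sqrt{\nmax}$; on the other hand $\Ex{C^2}=\sum_{w\notin V_i\cup V_j}\dv(w)^2\le\Delta\,\vol(V)$, so by Chebyshev $\abs{C}/\vol(V)\le t\sqrt{\Delta/\vol(V)}\le t/\sqrt{2\nmax}$ with probability $1-1/t^2$, the last step using $\vol(V)\ge k\nmin>2\Delta\nmax$ (which follows from $k>2\Delta(\nmax/\nmin)$, see \cref{rmk2}). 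Choosing $t$ a suitable constant makes $2\abs{C}/\vol(V)$ a small constant multiple of $1/\sqrt{\nmax}$, hence admissibly below a constant fraction of the two standard deviations, so a valid $\theta$ exists on an event of probability $\Omega(1)$. For the anti-concentration in the \emph{large-deviation} direction I would not use \cref{thm:littlewood-offord} (a small-ball bound, hence the wrong direction) but a Paley--Zygmund/fourth-moment estimate: Rademacher sums satisfy $\Ex{S^4}\le 3(\Ex{S^2})^2$, which yields $\Pr{S\ge c\sqrt{\mathrm{Var}(S)}}=\Omega(1)$ for a small constant $c$, and by symmetry the same for the lower tail. By independence of $z_i,z_j$ the event $\{z_i\ge\theta,\ z_j\le-\theta\}$ then has probability $\Omega(1)$, and intersecting with the $\Omega(1)$-probability good event for $C$ gives $\Pr{\bm y(u)>0>\bm y(v)}=\Omega(1)$.

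I expect the genuine difficulty to be the bookkeeping of absolute constants rather than any structural gap: the requirement $\theta>2\abs{C}/\vol(V)$ must coexist with $\theta\le c\sqrt{\mathrm{Var}(z_i)}$ while both $\Pr{z_i\ge\theta}$ and the good-$C$ probability $1-1/t^2$ stay bounded away from $0$, and these constraints pull the free constants $t$ and $c$ in opposite directions. The hypotheses $\Delta\le\sqrt{\nmin}/25$ and $k>2\Delta(\nmax/\nmin)$ are precisely what create the needed slack, forcing $\vol(V)$ to be comfortably larger than $\Delta\nmax$ so that the common shift $\bar x$ sits a factor $\approx\sqrt{k}$ below a single community's fluctuation; making this margin explicit is where the real care lies.
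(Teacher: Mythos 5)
Your structural reformulation is correct and, once denominators are cleared, is exactly the paper's own expression: $\bm{y}(u)=z_l-\bar{x}$ for $u\in V_l$ coincides with $\frac{1}{m}\bigl[\frac{\hat{m}_1}{m_1}X(V_1)-X(V_2)-X(\Vunion_2)\bigr]$, the deterministic step (from $z_i\ge\theta$, $z_j\le-\theta$, $\abs{C}/\vol(V)<\theta/2$ to $\bm{y}(u)>0>\bm{y}(v)$, using $m_i,m_j<\vol(V)/2$) is sound, and so is the independence bookkeeping. The genuine gap is in closing the probability bound with Chebyshev plus Paley--Zygmund, and it is more than ``bookkeeping of absolute constants.'' The fourth-moment inequality $\Ex{S^4}\le 3(\Ex{S^2})^2$ only certifies $\Prob{}{S\ge\theta}=\Omega(1)$ for thresholds $\theta$ \emph{strictly below} one standard deviation of $S$, while Chebyshev is vacuous for deviations at or below one standard deviation of $C$. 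You therefore need $2t\,\sigma(C/\vol(V))<\theta<\sigma(z_l)$ for some $t>1$, i.e.\ a separation of more than a factor $2$ between the two standard deviations. But your own worst-case bounds give only $\sigma(z_l)\ge 1/\sqrt{\nmax}$ versus $\sigma(C/\vol(V))\le\sqrt{\Delta/\vol(V)}\le 1/\sqrt{2\nmax}$ --- a guaranteed separation of $\sqrt{2}$, not $\sqrt{k}$ --- so the requirement $2t/\sqrt{2\nmax}<\theta<1/\sqrt{\nmax}$ forces $t<1/\sqrt{2}$, where Chebyshev yields nothing. The ``factor $\approx\sqrt{k}$'' slack you invoke is real only when $u$ lies in a \emph{small} community; the lemma quantifies over all pairs, and when $u$ sits in the largest community of a maximally unbalanced instance (all degrees $1$, $n_i=\nmax$, the remaining $k-1\approx 2\nmax/\nmin$ communities of size $\nmin$) one finds $\sigma(C/\vol(V))$ within a small constant factor of $\sigma(z_i)$: the hypothesis $k>2\Delta(\nmax/\nmin)$ is calibrated to give exactly this constant-factor margin and no more. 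Perhaps a painstaking joint analysis showing the individual worst cases cannot co-occur could still extract a positive constant, but that analysis is absent from the proposal, and as written the chain of inequalities is inconsistent.

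The paper closes precisely this gap with Berry--Esseen (\cref{thm:berry-esseen_non_iid}) rather than Chebyshev/Paley--Zygmund: Gaussian approximation provides anti-concentration $\Prob{}{X(V_1)>\sigma(X(V_1))}=\Omega(1)$ \emph{at} one standard deviation (and beyond), and concentration of the cross-community term inside a window much narrower than its standard deviation. This is also the only place where the hypothesis $\Delta\le\frac{\sqrt{\nmin}}{25}$ is used --- it controls the Berry--Esseen error $C\Delta/\sigma$ --- and the fact that your argument never invokes this hypothesis is a strong signal that the tools you chose are too weak for the stated constants. The repair is local: keep your cleaner $z_l-\bar{x}$ formulation and the conditioning on the outside variables, but estimate the tails of $z_i$, $z_j$ and $C$ (all weighted Rademacher sums with weights in $[1,\Delta]$) via Berry--Esseen instead of Chebyshev and Paley--Zygmund.
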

\begin{proof}
	Since the ordering of the communities (and consequent definition of the
	$\bm{\chi}_i$'s, given in \cref{eq:fiedler_vectors}) is completely arbitrary, 
	we can assume $i = 1$ and $j = 2$, without loss of generality.	Let us define
    \(
		X(V_i) := \sum_{w \in V_i} \dv(w) \bm{x}(w)
	\),
	where $\bm{x}=\bm{x}^{(0)}$ is the initial state vector.
	
	Note that $\bm{y}(u) = \gamma_1 \bm{\chi}_1(u)$ and 
	$\bm{y}(v) = \gamma_1 \bm{\chi}_1(v) + \gamma_2 \bm{\chi}_2(v)$, 
	since the other terms of the $\bm{\chi}_i$s are equal to 0 on the components
	relative to $u$ and $v$.
	Thus, with some algebra, we get 
	\begin{align*}
		\bm{y}(u) &= \frac{1}{m} \left[ 
		\frac{\hat{m}_1}{m_1} X(V_1) - X(V_2) - X(\hat{V}_2)
		\right],
		\\
		\bm{y}(v) &= \frac{1}{m} \left[
		\frac{m_1 m_2 + m \hat{m}_2}{\hat{m}_1 m_2} X(V_2) - X(V_1) - X(\hat{V}_2)
		\right],
	\end{align*}
	where $\Vunion_i := \bigcup_{h=i+1}^{k}V_h$.
	Note that, by linearity of expectation, $\Ex{X(V_i)} = 0$.
	Moreover, since the terms $\bm{x}(w)$s are independent 
	Rademacher random variables, we can write the standard deviation of $X(V_i)$ as
	\[
	\sigma(X(V_i))
	= \sqrt{\sum_{w \in V_i} \sigma^2(\dv(w)\bm{x}(w))}
	= \sqrt{\sum_{w \in V_i} \left(\Ex{\dv(w)^2 \bm{x}(w)^2} - \Ex{\dv(w) \bm{x}(w)}^2\right)}
	= \sqrt{\sum_{w \in V_i} \dv(w)^2}.
	\]
	Then we can upper and lower bound the standard deviation $\sigma(X(V_i))$
	getting
	\(
		\frac{m_i}{\sqrt{n_i}}
		\leq \sigma(X(V_i)) 
		\leq \Delta \sqrt{n_i},
	\)
	where the lower bound follows from 
	$\Norm{\bm{d}}_2 \geq \Norm{\bm{d}}_1 / \sqrt{n_i}$, 
	where $\bm{d}_i$ is the vector of weighted degrees of nodes in community $V_i$,
	and for the upper bound we used that $\dv(w) \leq \Delta$, for each $w \in V$.

	Let us now define the following three events:
	\begin{enumerate}
		\item $E_1\text{: } X(V_1) > \sigma(X(V_1)) 
			\implies X(V_1) > \frac{m_1}{\sqrt{n_1}} 
				\geq \frac{\min_i m_i}{\sqrt{\nmax}}$;
		\item $E_2\text{: } X(V_2) < -\sigma(X(V_2))
			\implies X(V_2) < -\frac{m_2}{\sqrt{n_2}} 
				\leq -\frac{\min_i m_i}{\sqrt{\nmax}}$;
		\item $E_3\text{: } 0 \leq X(\hat{V}_2) < \left(2/\sqrt{k}\right) \sigma(X(\hat{V}_2))
			\implies 0 \leq X(\hat{V}_2) < 2 \Delta \sqrt{(1/k) \sum_{i=3}^k n_i}
				\leq 2 \Delta \sqrt{\nmax}$,
	\end{enumerate}
	When $E_1, E_2, E_3$ are true it follows directly that $\bm{y}(v) < 0$.
	As for $\bm{y}(u) > 0$ we have
	\begin{align*}
		\frac{\hat{m}_1}{m_1} X(V_1) - X(V_2) - X(\hat{V}_2)
		&> \frac{\hat{m}_1}{m_1} \sigma(X(V_1)) + \sigma(X(V_2)) - (2/\sqrt{k}) \cdot \sigma(X(\hat{V}_2))
		\\
		&\geq \frac{k \nmin}{\sqrt{\nmax}} - 2 \Delta \sqrt{\textstyle\nmax} > 0,
	\end{align*}
  since, for the last inequality, $k > 2 \Delta (\nmax/\nmin)$ by hypothesis.

	Note that all three events $E_1, E_2, E_3$ have probability at least constant 
	and, being the events independent, also $\Pr{E_1 \cap E_2 \cap E_3}$ is constant.
	Indeed, it is possible to prove the constant lower bounds on the probabilities by 
	approximating the random variables with Gaussian ones using Berry-Esseen's 
	theorem (\cref{thm:berry-esseen_non_iid}).
	Note that $X(V_1), X(V_2), X(\Vunion_2)$ all are of the form 
	$Z=\sum_{w \in T} Z_w$, for some $T \subseteq V$ and where $Z_w=\dv(w)\bm{x}(w)$. 
	Recall that $\Ex{Z_w} = 0$ and that $\sigma^2(Z_w) = \dv(w)^2$.
	Moreover, note that the third absolute moment of $Z_w$ is
	\(
		\Ex{\lvert Z_w \rvert^3} = \dv(w)^3 \Ex{\lvert\bm{x}(w)\rvert^3} = \dv(w)^3.
	\)
	Therefore we can apply \cref{thm:berry-esseen_non_iid} which claims that
	there exists a positive constant $C \leq 1.88$ \cite{berry1941accuracy} such that, 
	for every $z\in\mathbb{R}$,
	\[
		\left\lvert \Pr{Z\leq z \cdot \sigma(Z)} - \Phi(z) \right\rvert 
		\leq \frac{C}{\sigma(Z)} \max_{w \in T} \frac{\dv(w)^3}{\dv(w)^2}
		\leq \frac{C \cdot \Delta}{\sigma(Z)},
	\]
	where $\Phi$ is the cumulative distribution function of the standardized normal
	distribution. Thus
	\begin{equation}\label{eq:berry}
		\Pr{Z > z \cdot \sigma(Z)} 
		\geq 1 - \Phi(z) - \frac{C \cdot \Delta}{\sigma(Z)}.
	\end{equation}
	
	Since $\Delta \leq \frac{\sqrt{\nmin}}{25}$ by hypothesis
	and $\sigma(Z) \geq \sqrt{|T|}$ for every $T \subseteq V$, 
	taking $z=1$ it follows from \cref{eq:berry} that
	\[
		\Pr{E_1} = \Pr{X(V_1) > \sigma(X(V_1))}
		\geq 1 - \Phi(1) - \frac{C \cdot \Delta}{\sqrt{\nmin}}
		\geq 1 - \Phi(1) - \frac{C}{25}
		\geq \frac{1 - \Phi(1)}{2}
		\approx 0.08,
	\]
	since $\frac{1}{25} < \frac{1-\Phi(1)}{2C} \approx 0.042$.
	Since the distribution of $Z$ is symmetric for every $T \subseteq V$, 
	it holds that $\Pr{E_2} \geq 1 - \Phi(1) - \frac{C}{25} \approx 0.08$.
	Similarly, it also holds that 
	\[
		\Pr{E_3} = \frac{1}{2} - \Pr{X(\Vunion_2) > \left(2 / \sqrt{k}\right) \sigma(X(\Vunion_2))}
		\geq\Phi(2/\sqrt{k}) + \frac{C}{25}-\frac{1}{2} > \frac{C}{25} \approx 0.075.
		\qedhere
	\]
\end{proof}

Recall that the binary labeling of each node only depends on the difference of
its state in two consecutive rounds (see \cref{alg:avg_dyn}).
In~\cref{lem:sign_difference} we show that, under suitable 
assumptions on the transition matrix of a random walk on $G$, a large 
enough time window exists where, for each node $u$, the sign of the 
difference $\bm{x}^{(t)}(u) - \bm{x}^{(t+1)}(u)$ of the state 
vector across two consecutive rounds equals 
the sign of $\bm{y}(u)$, w.h.p. Since $\bm{y}$ (defined in \cref{eq:vector_y})
is a stepwise vector, this implies that two nodes in the same community have the 
same label, w.h.p.
For the sake of readability, in the proof of \cref{lem:sign_difference} we use 
two technical lemmas as black boxes, postponing their proofs to
Subsection~\ref{sec:techproof}.

\begin{lemma}[Sign of the difference]\label{lem:sign_difference}
Let $G = (V,E,\w)$ be a clustered $k$-volume regular graph 
with maximum weighted degree $\Delta\leq\frac{\sqrt{\nmin}}{25}$.
If $\lambda_k \geq \frac{7\lambda_2-5}{2}$,
$1-\lambda_2 \le \frac{\lambda_k \log(\lambda_k/\lambda_{k+1})}{7\log(2\Delta n)}$,
$\Abs{\bm{y}(u)} \geq \frac{1}{\Delta n}$ for every $u \in V$,
then a non-empty time interval $[T_1, T_2]$ exists,
with $T_1 = \bigO\left(\frac{\log n}{\log(\lambda_k / \lambda_{k+1})}\right)$ 
and $T_2 = \Omega\left(\frac{\lambda_k}{1-\lambda_2}\right)$,
such that, for every $u \in V$ and every $t \in [T_1, T_2]$ 
of the \averaging{} dynamics,
\[
	\sgn (\, \bm{x}^{(t)}(u) - \bm{x}^{(t+1)}(u) \,) = \sgn ( \bm{y}(u) ).
\]
\end{lemma}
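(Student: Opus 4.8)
The plan is to start from the spectral representation in \cref{eq:state_decomp} and write the one-step difference as
\[
	\bm{x}^{(t)}(u) - \bm{x}^{(t+1)}(u) = \sum_{i=1}^{n} \lambda_i^t(1-\lambda_i)\,\alpha_i\, \bm{v}_i(u).
\]
Since $\lambda_1 = 1$, the $i=1$ term vanishes, so the difference is carried entirely by the non-stationary modes. I would then split the sum into a \emph{signal} part $S_u(t) := \sum_{i=2}^{k}\lambda_i^t(1-\lambda_i)\alpha_i\bm{v}_i(u)$, built from the $k-1$ non-trivial stepwise eigenvectors, and a \emph{noise} part $R_u(t) := \sum_{i=k+1}^{n}\lambda_i^t(1-\lambda_i)\alpha_i\bm{v}_i(u)$. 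Because $G$ is clustered volume-regular, each $\bm{v}_i$ with $i\le k$ is stepwise, so $S_u(t)$ depends only on the community of $u$; recalling that $\bm{y} = \sum_{i=2}^{k}\alpha_i\bm{v}_i$, the whole statement reduces to showing that, throughout a suitable window, (i) $\sgn(S_u(t)) = \sgn(\bm{y}(u))$ with $|S_u(t)|$ bounded below, and (ii) $|R_u(t)| < |S_u(t)|$, so that the noise cannot flip the sign of the total.

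For the signal I would isolate the first technical lemma, which asserts that for every $t\le T_2$ the weighted combination $S_u(t)$ is sign-aligned with $\bm{y}(u)$ and satisfies a lower bound of the form $|S_u(t)| = \Omega\bigl((1-\lambda_2)\,\lambda_k^{\,t}\,|\bm{y}(u)|\bigr)$. This is the delicate point, and it is exactly where the two eigenvalue hypotheses enter: the condition $\lambda_k \ge (7\lambda_2-5)/2$, equivalently $1-\lambda_k \le \tfrac{7}{2}(1-\lambda_2)$, forces all signal eigenvalues into a narrow band so that the per-mode weights $\lambda_i^t(1-\lambda_i)$ for $2\le i\le k$ stay comparable, while the cap $t\le T_2 = \Omega(\lambda_k/(1-\lambda_2))$ prevents the slowest-decaying mode $\lambda_2$ from overwhelming the others, which is precisely what would eventually misalign $S_u(t)$ from $\bm{y}(u)$. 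The lower bound is then combined with the hypothesis $|\bm{y}(u)| \ge 1/(\Delta n)$ coming from \cref{lem:length_projection}.

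For the noise, the second technical lemma would give an upper bound $|R_u(t)| = \bigO(\lambda_{k+1}^{\,t}\cdot \Delta n)$, using $|\lambda_i|\le\lambda_{k+1}$ for $i>k$, the bound $\dv(u)\le\Delta$, and the fact that $\sum_{i>k}\alpha_i^2\norm{\Dsq\bm{v}_i}^2 \le \norm{\Dsq\bm{x}}^2 = \vol(V) \le \Delta n$. Comparing the two estimates, $|S_u(t)| > |R_u(t)|$ holds as soon as $(\lambda_k/\lambda_{k+1})^t$ exceeds a $\mathrm{poly}(\Delta n)/(1-\lambda_2)$ factor, i.e. for $t \ge T_1 = \bigO(\log n/\log(\lambda_k/\lambda_{k+1}))$. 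The last step is to check that the window is non-empty, $T_1 \le T_2$: rearranging $\frac{\log(2\Delta n)}{\log(\lambda_k/\lambda_{k+1})} \lesssim \frac{\lambda_k}{1-\lambda_2}$ yields exactly the standing assumption $1-\lambda_2 \le \frac{\lambda_k\log(\lambda_k/\lambda_{k+1})}{7\log(2\Delta n)}$. On $[T_1,T_2]$ both (i) and (ii) hold, so $\sgn(\bm{x}^{(t)}(u)-\bm{x}^{(t+1)}(u)) = \sgn(S_u(t)) = \sgn(\bm{y}(u))$.

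I expect the main obstacle to be the signal lemma, part (i): keeping $S_u(t)$ aligned with $\bm{y}(u)$ is genuinely subtle because the weights $\lambda_i^t(1-\lambda_i)$ are only comparable up to a constant factor across the signal modes, whereas $|\bm{y}(u)|$ is only polynomially small, so a crude triangle-inequality bound on the misalignment is far too weak. Making this quantitative — extracting the precise band width from $\lambda_k \ge (7\lambda_2-5)/2$ and the precise endpoint $T_2$ beyond which alignment provably fails — is where the real work lies, and is what I would defer to the technical subsection exactly as the paper does.
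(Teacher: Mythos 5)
Your proposal matches the paper's proof essentially step for step: your signal $S_u(t)$ and noise $R_u(t)$ are exactly the paper's differences of core and error contributions $\bm{c}^{(t)}(u)-\bm{c}^{(t+1)}(u)$ and $\xerr^{(t)}(u)-\xerr^{(t+1)}(u)$, your two deferred technical lemmas correspond precisely to \cref{lemma:lower_bound_diff_c} and \cref{lemma:norm_of_e}, and the derivations of $T_1$, $T_2$, and the non-emptiness of the window via the hypothesis on $1-\lambda_2$ are the same. You also correctly identify the sign-alignment of the signal under the constraint $\lambda_k \geq \frac{7\lambda_2-5}{2}$ as the delicate point that the paper likewise relegates to its technical subsection.
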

\begin{proof}
Recall from \cref{eq:state_decomp} that the state vector at time $t$,
i.e., $\bm{x}^{(t)}$, can be written as the sum of the first $k$ stepwise 
vectors of $P$ and of the remaining ones, namely
\[
	\bm{x}^{(t)}
	= \alpha_1 \bm{1} + \sum_{i=2}^k \lambda_i^t \alpha_i \bm{v}_i 
		+ \sum_{i=k+1}^n \lambda_i^t \alpha_i \bm{v}_i
	= \alpha_1 \bm{1} + \bm{c}^{(t)} + \xerr^{(t)}\,.
\]
In what follows we call $\bm{c}^{(t)} := \sum_{i=2}^k \lambda_i^t \alpha_i
\bm{v}_i$ the \emph{core contribution} and $\xerr^{(t)} := \sum_{i=k+1}^n
\lambda_i^t \alpha_i \bm{v}_i$ the \emph{error contribution}.
If we look at the difference of the state vector in two consecutive rounds,
the first term cancels out being constant over time, so that
\[
	\bm{x}^{(t)}(u) - \bm{x}^{(t+1)}(u)
	= \bm{c}^{(t)}(u) - \bm{c}^{(t+1)}(u) + \xerr^{(t)}(u) - \xerr^{(t+1)}(u)
\]
for each node $u \in V$.
Note that the sign of the difference between two consecutive rounds is
determined by the difference of the core contributions, $\bm{c}^{(t)}(u) -
\bm{c}^{(t+1)}(u)$, whenever
\begin{equation}\label{eq:whenever}
	\Abs{\bm{c}^{(t)}(u) - \bm{c}^{(t+1)}(u)}
	> \Abs{\xerr^{(t)}(u) - \xerr^{(t+1)}(u)}.
\end{equation}
To identify conditions on $t$ for which \cref{eq:whenever} holds,
we give suitable bounds on both hand sides of the inequality.
In more detail: 
\begin{enumerate}
\item In~\cref{lemma:norm_of_e} we prove that
\(
	\abs{\xerr^{(t)}(u)} \leq \lambda_{k+1}^t \sqrt{\Delta n},
\)
for every $u \in V$, so that
\[
	\Abs{\xerr^{(t)}(u) - \xerr^{(t+1)}(u)} 
	\leq \Abs{\xerr^{(t)}(u)} + \Abs{\xerr^{(t+1)}(u)} 
	\leq 2 \lambda_{k+1}^{t} \sqrt{\Delta n}.
\]

\item In \cref{lemma:lower_bound_diff_c} we prove that 
\(
	\Abs{\bm{c}^{(t)}(u) - \bm{c}^{(t+1)}(u)}
	> \lambda_k^t (1-\lambda_2) \Abs{ \bm{y}(u) }
\)
for every $u \in V$ and for every time $t < T_2$, 
where $T_2 \geq \frac{\lambda_k}{2(1-\lambda_2)}$;
note that the hypotheses on $1-\lambda_2$ imply 
$T_2 = \Omega\left(\frac{\log n}{\log(\lambda_k/\lambda_{k+1})}\right)$.
Moreover, the assumptions of \cref{lemma:lower_bound_diff_c} are 
satisfied, since $\bm{y}(u) \neq 0$ 
and $\lambda_k \geq \frac{7\lambda_2-5}{2}$.
\end{enumerate}

\noindent Combining \cref{lemma:norm_of_e} and \cref{lemma:lower_bound_diff_c},
we see that \cref{eq:whenever} holds whenever 
\begin{equation}\label{eq:time}
	\lambda_k^t (1-\lambda_2) \Abs{ \bm{y}(u) }
	> 2 \lambda_{k+1}^{t} \sqrt{\Delta n},
\end{equation}
An easy calculation shows that this happens for all $t > T_1$, where 
\[
	T_1 := \frac{\log\left( \frac{2 \sqrt{\Delta n}}{(1-\lambda_2) \Abs{ \bm{y}(u) }}\right)}{\log \left( \lambda_k / \lambda_{k+1} \right)}.
\]
Note that $T_1 = \bigO\left(\frac{\log n}{\log(\lambda_k / \lambda_{k+1})}\right)$ and, e.g.,
$T_1 = \bigO(\log n)$ whenever $\frac{\lambda_k}{\lambda_{k+1}} = 1 + \Omega(1)$.

We next show that, under the assumptions of the lemma, the window $[T_1, T_2]$ is not empty and, actually, 
it has a width that depends on the magnitude of $\lambda_2$ and the 
ratio $\lambda_k/\lambda_{k+1}$. To this purpose, we first observe that 
Cheeger's inequality for weighted graphs (\cref{thm:cheeger}) implies
$1-\lambda_2 \geq \frac{h_G^2}{2} \geq \frac{1}{2 (\Delta 
n)^2}$ (recall the footnote in Remark \ref{rmk1}). Moreover, 
recalling that we are assuming $\Abs{\bm{y}(u)} \geq 
\frac{1}{\Delta n}$,%
\footnote{It may be worth recalling that our hypothesis on $\Abs{\bm{y}(u)}$ 
holds with high probability from \cref{lem:length_projection}.} 
we have:

\begin{equation}\label{eq:time_window_not_empty}
	T_1 = \frac{\log\left( \frac{2 \sqrt{\Delta n}}{(1-\lambda_2) \Abs{ \bm{y}(u) }}\right)}{\log \left( \lambda_k / \lambda_{k+1} \right)}
	\stackrel{(a)}{\leq} \frac{\log\left(4 \Delta^3 n^3 \sqrt{\Delta n}\right)}{\log \left( \lambda_k / \lambda_{k+1} \right)}
	< \frac{7\log(2\Delta n)}{2\log \left( \lambda_k / \lambda_{k+1} \right)}
	\stackrel{(b)}{\leq} \frac{\lambda_k}{2(1-\lambda_2)}
	\stackrel{(c)}{\leq} T_2,
\end{equation}
where: in $(a)$ we used Cheeger's inequality (\cref{thm:cheeger}) 
in the way described above and our assumptions on $\Abs{\bm{y}(u)}$,
in $(b)$ we used the hypothesis on $1-\lambda_2$, which implies
$\log(\lambda_k/\lambda_{k+1}) \ge \frac{7 (1-\lambda_2)\log(2\Delta n)}{\lambda_k}$,
and in $(c)$ the lower bound on $T_2$ given by \cref{lemma:lower_bound_diff_c}.

From \cref{lemma:lower_bound_diff_c} we also know that 
$= \sgn ( \bm{c}^{(t)}(u) - \bm{c}^{(t+1)}(u) ) = \sgn ( \bm{y}(u) )$
for every time $t < T_2$; therefore we conclude that
\[
\sgn ( \bm{x}^{(t)}(u) - \bm{x}^{(t+1)}(u) ) 
= \sgn ( \bm{c}^{(t)}(u) - \bm{c}^{(t+1)}(u) )
= \sgn ( \bm{y}(u) )\,,
\]
for every node $u \in V$ and for every round $t \in [T_1, T_2]$ of the
\averaging{} dynamics.
\end{proof}

\noindent
\begin{proof}[Proof of \cref{thm:main}]
The binary labeling of the nodes of $G$ produced by the \averaging{} dynamics
during the time window $[T_1, T_2]$ is such that the two conditions required by
the definition of $(\varepsilon, \delta)$-community-sensitive algorithm
(\cref{def:comm_sen_alg}) are met, with $\varepsilon = \bigO(n^{-1/2})$ and
$\delta = 1 - \Omega(1)$.  Indeed, the first condition follows directly from
\cref{lem:sign_difference} together with the fact that $\bm{y}$ is a ``stepwise''
vector, while \cref{lem:length_projection} implies that $\varepsilon =
\bigO(n^{-\frac{1}{2}})$, since $\bm{y}(u)$ is not too small with probability at
least $1 - \bigO(n^{-1/2})$. The second condition, instead, follows directly 
from the combination of \cref{lem:sign_difference,lem:constant_prob_different_colors}.
\end{proof}

\begin{remark}[Equal-sized communities]\label{rmk:equa_sized}
	If $\lambda_k = \lambda_2$, then an alternative version of 
	\cref{lemma:lower_bound_diff_c} would tell us that, for every node $u \in V$,
	$\bm{c}^{(t)}(u)-\bm{c}^{(t+1)}(u) = \lambda_k^{t}(1-\lambda_2)\bm{y}(u)$
	and thus $\sgn(\bm{c}^{(t)}(u)-\bm{c}^{(t+1)}(u)) = \sgn(\bm{y}(u))$,
	in every round (with no need of $T_2$);
	this would imply an infinite time window starting at the first round $t > T_1$
	(where the ``error contribution'' becomes small). 
	In this sense our result also covers the case of multiple communities
	analyzed in~\cite{becchetti2017find}, with $k$ equal-sized communities in an 
	unweighted graph and then $\lambda_k = \lambda_2$.
\end{remark}

\begin{remark}[Two communities]\label{rmk:two_com}
	Our result also generalizes that of~\cite{becchetti2017find} in the simpler 
	case of two communities. In fact we don't require the graph to be regular, 
	but only volume-regular, thus taking into account communities that are 
	potentially unbalanced.
	Ideed, for $k = 2$, the \avg{} dynamics truncated at round $t$ 
	is a $(\bigO(n^{-\frac{1}{2}}),\,\bigO(n^{-\frac{1}{2}}))$-community-sensitive 
	algorithm for every round $t > T_1$, with 
	$T_1 = \bigO\left( \frac{\log n}{\log(\lambda_2/\lambda_3)} \right)$.
	Therefore, a single run of the dynamics highlights the community structure, i.e.,
	the sign of the difference $\bm{x}^{(t)}-\bm{x}^{(t)}$ is equal for nodes in 
	the same communities and different for nodes in different communities, 
	w.h.p.
\end{remark}

\subsection{Proofs for Lemma~\ref{lem:sign_difference}}\label{sec:techproof}
In this section we prove the two lemmas used in the proof of
\cref{lem:sign_difference}: the upper bound on the ``error contribution'' and
the lower bound on the ``core contribution.''

\begin{lemma}[Upper bound on the error contribution]\label{lemma:norm_of_e}
Let $\xerr^{(t)} := \sum_{i=k+1}^{n} \lambda_i^t \alpha_i \bm{v}_i$.
For every $u \in V$, it holds that
\[
	\abs{\xerr^{(t)}(u)} \leq \lambda_{k+1}^t \sqrt{\Delta n}.
\]
\end{lemma}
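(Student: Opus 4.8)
The plan is to pass from the (non-orthogonal) $P$-eigenbasis $\{\bv_i\}$ to the orthonormal $N$-eigenbasis $\{\bw_i\}$, where Parseval and Cauchy--Schwarz make the estimate transparent. Recall from the spectral decomposition that $\bx^{(t)} = \Dsqinv N^t \Dsq\bx$ and that $\bw_i = \Dsq\bv_i/\Norm{\Dsq\bv_i}$. Writing $\Dsq\bx = \sum_{i=1}^n \beta_i\bw_i$ with $\beta_i = \langle\Dsq\bx,\bw_i\rangle$, expanding $N^t = \sum_i \lambda_i^t \bw_i\bw_i^\T$, and keeping only the tail $i\ge k+1$ yields, coordinatewise,
\[
  \xerr^{(t)}(u) = \frac{1}{\sqrt{\dv(u)}}\sum_{i=k+1}^n \lambda_i^t \beta_i \bw_i(u).
\]
First I would use the normalization $\min_u\dv(u)=1$ to discard the prefactor, since $1/\sqrt{\dv(u)}\le 1$, reducing the goal to bounding the absolute value of the sum.

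Next I would apply Cauchy--Schwarz, separating the $\lambda_i^t\bw_i(u)$ factors from the $\beta_i$ factors:
\[
  \Abs{\sum_{i=k+1}^n \lambda_i^t \beta_i \bw_i(u)}
  \le \sqrt{\sum_{i=k+1}^n \lambda_i^{2t}\bw_i(u)^2}\cdot\sqrt{\sum_{i=k+1}^n \beta_i^2}.
\]
For the first factor, bounding $\lambda_i^{2t}\le\lambda_{k+1}^{2t}$ for every $i>k$ and pulling the constant out leaves $\sum_{i=k+1}^n\bw_i(u)^2\le\sum_{i=1}^n\bw_i(u)^2 = 1$, the last equality holding because the $\bw_i$ are the columns of an orthogonal matrix, so its rows are unit vectors. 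For the second factor, Parseval gives $\sum_{i=k+1}^n\beta_i^2\le\sum_{i=1}^n\beta_i^2 = \Norm{\Dsq\bx}^2 = \bx^\T D\bx = \sum_{u\in V}\dv(u)\,\bx(u)^2 = \vol(V)$, using $\bx(u)^2=1$, and finally $\vol(V)=\sum_u\dv(u)\le\Delta n$.

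Combining the three estimates gives $\abs{\xerr^{(t)}(u)}\le 1\cdot\lambda_{k+1}^t\cdot\sqrt{\Delta n}$, which is exactly the claim. This lemma carries no genuine difficulty: it is a Parseval-plus-Cauchy--Schwarz computation once one works in the orthonormal basis of $N$. The only point I would check carefully is the inequality $\lambda_i^{2t}\le\lambda_{k+1}^{2t}$ for $i>k$ in the first factor. The ordering guarantees $\lambda_i\le\lambda_{k+1}$, but one must also rule out a tail eigenvalue more negative than $-\lambda_{k+1}$, i.e.\ verify that $\lambda_{k+1}$ is the spectral radius of the tail; this is the place where the clustered regime (with the operative gap sitting between $\lambda_k$ and $\lambda_{k+1}$) is implicitly used, and it is the natural candidate for the main obstacle, albeit a mild one.
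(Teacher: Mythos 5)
Your proof is correct and follows essentially the same route as the paper's: pass to the orthonormal eigenbasis of $N$, apply Cauchy--Schwarz, bound the tail eigenvalues by $\lambda_{k+1}^{2t}$, and control $\sum_i \beta_i^2$ by $\Norm{\Dsq\bx}^2 \le \Delta n$; the only cosmetic difference is that you bound the coordinate directly via row-orthonormality of the eigenvector matrix, whereas the paper bounds $\norm{\xerr^{(t)}}_\infty$ by $\norm{\xerr^{(t)}}$ and factors out $\Norm{\Dsqinv}$. The caveat you flag at the end---that $\lambda_i^{2t}\le\lambda_{k+1}^{2t}$ requires $\abs{\lambda_i}\le\lambda_{k+1}$ for the whole tail, which the stated hypotheses do not obviously guarantee---is a fair observation, but the paper's own proof performs the identical step without further justification, so you are not missing anything relative to it.
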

\begin{proof}
	To bound all components of vector $\xerr^{(t)}$ we use its $\ell^\infty$ norm,
	defined for any vector $\bm{x}$ as $\norm{ \bm{x} }_\infty := \sup_i \abs{\bm{x}(i)}$.
	In particular
	\begin{align*}
		\norm{\xerr^{(t)}}_\infty^2
		&\leq \norm{\xerr^{(t)}}^2
		= \Norm{ \sum_{i=k+1}^n \lambda_i^t \alpha_i \bm{v}_i }^2
		= \Norm{ \sum_{i=k+1}^n \lambda_i^t \beta_i \Dsqinv \bm{w}_i }^2.
	\end{align*}
	By using Cauchy-Schwarz inequality (\cref{thm:cauchy-schwarz}) 
	and applying the definition of spectral norm of an operator, 
	i.e., $\norm{A} := \sup_{\bm{x}:\norm{\bm{x}=1}} \norm{A\bm{x}}$,
	we get that
	\[
		\Norm{ \sum_{i=k+1}^n \lambda_i^t \beta_i \Dsqinv \bm{w}_i }^2
		\leq \Norm{\Dsqinv}^2 \Norm{ \sum_{i=k+1}^n \lambda_i^t \beta_i \bm{w}_i }^2
		= \Norm{\Dsqinv}^2 \sum_{i=k+1}^n \lambda_i^{2t} \beta_i^2,
	\]
	since the $\bm{w}_i$s are orthonormal. 
	With some additional simple bounds it follows that
	\begin{align*}
		\Norm{\Dsqinv}^2 \sum_{i=k+1}^n \lambda_i^{2t} \beta_i^2
		&\leq \Norm{\Dsqinv}^2 \lambda_{k+1}^{2t} \sum_{i=k+1}^n \beta_i^2
		\leq \Norm{\Dsqinv}^2 \lambda_{k+1}^{2t} \sum_{i=1}^n \beta_i^2
		\\
		&= \Norm{\Dsqinv}^2 \lambda_{k+1}^{2t} \Norm{\Dsq \bm{x}}^2
		\leq \Norm{\Dsqinv}^2 \lambda_{k+1}^{2t} \Norm{\Dsq}^2 \Norm{\bm{x}}^2.
	\end{align*}
	By using the fact that the spectral norm of a diagonal matrix 
	is equal to its maximum value, we conclude that
	\[
		\Norm{\Dsqinv}^2 \lambda_{k+1}^{2t} \Norm{\Dsq}^2 \Norm{\bm{x}}^2
		= \frac{\max_u{\dv(u)}}{\min_u{\dv(u)}} \lambda_{k+1}^{2t} \Norm{ \bm{x} }^2
		\leq \lambda_{k+1}^{2t} \Delta n.
	\]
	Thus, for every $u \in V$ it holds that
	\(
	\abs{\xerr^{(t)}(u)} 
	\leq \sqrt{\norm{\xerr^{(t)}}_\infty^2}
	\leq \lambda_{k+1}^t \sqrt{\Delta n}.
	\)
\end{proof}
%


\smallskip\noindent In \cref{lemma:lower_bound_diff_c} we show that the difference
of the core contribution in consecutive rounds can be approximated, for our 
purposes in \cref{lem:sign_difference}, with $\bm{y}$.

\begin{lemma}[Lower bound on the core contribution]\label{lemma:lower_bound_diff_c}
	Let $\bm{c}^{(t)} := \sum_{i=2}^{k} \lambda_i^t \alpha_i \bm{v}_i$ 
	and let $\bm{y}(u) \neq 0$ for every $u \in V$.
	If $\lambda_k \geq \frac{7\lambda_2-5}{2}$,
	then, for every $u \in V$ and for every $t\leq T_2$, 
	with $T_2 \geq \frac{\lambda_k}{2(1-\lambda_2)}$, the following 
	holds:
	\begin{itemize}
		\item $\sgn(\bm{c}^{(t)}(u) - \bm{c}^{(t+1)}(u)) = \sgn( \bm{y}(u) )$;
		\item $\Abs{ \bm{c}^{(t)}(u) - \bm{c}^{(t+1)}(u) }
		> \lambda_k^{t} (1-\lambda_2) \Abs{ \bm{y}(u) }$.	
	\end{itemize}
\end{lemma}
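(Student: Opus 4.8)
The plan is to work entirely in the eigenbasis of $P$ and compare the difference $\bm{c}^{(t)}(u)-\bm{c}^{(t+1)}(u)$ with the ``frozen'' quantity $\lambda_k^t(1-\lambda_2)\bm{y}(u)$, exploiting that the two coincide \emph{exactly} in the fully degenerate case $\lambda_2=\dots=\lambda_k$ (the identity recorded in \cref{rmk:equa_sized}). First I would write down the two algebraic facts that drive everything: setting $a_i:=\alpha_i\bm{v}_i(u)$, one has $\bm{y}(u)=\sum_{i=2}^k a_i$ (since $\bm{y}=\bm{c}^{(0)}$ and $\bm{v}_1=\bone$), while
\[
	\bm{c}^{(t)}(u)-\bm{c}^{(t+1)}(u)=\sum_{i=2}^k \lambda_i^t(1-\lambda_i)\,a_i .
\]
The natural move is to factor out the smallest coefficient $\beta:=\lambda_k^t(1-\lambda_2)$, rewriting the difference as $\beta\sum_{i=2}^k r_i a_i$ with $r_i:=\lambda_i^t(1-\lambda_i)/\beta$. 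Since the eigenvalues are ordered, $\lambda_i\geq\lambda_k$ gives $\lambda_i^t\geq\lambda_k^t$ and $\lambda_i\leq\lambda_2$ gives $1-\lambda_i\geq 1-\lambda_2$, so every $r_i\geq 1$; this makes $\beta$ a legitimate common lower bound and reduces both conclusions to showing that $\sum_i r_i a_i$ has the same sign as $\sum_i a_i=\bm{y}(u)$ and is no smaller in absolute value.

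The quantitative heart of the argument is to show that the amplification factors $r_i$ are not merely $\geq 1$ but in fact lie in a narrow band $[1,C]$ for an explicit constant $C$. Here I would use the hypothesis $\lambda_k\geq\frac{7\lambda_2-5}{2}$, which rewrites as $1-\lambda_k\leq\frac72(1-\lambda_2)$ and says that $\lambda_2,\dots,\lambda_k$ all share essentially the same gap from $1$, together with the restriction $t\leq T_2=\frac{\lambda_k}{2(1-\lambda_2)}$. Using $\lambda_i^t=e^{t\log\lambda_i}$ and the estimate $\log\lambda_i\approx-(1-\lambda_i)$, the exponent $t(1-\lambda_i)$ is bounded by a constant of order $1$ uniformly over $i\in\{2,\dots,k\}$, so both $(\lambda_i/\lambda_k)^t$ and $(1-\lambda_i)/(1-\lambda_2)$ are bounded by absolute constants; the value $T_2=\frac{\lambda_k}{2(1-\lambda_2)}$ is precisely the threshold beyond which this band would start to widen. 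The degenerate identity of \cref{rmk:equa_sized} (where $r_i\equiv1$) is the sanity check that the band collapses to $\{1\}$ in the limit.

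The step I expect to be the main obstacle is turning ``$r_i\in[1,C]$'' into control of the \emph{signed} sum $\sum_i r_i a_i$. After writing $r_i=1+s_i$ and reducing (say for $\bm{y}(u)>0$) to proving $\Abs{\sum_i s_i a_i}<\Abs{\bm{y}(u)}$, a crude bound $\Abs{\sum_i s_i a_i}\leq(C-1)\sum_i\Abs{a_i}$ is useless, because $\sum_i\Abs{a_i}$ need not be comparable to $\Abs{\sum_i a_i}$ for coefficients of mixed sign. To close this gap I would \emph{not} treat the $a_i$ as free parameters: using that each $\bm{v}_i$ is stepwise, I would expand $\alpha_i\propto\sum_{j}\nu_{ij}\,X(V_j)$, where $\nu_{ij}$ is the common value of $\bm{v}_i$ on $V_j$ and $X(V_j)=\sum_{w\in V_j}\dv(w)\bm{x}(w)$ is the community sum already appearing in \cref{lem:constant_prob_different_colors}. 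Then both $\bm{y}(u)$ and the error $\sum_i s_i a_i$ become linear forms in the same $k$ variables $X(V_1),\dots,X(V_k)$, which exposes the cancellations the crude bound ignores and lets me bound the error as a genuine fraction of $\bm{y}(u)$, the admissible fraction being governed by $C-1$ and hence by the slack in $1-\lambda_k\leq\frac72(1-\lambda_2)$. Once $\Abs{\sum_i s_i a_i}<\Abs{\bm{y}(u)}$ is established, both the sign identity $\sgn(\bm{c}^{(t)}(u)-\bm{c}^{(t+1)}(u))=\sgn(\bm{y}(u))$ and the strict magnitude bound $\Abs{\bm{c}^{(t)}(u)-\bm{c}^{(t+1)}(u)}>\beta\Abs{\bm{y}(u)}$ follow at once, with $T_2$ identified as the largest $t$ for which the ratio band stays narrow enough for the inequality to persist.
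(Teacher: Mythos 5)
Your setup coincides with the paper's: the identity $\bm{c}^{(t)}(u)-\bm{c}^{(t+1)}(u)=\sum_{i=2}^{k}\lambda_i^t(1-\lambda_i)a_i$ with $a_i=\alpha_i\bm{v}_i(u)$ and $\bm{y}(u)=\sum_{i=2}^k a_i$, the extraction of the factor $\beta=\lambda_k^t(1-\lambda_2)$, and the observation $r_i\geq 1$ are exactly the paper's decomposition $\bm{c}^{(t)}-\bm{c}^{(t+1)}=\lambda_k^t(1-\lambda_2)\bm{y}+\sum_{i=2}^k c_i\alpha_i\bm{v}_i$ with $c_i=\lambda_i^t(1-\lambda_i)-\lambda_k^t(1-\lambda_2)=\beta(r_i-1)\geq 0$. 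The genuine gap is in your closing step. You propose to prove a deterministic bound $\Abs{\sum_i s_i a_i}\leq\rho\Abs{\sum_i a_i}$ with $\rho<1$ by re-expressing both sums as linear forms in $X(V_1),\dots,X(V_k)$. This cannot work: because the stepwise eigenvectors $\bm{v}_1,\dots,\bm{v}_k$ restricted to the partition form a basis of $\mathbb{R}^k$, the map $(X(V_1),\dots,X(V_k))\mapsto(\alpha_1,\dots,\alpha_k)$ is an invertible linear change of coordinates, so rewriting the two quantities in the $X$-variables exposes no cancellation and imposes no relation between them. A deterministic inequality $\Abs{\ell_1}\leq\rho\Abs{\ell_2}$ between two linear forms, valid on a full-dimensional set of inputs, forces $\ell_1$ and $\ell_2$ to be proportional, which here happens only when all the $s_i$ coincide (the degenerate case of \cref{rmk:equa_sized}). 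Concretely, with $a_2=1$ and $a_3=-1+\epsilon$ one gets $\bm{y}(u)=\epsilon$ while $\sum_i s_i a_i= s_2-s_3+s_3\epsilon$, which is not small relative to $\epsilon$ unless $s_2=s_3$; no uniform band $r_i\in[1,C]$ with $C>1$ an absolute constant (your $C$ is roughly $\tfrac{7}{2}e^{5/4}\approx 12$) can overcome this near-cancellation.

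The paper closes the argument by a different device: it splits $\bm{y}(u)$ into $\bm{y}^+(u)=\sum_{i:a_i>0}a_i$ and $\bm{y}^-(u)=-\sum_{i:a_i<0}a_i$, lower-bounds the correction by $(\min_i c_i)\,\bm{y}^+(u)-(\max_i c_i)\,\bm{y}^-(u)$, and uses $\bm{y}^+(u)>\bm{y}^-(u)$ (for $\bm{y}(u)>0$) together with the defining inequality of $T_2$, namely $(\lambda_2/\lambda_k)^t<\frac{1-\lambda_k}{1-\lambda_2}$, to argue this correction is positive; that is, it compares the extreme coefficients against $\bm{y}^{\pm}(u)$ separately instead of against $\Abs{\sum_i a_i}$. (This step is itself delicate, since it rests on identifying $\min_i \lambda_i^t(1-\lambda_i)$ and $\max_i \lambda_i^t(1-\lambda_i)$ with $i=k$ and $i=2$, and $\lambda\mapsto\lambda^t(1-\lambda)$ is not monotone on $[\lambda_k,\lambda_2]$ in general; but it is the mechanism your proposal would need to replace.) A second, independent defect of your reduction: establishing $\Abs{\sum_i s_i a_i}<\Abs{\bm{y}(u)}$ yields only the sign claim. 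For the second bullet, $\Abs{\bm{c}^{(t)}(u)-\bm{c}^{(t+1)}(u)}>\lambda_k^t(1-\lambda_2)\Abs{\bm{y}(u)}$, you need the correction $\sum_i s_i a_i$ to have the same strict sign as $\bm{y}(u)$: a small correction of the opposite sign preserves $\sgn(\sum_i r_i a_i)$ but pushes $\Abs{\sum_i r_i a_i}$ strictly below $\Abs{\sum_i a_i}$, violating the stated inequality.
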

\begin{proof}
	Let us define $d_{i,j} := \lambda_i - \lambda_j$. Note that
	\begin{align*}
	\bm{c}^{(t)} - \bm{c}^{(t+1)} &= \sum_{i=2}^{k} \lambda_i^t (1 - \lambda_i) \alpha_i \bm{v}_i
	= \sum_{i=2}^{k} (\lambda_k + d_{i,k})^t (1 - \lambda_2 + d_{2,i}) \alpha_i \bm{v}_i
	\\
	&= \sum_{i=2}^{k} [\lambda_k^t + (\lambda_k + d_{i,k})^t - \lambda_k^t] (1 - \lambda_2 + d_{2,i}) \alpha_i \bm{v}_i
	= \sum_{i=2}^{k} [\lambda_k^t (1 - \lambda_2) + c_i] \alpha_i \bm{v}_i
	\\
	&= \lambda_k^t (1 - \lambda_2)\sum_{i=2}^{k} \alpha_i \bm{v}_i + \sum_{i=2}^{k} c_i \alpha_i \bm{v}_i
	= \lambda_k^t (1 - \lambda_2)\bm{y} + \sum_{i=2}^{k} c_i \alpha_i \bm{v}_i,
	\end{align*}
	where in the last equality we applied \cref{lem:span_stepwise} to get
	$\sum_{i=2}^{k} \alpha_i \bm{v}_i = \bm{y}$, and where we defined 
	$c_i := \lambda_k^t d_{2,i} + [(\lambda_k+d_{i,k})^t-\lambda_k^t] (1-\lambda_2+d_{2,i})$.
	Using the definition of $d_{i,j}$, we get
	\begin{align*}
	c_i &= \lambda_k^t d_{2,i} + [(\lambda_k+d_{i,k})^t-\lambda_k^t] (1-\lambda_2+d_{2,i})
	\\
	&= \lambda_k^t (\lambda_2-\lambda_i) + [(\lambda_k+(\lambda_i-\lambda_k))^t-\lambda_k^t] (1-\lambda_2+(\lambda_2-\lambda_i))
	\\
	&= \lambda_k^t (\lambda_2-\lambda_i) + (\lambda_i^t-\lambda_k^t) (1-\lambda_i)
	\\
	&= \lambda_i^t (1-\lambda_i) - \lambda_k^t(1-\lambda_2).
	\end{align*}
	Note that 
	$\min_i[ \lambda_i^t (1-\lambda_i) ] - \lambda_k^t(1-\lambda_2)
	\leq c_i \leq 
	\max_i [ \lambda_i^t (1-\lambda_i) ] - \lambda_k^t(1-\lambda_2)$, 
	for every $i \in \{2,\dots,k\}$. 
	Since the minimum and the maximum are obtained for $i=k$ and $i=2$ respectively,
	we have
	$\lambda_k^t(\lambda_2-\lambda_k) 
	\leq c_i \leq 
	(\lambda_2^t - \lambda_k^t)(1-\lambda_2)$.
	Let us call the positive and negative terms of $\bm{y}(u)$ as
	\[
		\bm{y}^+(u) := \sum_{\Pos}\alpha_i\bm{v}_i(u)
		\quad\qquad\text{and}\quad\qquad
		\bm{y}^-(u) := -\sum_{\Neg}\alpha_i\bm{v}_i(u).
	\]
	Therefore, for each $u \in V$, it holds that
	\begin{align}\label{eq:lb_diff_c}
		\bm{c}^{(t)}(u) - \bm{c}^{(t+1)}(u) &\geq \lambda_k^t (1 - \lambda_2)\bm{y}(u)
		+ \lambda_k^t(\lambda_2 - \lambda_k) \bm{y}^+(u)
		- (\lambda_2^t - \lambda_k^t)(1-\lambda_2) \bm{y}^-(u),
	\\
	\label{eq:ub_diff_c}
		\bm{c}^{(t)}(u) - \bm{c}^{(t+1)}(u) &\leq \lambda_k^t (1 - \lambda_2)\bm{y}(u) 
		+ (\lambda_2^t - \lambda_k^t)(1-\lambda_2) \bm{y}^+(u)
		- \lambda_k^t(\lambda_2 - \lambda_k) \bm{y}^-(u).
	\end{align}

	\bigskip
	In the following we look for a time $T_2$ such that, for every $t \leq T_2$ 
	it holds that
	\[
		\Abs{ \bm{c}^{(t)}(u) - \bm{c}^{(t+1)}(u) } 
		> \lambda_k^{t} (1-\lambda_2) \Abs{ \bm{y}(u) }.
	\]

	\noindent
	Note that $\bm{y}(u)=\bm{y}^+(u)-\bm{y}^-(u)$. We consider two cases:
	$\bm{y}(u) > 0$ and $\bm{y}(u) < 0$.

	\smallskip\noindent \underline{Case $\bm{y}(u) > 0$}: 
	We look for a time $T_2$ such that, for every time $t \leq T_2$, it holds that
	\begin{equation}\label{eq:y_pos}
		\lambda_k^t(\lambda_2 - \lambda_k) \bm{y}^+(u) 
		> (\lambda_2^t - \lambda_k^t)(1-\lambda_2) \bm{y}^-(u).
	\end{equation}
	Indeed, since $\bm{y}(u) > 0$, we can use  
	$\bm{y}^+(u) > \bm{y}^-(u)$
	to upper bound the right hand side of the previous equation,
	so that \cref{eq:y_pos} holds for every $t$ that satisfies:
	\[
		\lambda_k^t(\lambda_2 - \lambda_k) > (\lambda_2^t - \lambda_k^t)(1-\lambda_2)
		\iff
		\left(\frac{\lambda_2}{\lambda_k}\right)^t < \frac{1-\lambda_k}{1-\lambda_2},
	\] 
	i.e., for every $t \leq T_2$, where
	\[
		T_2 := \frac{\ln\left(\frac{1-\lambda_k}{1-\lambda_2} \right)}{\ln\left(\frac{\lambda_2}{\lambda_k}\right)}
		= \frac{\ln\left(1+\frac{\lambda_2-\lambda_k}{1-\lambda_2} \right)}{\ln\left(1+\frac{\lambda_2-\lambda_k}{\lambda_k}\right)}.
	\]
	Next, note that $\frac{\lambda_2-\lambda_k}{1-\lambda_2} \leq \frac{5}{2}$ 
	whenever $\lambda_k \geq \frac{7\lambda_2-5}{2}$. 
	Hence, under the hypotheses of the lemma, we can use that 
	$1+x \geq e^{\frac{x}{2}}$ for every $x \in [0,\frac{5}{2}]$
	and $1+x \leq e^x$ for every $x$. Thus:
	\[
		T_2 = \frac{\ln\left(1+\frac{\lambda_2-\lambda_k}{1-\lambda_2} \right)}{\ln\left(1+\frac{\lambda_2-\lambda_k}{\lambda_k}\right)}
		\geq \frac{\left(\frac{\lambda_2-\lambda_k}{2(1-\lambda_2)}\right)}{\left(\frac{\lambda_2-\lambda_k}{\lambda_k}\right)}
		= \frac{\lambda_k}{2(1-\lambda_2)}.
	\]
	Plugging \cref{eq:y_pos} into \cref{eq:lb_diff_c} we finally get 
	\begin{equation}\label{eq:diff_c_pos}
		\bm{c}^{(t)}(u) - \bm{c}^{(t+1)}(u) > \lambda_k^t (1 - \lambda_2)\bm{y}(u) > 0.
	\end{equation}
	
	\smallskip\noindent \underline{Case $\bm{y}(u) < 0$}: 
	Proceeding along the same lines we obtain:
	\begin{equation}\label{eq:y_neg}
		(\lambda_2^t - \lambda_k^t)(1-\lambda_2) \bm{y}^+(u)
		< \lambda_k^t(\lambda_2 - \lambda_k) \bm{y}^-(u)
	\end{equation}
	for every $t\leq T_2$.
	Therefore, by combining \cref{eq:y_neg} and \cref{eq:ub_diff_c} we 
	obtain
	\begin{equation}\label{eq:diff_c_neg}
		\bm{c}^{(t)}(u) - \bm{c}^{(t+1)}(u) < \lambda_k^t (1 - \lambda_2)\bm{y}(u) < 0.
	\end{equation}

	\bigskip\noindent Finally, by combining \cref{eq:diff_c_pos} and 
	\cref{eq:diff_c_neg}:
	\begin{itemize}
		\item $\sgn(\bm{c}^{(t)}(u) - \bm{c}^{(t+1)}(u)) = \sgn( \bm{y}(u) )$;
		\item $\Abs{ \bm{c}^{(t)}(u) - \bm{c}^{(t+1)}(u) }
		> \lambda_k^{t} (1-\lambda_2) \Abs{ \bm{y}(u) }$.\qedhere
	\end{itemize}%
\end{proof}

\section{Bipartite graphs}\label{sec:bipartite}

Assume $G=(V,E,\w)$ is an edge-weighted bipartite graph with 
$V = V_1\cup V_2$ and $E\subseteq V_1\times V_2$, 
i.e. a graph with hidden partition identified by the bipartition.
In this case, basic properties of random walks imply that 
the \averaging{} dynamics does not converge to the global 
(weighted) average of the values, but it periodically oscillates. 
In fact, in this case the transition matrix $P$ has an eigenvector 
$\bm{\chi}= \bm{1}_{V_1} - \bm{1}_{V_2}$ with eigenvalue $\lambda_n = -1$ 
(as implied by \cref{lemma:sym_spec}).
Thus, the state vector is mainly affected by the eigenvectors associated to
the two eigenvalues of absolute value $1$ (i.e., $\lambda_1$ and $\lambda_{n}$).
After a number of rounds of the dynamics that depends on $1/\lambda_2$, 
we have that, in even rounds,
all nodes in $V_i$ ($i = 1, 2$) have a state that is close to some local average 
$\mu_i$; in odd rounds, these values are swapped
(as shown in \cref{eq:bip_spec_dec}).

If one were observing the process in even rounds,\footnote{Or, 
equivalently, in odd rounds.} however, 
the states of  nodes in $V_1$ would converge to $\mu_1$ 
and those of nodes in $V_2$ would converge to $\mu_2$.
Unfortunately, convergence to the local average for nodes belonging to the same 
community does not eventually become monotone (i.e., increasing or decreasing). 
This follows since the eigenvector associated to 
$\lambda_2$ is no longer stepwise in general.
However, we can easily modify the labeling scheme of the \avg{} 
dynamics to perform \emph{bipartiteness detection} as follows: 
Nodes apply the labeling rule every two time steps and they do it between the 
states of two consecutive rounds, i.e., 
each node $v \in V$ sets $\texttt{label}^{(2t)}(v) = 1$ 
if $\bx^{(2t)}(v) \geqslant \bx^{(2t-1)}(v)$ 
and $\texttt{label}^{(2t)}(v) = 0$ otherwise.
We call this new protocol \avgbip{} dynamics.

We now show how \avgbip{} dynamics can perform \emph{bipartiteness detection}.
Recall that we denote with $W \in\mathbb{R}^{n\times n}$ the weighted adjacency
matrix of $G$. Since $G$ is undirected and bipartite, the matrix $W$ can be written as
\[
W = \left( 
\begin{array}{cc}
0 & W_1\\
W_2 & 0
\end{array}\right) 
= 
\left(\begin{array}{cc}
0 & W_1\\
W_1^\T & 0
\end{array}\right).
\]
Thus, the transition matrix of a random walk on $G$, i.e., $P=D^{-1}W$
where $D^{-1}$ is a diagonal matrix and $D_{ii}=\frac{1}{\dv(i)}$, has the form 
\[
P = \left( 
\begin{array}{cc}
0 & P_1\\
P_1^\T & 0
\end{array}\right).
\]
\cref{lemma:sym_spec} shows that the spectrum of $P$ is symmetric and
it gives a relation between the eigenvectors of symmetric eigenvalues.
\begin{lemma}\label{lemma:sym_spec}
	Let $G=(V_1 \cup V_2,E,w)$ be an edge-weighted undirected bipartite graph
	with bipartition $(V_1,V_2)$ and such that $\abs{V_i}=n_i$.
	If $\bm{v} = (\bm{v}_1,\bm{v}_2)^T$, with $\bm{v}_i\in\mathbb{R}^{n_i}$, 
	is an eigenvector of $P$ with eigenvalue $\lambda$,
	then $\bm{v}' = (\bm{v}_1, -\bm{v}_2)^\T$ is an eigenvector of $P$ 
	with eigenvalue $-\lambda$.
\end{lemma}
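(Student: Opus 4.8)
The plan is to exploit the block structure of $P$ that was just displayed, namely
\[
P = \begin{pmatrix} 0 & P_1 \\ P_1^\T & 0 \end{pmatrix}.
\]
Writing the eigenvector equation $P\bv = \lambda\bv$ in block form with $\bv = (\bv_1, \bv_2)^\T$, I would first read off the two coupled relations that the off-diagonal structure produces:
\[
P_1 \bv_2 = \lambda \bv_1, \qquad P_1^\T \bv_1 = \lambda \bv_2.
\]
These are simply the two block-rows of the matrix--vector product, and they encode the fact that in a bipartite walk a single step always moves from one side of the partition to the other.

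Next, I would apply $P$ directly to the candidate vector $\bv' = (\bv_1, -\bv_2)^\T$ and compute its image, again in block form:
\[
P \bv' = \begin{pmatrix} 0 & P_1 \\ P_1^\T & 0 \end{pmatrix} \begin{pmatrix} \bv_1 \\ -\bv_2 \end{pmatrix} = \begin{pmatrix} -P_1 \bv_2 \\ P_1^\T \bv_1 \end{pmatrix}.
\]
Substituting the two relations above turns the right-hand side into $(-\lambda\bv_1,\, \lambda\bv_2)^\T = -\lambda\,(\bv_1, -\bv_2)^\T = -\lambda\,\bv'$, which is exactly the claim. The key point is that flipping the sign of the second block toggles precisely one of the two off-diagonal couplings, so the eigenvalue picks up a factor of $-1$ while the structure of the relations is otherwise preserved.

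The only things left to check are bookkeeping rather than genuine obstacles. First, $\bv'$ is a legitimate eigenvector because it is nonzero: it has the same entries as $\bv$ up to sign, and $\bv \neq \bm{0}$ since it is an eigenvector. Second, I would remark that the map $\bv \mapsto \bv'$ is an involution (applying the sign flip twice returns $\bv$), hence a bijection between the $\lambda$-eigenspace and the $(-\lambda)$-eigenspace; this upgrades the pointwise statement to the symmetry of the whole spectrum asserted in the surrounding text, and in particular yields $\lambda_n = -1$ as the partner of $\lambda_1 = 1$ with eigenvector $\bm{1}_{V_1} - \bm{1}_{V_2}$. I expect no hard step here: all the difficulty was front-loaded into establishing the block form of $P$, which the bipartiteness of $G$ already hands us.
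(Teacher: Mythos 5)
Your proof is correct and follows essentially the same route as the paper: read off the two block relations $P_1\bv_2=\lambda\bv_1$ and $P_1^\T\bv_1=\lambda\bv_2$ from the eigenvector equation, then apply $P$ to $(\bv_1,-\bv_2)^\T$ and substitute. The extra remarks on nonvanishing of $\bv'$ and the involution are harmless additions (and you correctly state the second block relation, which the paper's own proof mistypes as $P_1^\T\bv_2=\lambda\bv_2$).
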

\begin{proof}
	If $P \bm{v} = \lambda \bm{v}$ then we have that  
	$P_1 \bm{v}_2 = \lambda\bm{v}_1$ and $P_1^\T \bm{v}_2 = \lambda\bm{v}_2$. 
	Using these two equalities we get that $P\bm{v}'=-\lambda\bm{v}'$.
	Indeed,
	\[
	P\bm{v}' =
	\left(\begin{array}{cc}
	0 & P_1\\
	P_1^\T & 0
	\end{array}\right) \left(\begin{array}{c}
	\bm{v}_1 \\ -\bm{v}_2
	\end{array}\right) =
	\left(\begin{array}{c}
	-P_1 \bm{v}_2 \\ P_1^\T\bm{v}_1
	\end{array}\right) = 
	-\lambda \left(\begin{array}{c}
	\bm{v}_1 \\ -\bm{v}_2	
	\end{array}\right).%
	\qedhere
	\]
\end{proof}

The transition matrix $P$ is stochastic, thus the vector $\bm{1}$ 
(i.e., the vector of all ones) is an eigenvector associated 
to $\lambda_1 = 1$, that is the first largest eigenvalue of $P$. 
\cref{lemma:sym_spec} implies that $\bm{\chi} = \bm{1}_{V_1} - \bm{1}_{V_2}$ 
is an eigenvector of $P$ with eigenvalue $\lambda_{n}=-1$.

As in \cref{sec:preli}, we write the state vector at time $t$ using the spectral decomposition of $P$.
Let $1 = \lambda_1 > \lambda_2 \geq \ldots > \lambda_{n}= -1$ 
be the eigenvalues of $P$. 
We denote by $\bm{1} = \bm{v}_1, \bm{v}_2, \ldots, \bm{v}_{n}= \bm{\chi}$ 
a family of $n$ linearly independent eigenvectors of $P$,
where each $\bm{v}_i$ is the eigenvector associated to $\lambda_i$. 
Thus, we have that
\begin{equation}\label{eq:bip_spec_dec}
	\bm{x}^{(t)}= P^t\bm{x} = 
	\sum_{i=1}^{n}\lambda_i^t\alpha_i\bm{v}_i = 
	\alpha_1\bm{1} + (-1)^t\alpha_{n}\bm{\chi} + \sum_{i=2}^{n-1}\lambda_i^t\alpha_i\bm{v}_i 
\end{equation}
where $\alpha_i = \frac{\langle \Dsq \bm{x}, \Dsq \bm{v}_i\rangle}{\norm{\Dsq \bm{v}_i}^2}$. 
The last equation implies that $\bm{x}^{(t)}= P^t\bm{x}$ 
does not converge to some value as $t$ tends to infinity, but oscillates. 
In particular, nodes in $V_1$ on even rounds and
nodes in $V_2$ on odd rounds, converge to $\alpha_1 + \alpha_n$. 
Instead in the symmetric case, i.e., odd rounds for nodes in $V_1$ 
and even rounds for nodes in $V_2$, 
the process converges to $\alpha_1 - \alpha_n$. 
These quantities are proportional to the weighted average 
of the initial values in the first and in the second partition, respectively.

\cref{thm:bip_detection}, whose proof follows, shows that \avgbip{} dynamics
performs bipartiteness detection in $\bigO(\log n \,/ \,\log(1 / \lambda_2))$
rounds.
Note that, as in the case of volume-regular graphs with two communities (see \cref{rmk:two_com}), 
one single run of the dynamics identifies the bipartition. 
Moreover, if $\log(1 / \lambda_2) = \Omega(1)$, then the \avgbip{} dynamics takes
logarithmic time to find the bipartition.
\begin{theorem}\label{thm:bip_detection}
	Let $G=(V, E, \w)$ be an edge-weighted bipartite graph with bipartition $(V_1, V_2)$ 
	and maximum weighted degree $\Delta = \bigO(n^K)$, for any arbitrary positive constant $K$.
	Then for every time $t > T$, with $T = \bigO(\log n \,/ \,\log(1 / \lambda_2))$, 
	the \avgbip{} dynamics truncated at round $t$ is a 
	$(\bigO(n^{-\frac{1}{2}}),\, \bigO(n^{-\frac{1}{2}}))$-community-sensitive algorithm.
\end{theorem}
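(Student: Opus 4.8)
The plan is to track how the \avgbip{} labeling rule, which compares the states at rounds $2t-1$ and $2t$, interacts with the spectral decomposition in \cref{eq:bip_spec_dec}. The crucial observation is that forming the consecutive difference $\bm{x}^{(2t)}(u) - \bm{x}^{(2t-1)}(u)$ cancels the stationary term $\alpha_1 \bone$ and \emph{doubles} the oscillating term, since $(-1)^{2t} - (-1)^{2t-1} = 2$. Concretely, I would write
\[
	\bm{x}^{(2t)}(u) - \bm{x}^{(2t-1)}(u)
	= 2\alpha_n \bm{\chi}(u) + \bm{e}^{(t)}(u),
	\qquad
	\bm{e}^{(t)}(u) := \sum_{i=2}^{n-1} \lambda_i^{2t-1}(\lambda_i - 1)\alpha_i \bm{v}_i(u),
\]
isolating a \emph{signal} term $2\alpha_n\bm{\chi}(u)$ and an \emph{error} term $\bm{e}^{(t)}(u)$. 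Because $\bm{\chi} = \bone_{V_1} - \bone_{V_2}$ is an eigenvector of $P$ by bipartiteness alone (\cref{lemma:sym_spec}), the signal is automatically stepwise, constant on each side of the bipartition and of opposite sign across it, \emph{without} any volume-regularity assumption. This is precisely what makes the argument here cleaner than for \cref{thm:main}.

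The two quantitative ingredients mirror those of the volume-regular analysis. First I would lower bound the signal by $\Abs{\alpha_n} \geq 1/(\Delta n)$ with probability $1 - \bigO(n^{-1/2})$. Here $\alpha_n = (\bm{x}^\T D\bm{\chi})/\Norm{\Dsq\bm{\chi}}^2$ with $\Norm{\Dsq\bm{\chi}}^2 = \vol(V) =: m \leq \Delta n$, while $D\bm{\chi}$ has every entry at least $1$ in absolute value (as $\dv(u) \geq 1$); applying the Littlewood--Offord anti-concentration bound (\cref{thm:littlewood-offord}) to the Rademacher sum $\bm{x}^\T(D\bm{\chi})$ then gives $\Prob{}{\Abs{\bm{x}^\T D\bm{\chi}} < 1} = \bigO(n^{-1/2})$, exactly as in \cref{lem:length_projection}. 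Second, I would upper bound the error by $\Norm{\bm{e}^{(t)}}_\infty \leq 2\lambda_2^{2t-1}\sqrt{\Delta n}$, following the template of \cref{lemma:norm_of_e}: rewriting $\alpha_i\bm{v}_i = \beta_i\Dsqinv\bm{w}_i$ in the orthonormal eigenbasis of $N$, using $\Abs{\lambda_i - 1} \leq 2$, and, crucially, invoking the spectral symmetry of \cref{lemma:sym_spec} to conclude that every middle eigenvalue obeys $\Abs{\lambda_i} \leq \lambda_2$ for $i \in \{2,\dots,n-1\}$ (since $\lambda_{n-1} = -\lambda_2$), whence $\Abs{\lambda_i}^{2t-1} \leq \lambda_2^{2t-1}$; the factor $\sqrt{\Delta n}$ comes from $\Norm{\Dsqinv}\Norm{\Dsq}\Norm{\bm{x}} = \sqrt{\Delta}\cdot\sqrt{n}$.

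Comparing the two bounds, the signal dominates the error as soon as $2/(\Delta n) > 2\lambda_2^{2t-1}\sqrt{\Delta n}$, i.e. $\lambda_2^{2t-1} < (\Delta n)^{-3/2}$; solving for $t$ and using $\Delta = \bigO(n^K)$ (so that $\log(\Delta n) = \bigO(\log n)$) yields the threshold $T = \bigO\left(\log n / \log(1/\lambda_2)\right)$ of the statement. For every round beyond $T$ we then have $\sgn(\bm{x}^{(2t)}(u) - \bm{x}^{(2t-1)}(u)) = \sgn(\alpha_n)\,\bm{\chi}(u)$ on the event $\Abs{\alpha_n} \geq 1/(\Delta n)$, so the \avgbip{} label is constant across each of $V_1, V_2$ and opposite between them. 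Since that event fails with probability only $\bigO(n^{-1/2})$, both conditions of \cref{def:comm_sen_alg} hold with $\varepsilon = \delta = \bigO(n^{-1/2})$.

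As for difficulty, most steps are faithful adaptations of machinery already in place for \cref{thm:main}, so the calculations are routine. The genuinely new point, and the one I would be most careful about, is the cancellation/doubling identity of the first paragraph together with the fact that $\bm{\chi}$ is stepwise for free: the argument hinges on the error term collecting \emph{only} the middle eigenvalues, none of which can match $\lambda_n = -1$ in magnitude once the graph is connected, and it is \cref{lemma:sym_spec} that supplies exactly this separation.
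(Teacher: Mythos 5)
Your proposal is correct and follows essentially the same route as the paper: the same cancellation of $\alpha_1\bm{1}$ and doubling of the $\alpha_n\bm{\chi}$ term, the same Littlewood--Offord lower bound $\Abs{\alpha_n}\geq 1/(\Delta n)$ w.h.p., and the same $\ell^\infty$ error bound in the style of \cref{lemma:norm_of_e} leading to the threshold $T=\bigO(\log n/\log(1/\lambda_2))$. If anything, you are slightly more careful than the paper in explicitly invoking \cref{lemma:sym_spec} to justify that every middle eigenvalue satisfies $\Abs{\lambda_i}\leq\lambda_2$, a fact the paper uses implicitly when applying \cref{lemma:norm_of_e}.
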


\begin{proof}
	We assume that the labeling rule is applied between every even and every odd
	round (conversely, the signs of the nodes in the analysis are swapped).
	Recall the definition of the \emph{error contribution}, namely
	\(
	\bm{e}^{(t)}(u) = \sum_{i=2}^{n-1}\lambda_i^t \alpha_i\bm{v}_i(u).
	\)
	We compute the difference between the state vectors of two consecutive steps 
	by using \cref{eq:bip_spec_dec}, namely
	\begin{align*}
		\bm{x}^{(2t)} - \bm{x}^{(2t+1)} 
		&= \alpha_1\bm{1} + (-1)^{2t}\alpha_{n}\bm{\chi} + \bm{e}^{(2t)}
		- \alpha_1\bm{1} - (-1)^{2t+1}\alpha_{n}\bm{\chi} - \bm{e}^{(2t+1)}
		\\
		& = 2\alpha_{n}\bm{\chi} + \bm{e}^{(2t)} - \bm{e}^{(2t + 1)}.
	\end{align*}
	We want to find a time $T$ such that for every $t>T$ the sign of a node 
	$u \in V$ depends only on $\bm{\chi}(u)$, i.e., 
	$\sgn(\bm{x}^{(2t)}(u) - \bm{x}^{(2t+1)}(u)) = \sgn(\alpha_{n}\bm{\chi}(u))$. 
	Since $\abs{\bm{\chi}(u)} = 1$, the last equation holds whenever
	\begin{equation}\label{eq:time_bip}
		2\abs{\alpha_{n}} > \abs{\bm{e}^{(2t)}(u) - \bm{e}^{(2t + 1)}(u)}.
	\end{equation}
	We upper bound $\abs{\bm{e}^{(2t)}(u) - \bm{e}^{(2t + 1)}(u)}$ by using 
	\cref{lemma:norm_of_e}, getting that
	$\abs{\bm{e}^{(2t)}(u) - \bm{e}^{(2t+1)}(u)} \leq 2\lambda_2^{2t}\sqrt{\Delta n}$.
	Therefore, with some algebra we get that \cref{eq:time_bip} holds 
	in every round $t > T$, where $T$ is defined as
	\[
		T:=\frac{\log\left(\sqrt{\Delta n}/\abs{\alpha_{n}}\right)}{2\log(1/\lambda_2)}.
	\]
	To conclude the proof, we provide a lower bound on $\abs{\alpha_{n}}$
	showing that it is not too small, w.h.p.
	Recall that $\alpha_i = \frac{\langle \Dsq \bm{x}, \Dsq \bm{v}_i\rangle}{\norm{\Dsq \bm{v}_i}^2}$ 
	and thus
	\begin{equation}\label{eq:alpha_n}
	\alpha_n 
	= \frac{\langle \Dsq \bm{x}, \Dsq \bm{\chi}\rangle}{\norm{\Dsq \bm{\chi}}^2} 
	= \frac{1}{\vol(V)} \sum_{v \in V}\dv(v)\bm{x}(v)\bm{\chi}(v).
	\end{equation}
	The lower bound then follows, with high probability. Indeed,
	\[
	\Pr\left(\abs{\alpha_{n}} \leq \frac{1}{\Delta n}\right)
	\leq \Pr\left(\abs{\alpha_{n}} \leq \frac{1}{\vol(V)}\right) 
	\stackrel{(a)}{=} \Pr\left(\Abs{\sum_{v \in V}\dv(v)\bm{x}(v)\bm{\chi}(v)}\leq 1\right) 
	\stackrel{(b)}{=} \bigO\left(\frac{1}{\sqrt{n}}\right),
	\] 
	where in $(a)$ we used \cref{eq:alpha_n} and in $(b)$ we applied \cref{thm:littlewood-offord}.
	The thesis then follows from the above bound on $\abs{\alpha_{n}}$
	and from the hypothesis on $\Delta = \bigO(n^K)$, 
	for any arbitrary positive constant $K$.
\end{proof}

\section{Discussion and Outlook}\label{sec:concl}

The focus of this work is on heuristics that implicitely perform 
spectral graph clustering, without explicitely computing the main 
eigenvectors of a matrix describing connectivity properties of the 
underlying network (typically, its Laplacian or a related matrix). In this perspective, 
we extended the work of Becchetti et al. \cite{becchetti2017find} in 
several ways. In particular, for $k$ communities, 
\cite{becchetti2017find} considered an extremely regular case, in which the 
second eigenvalue of the (normalized) Laplacian has algebraic and geometric multiplicities $k-1$ and the
corresponding eigenspace is spanned by a basis of indicator vectors. We 
considered a more general case in which the first $k$ eigenvalues are 
in general different, but the span of the corresponding eigenvectors 
again admits a base of indicator vectors. We also made a connection 
between this stepwise property and lumpability properties of the 
underlying random walk, which results in a class of volume-regular 
graphs, that may not have constant degree, nor exhibit balanced communities.
We further showed that our approach naturally lends itself to 
addressing related, yet different problems, such as identifying 
bipartiteness. Finally, in the paragraphs that follow we discuss 
extensions to slightly more general classes than the ones considered in 
this work. 

\paragraph{Other graph classes.}
Consider $k$-volume regular graphs whose $k$ stepwise 
eigenvectors are associated to the $k$ largest eigenvalues, in absolute value.
These graphs include many $k$-partite graphs (e.g., regular ones),
graphs that are ``close'' to being $k$-partite 
(i.e., ones that would become $k$-partite upon removal of a few edges).
Differently from the clustered case (\cref{thm:main})
some of the $k$ eigenvalues can in general be negative.

Consider the following variant of the labeling scheme of the 
\averaging{} dynamics, in which nodes apply their labeling rule only on 
even rounds, comparing their value with the one they held at the end of 
the last even round, i.e., each node $v \in V$ sets $\texttt{label}^{(2t)}(v) = 1$ 
if $\bx^{(2t)}(v) \geqslant \bx^{(2t-2)}(v)$ 
and $\texttt{label}^{(2t)}(v) = 0$ otherwise.
Since the above protocol amounts to only taking even powers of 
eigenvalues, the analysis of this modified protocol proceeds along the same lines as the
clustered case, while the results of \cref{thm:main} seamlessly extend to this class of graphs.

\paragraph{Outlook.}
Though far from conclusive, we believe our results point to 
potentially interesting directions for future research. In general, our 
analysis sheds further light on the connections between temporal 
evolution of the power method and spectral-related clustering 
properties of the underlying network. At the same time, we showed that variants of the \avg{} dynamics 
(and/or its labeling rule) might be useful in addressing different 
problems and/or other graph classes, as the examples given in Section \ref{sec:bipartite} 
suggest. On the other hand, identifying $k$ hidden partitions using the algorithm presented in 
\cite{becchetti2017find} requires relatively strong assumptions on the 
$k$ main eigenvalues and knowledge of an upper bound to the graph 
size,\footnote{As anecdotal experimental evidence suggests, the 
presence of a time window to perform labeling is not an artifact of our analysis.} 
while the analysis becomes considerably more intricate 
than the perfectly regular and completely balanced case 
addressed in \cite{becchetti2017find}. Some aspects of our analysis
(e.g., the aforementioned presence of a size-dependent time window in which the labeling 
rule has to be applied) suggest that more sophisticated variants of the 
\avg{} dynamics might be needed to express the full power of a spectral 
method that explicitely computes the $k$ main 
eigenvectors of a graph-related matrix. While we believe this goal 
can be achieved, designing and analyzing such an algorithm might prove a
challenging task.

\appendix
\section*{Appendix}
\section{Useful inequalities}\label{sec:apx:background}
\begin{theorem}[Extension of Chernoff Bounds~\cite{dubhashi2009concentration}]\label{thm_chernoff_ext}
	Let $X = \sum_{i=1}^n X_i$ where $X_i$ are independent distributed random variables 
	taking values in $\{0,1\}$ and let $\mu = \Ex{X}$. 
	Suppose that $\mu_L \leqslant \mu \leqslant \mu_H$. 
	Then, for $0 <\delta < 1$,
	\[
		\Prob{}{X > (1 + \delta)\mu_H} \leqslant \exp\left(- \frac{\delta^2}{3}\mu_H\right),
	\]
	\[
		\Prob{}{X < (1 - \delta)\mu_L} \leqslant \exp\left(- \frac{\delta^2}{2}\mu_L\right).
	\]
\end{theorem}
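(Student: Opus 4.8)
The plan is to prove both tail bounds by the standard exponential-moment (Bernstein) method, the sole twist being that we must express the estimates in terms of the \emph{surrogate} means $\mu_H$ and $\mu_L$ rather than the exact mean $\mu$. Write $p_i = \Prob{}{X_i = 1}$, so that $\mu = \sum_{i=1}^n p_i$, and recall that independence of the $X_i$ gives $\Ex{e^{sX}} = \prod_{i=1}^n \Ex{e^{sX_i}}$ for every real $s$. The elementary per-coordinate estimate $\Ex{e^{sX_i}} = 1 + p_i(e^{s}-1) \leq \exp\!\big(p_i(e^{s}-1)\big)$, valid for all $s$ via $1+x\le e^{x}$, then yields the master bound $\Ex{e^{sX}} \leq \exp\!\big(\mu(e^{s}-1)\big)$.

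For the upper tail I would fix $s>0$ and apply Markov's inequality to the nonnegative variable $e^{sX}$:
\[
\Prob{}{X > (1+\delta)\mu_H} \leq e^{-s(1+\delta)\mu_H}\,\Ex{e^{sX}} \leq \exp\!\big(\mu(e^{s}-1) - s(1+\delta)\mu_H\big).
\]
Here lies the point that makes this an \emph{extension}: because $s>0$ forces $e^{s}-1>0$, the hypothesis $\mu \leq \mu_H$ lets me replace $\mu$ by $\mu_H$ and bound the right-hand side by $\exp\!\big(\mu_H(e^{s}-1-s(1+\delta))\big)$. Optimizing over $s$ (the minimizer is $s=\ln(1+\delta)$) yields the classical form $\big(e^{\delta}/(1+\delta)^{1+\delta}\big)^{\mu_H}$, after which the routine calculus inequality $e^{\delta}/(1+\delta)^{1+\delta} \leq e^{-\delta^2/3}$, which holds for $0<\delta<1$, closes the first bound.

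For the lower tail I would instead fix $s>0$ and apply Markov to $e^{-sX}$, obtaining $\Prob{}{X<(1-\delta)\mu_L} \leq \exp\!\big(\mu(e^{-s}-1)+s(1-\delta)\mu_L\big)$. Now $e^{-s}-1<0$, so this time it is the hypothesis $\mu \geq \mu_L$ that lets me upper-bound $\mu(e^{-s}-1)$ by $\mu_L(e^{-s}-1)$; optimizing (minimizer $s=-\ln(1-\delta)$) produces $\big(e^{-\delta}/(1-\delta)^{1-\delta}\big)^{\mu_L}$, and the inequality $e^{-\delta}/(1-\delta)^{1-\delta} \leq e^{-\delta^2/2}$ for $0<\delta<1$ finishes. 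The only genuine obstacle is the bookkeeping of these substitutions: one must pair the upper tail with $\mu_H$ (the regime where $e^{s}-1>0$) and the lower tail with $\mu_L$ (where $e^{-s}-1<0$), since swapping the two would reverse the relevant inequality and invalidate the argument. Everything else reduces to the textbook Chernoff computation and the two stated scalar inequalities.
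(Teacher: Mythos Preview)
Your argument is correct and is exactly the standard exponential-moment derivation of Chernoff bounds, with the appropriate monotonicity observation (pairing $\mu_H$ with the regime $e^{s}-1>0$ and $\mu_L$ with $e^{-s}-1<0$) that justifies replacing $\mu$ by its surrogate. Note that the paper itself does not prove this statement: it is quoted in the appendix as a known inequality from~\cite{dubhashi2009concentration}, so there is no in-paper proof to compare against.
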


\begin{theorem}[Cauchy-Schwarz's inequality]\label{thm:cauchy-schwarz}
	For all vectors $\bm{u}, \bm{v}$ of an inner product space it holds that
	\(
	|\langle\bm{u}, \bm{v}\rangle|^2
	\leq \langle{\bm{u}, \bm{u}}\rangle \cdot \langle{\bm{v}, \bm{v}}\rangle,
	\)
	where $\langle{\cdot, \cdot}\rangle$ is the inner product.
\end{theorem}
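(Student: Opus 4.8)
The plan is to use the classical ``nonnegative quadratic'' argument, which exploits only the positive semidefiniteness of the inner product. First I would dispose of the degenerate case $\langle \bm{v}, \bm{v}\rangle = 0$: by positivity this forces $\bm{v}$ to be the zero vector, so both sides of the claimed inequality vanish and equality holds trivially. Hence for the rest of the argument I may assume $\langle \bm{v}, \bm{v}\rangle > 0$.

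For the main case, I would consider, for a real parameter $t$, the quantity $f(t) = \langle \bm{u} - t\bm{v},\, \bm{u} - t\bm{v}\rangle$. By positive semidefiniteness of the inner product, $f(t) \geq 0$ for every $t \in \mathbb{R}$. Expanding by bilinearity and symmetry gives
\[
    f(t) = \langle \bm{u}, \bm{u}\rangle - 2t\,\langle \bm{u}, \bm{v}\rangle + t^2 \langle \bm{v}, \bm{v}\rangle,
\]
a quadratic in $t$ with positive leading coefficient that never takes negative values. Its discriminant must therefore be nonpositive, i.e. $4\langle \bm{u}, \bm{v}\rangle^2 - 4\langle \bm{u}, \bm{u}\rangle\langle \bm{v}, \bm{v}\rangle \leq 0$, and rearranging yields exactly $|\langle \bm{u}, \bm{v}\rangle|^2 \leq \langle \bm{u}, \bm{u}\rangle\langle \bm{v}, \bm{v}\rangle$. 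Equivalently, I could bypass the discriminant and substitute the minimizer $t^\star = \langle \bm{u}, \bm{v}\rangle / \langle \bm{v}, \bm{v}\rangle$ directly into $f$, obtaining $f(t^\star) = \langle \bm{u}, \bm{u}\rangle - \langle \bm{u}, \bm{v}\rangle^2 / \langle \bm{v}, \bm{v}\rangle \geq 0$, which gives the bound after clearing the positive denominator.

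There is essentially no hard step: the argument is purely algebraic once positivity of $f$ is invoked. The only points requiring a little care are the degenerate case handled above and, if one insists on full generality over a \emph{complex} inner product space, replacing $t$ by a complex scalar and choosing its phase so that $\langle \bm{u}, \bm{v}\rangle$ becomes real and nonnegative, after which the same quadratic estimate applies verbatim. Since all vectors arising in the analyses of this paper are real, the real case above is all that the applications of \cref{thm:cauchy-schwarz} actually require.
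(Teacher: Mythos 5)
Your proof is correct: the nonnegative-quadratic/discriminant argument is the standard derivation of Cauchy--Schwarz, and you handle the degenerate case $\langle \bm{v},\bm{v}\rangle=0$ and the complex-phase subtlety properly. Note that the paper itself states this as a background fact in the appendix (alongside Chernoff, Cheeger, Berry--Esseen, and Littlewood--Offord) and offers no proof, so there is nothing to compare against; your argument fills that gap correctly and is all the real-vector applications in the paper require.
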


\begin{theorem}[Cheeger's inequality~\cite{chung96}]\label{thm:cheeger}
	Let $P$ be the transition matrix of a connected edge-weighted graph $G=(V,E,w)$
	and let $\lambda_2$ be its second largest eigenvalue.
	Let $\abs{E(S, V \setminus S)} = \sum_{u\in S,\, v \in V\setminus S} w(u,v)$
	and
	\(
	h_G = \min_{S : \vol(S) \leq \frac{\vol(V)}{2}}
	\frac{\abs{E(S, V \setminus S)}}{\vol(S)}.
	\)
	Then
	\[
		\frac{1-\lambda_2}{2} \leq h_G \leq \sqrt{2 (1 - \lambda_2)}.
	\]
\end{theorem}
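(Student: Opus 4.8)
The plan is to prove the two inequalities separately, using the variational characterization of $1-\lambda_2$ as the smallest nonzero eigenvalue of the normalized Laplacian $\mathcal{L} := I - N$, where $N = \Dsqinv W \Dsqinv$ has the same spectrum as $P$. Writing a test vector as $\bm{g}=\Dsq\bm{f}$, one has $\bm{g}^\T\mathcal{L}\bm{g}=\sum_{\{u,v\}\in E}w(u,v)(\bm{f}(u)-\bm{f}(v))^2$ and $\bm{g}^\T\bm{g}=\sum_u\dv(u)\bm{f}(u)^2$, so that
\[
1-\lambda_2 \;=\; \min_{\bm{f}\,:\,\sum_u \dv(u)\bm{f}(u)=0} R(\bm{f}),
\qquad
R(\bm{f}) := \frac{\sum_{\{u,v\}\in E} w(u,v)(\bm{f}(u)-\bm{f}(v))^2}{\sum_{u} \dv(u)\bm{f}(u)^2},
\]
the minimum being attained by the $\Dsqinv$-image of the eigenvector of $N$ for $\lambda_2$ (the Fiedler vector).

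For the lower bound $\tfrac{1-\lambda_2}{2}\le h_G$ I would feed into $R$ the set $S$ attaining $h_G$ (so $\vol(S)\le\vol(V)/2$): take $\bm{f}=\bone_S - c\,\bone$ with $c=\vol(S)/\vol(V)$, which guarantees $\sum_u\dv(u)\bm{f}(u)=0$. A short computation gives numerator $=\abs{E(S,V\setminus S)}$ and denominator $=\vol(S)\vol(V\setminus S)/\vol(V)\ge\vol(S)/2$, the last step using $\vol(S)\le\vol(V)/2$. Since $1-\lambda_2$ is the minimum of $R$, this yields $1-\lambda_2\le 2\abs{E(S,V\setminus S)}/\vol(S)=2h_G$.

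The upper bound $h_G\le\sqrt{2(1-\lambda_2)}$ is the substantial direction and relies on a sweep-cut (co-area) argument. Starting from the optimizing Fiedler function $\bm{f}$, I would first shift it by a $\dv$-weighted median $c$, so that both $\{\bm{f}>c\}$ and $\{\bm{f}<c\}$ have volume at most $\vol(V)/2$; shifting leaves the numerator of $R$ unchanged and only enlarges the denominator, so $R(\bm{f}-c)\le 1-\lambda_2$. Splitting $\bm{f}-c$ into its positive and negative parts $\bm{g}_+,\bm{g}_-$ (disjoint supports, each of volume $\le\vol(V)/2$), one checks edge-by-edge that the numerator of $\bm{f}-c$ dominates the sum of the numerators of $\bm{g}_+$ and $\bm{g}_-$ while the denominator splits exactly; the mediant inequality then produces a nonnegative $h\in\{\bm{g}_+,\bm{g}_-\}$ with $\vol(\mathrm{supp}(h))\le\vol(V)/2$ and $R(h)\le 1-\lambda_2$.

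It remains to round $h$ to a genuine cut. For $t\ge 0$ set $S_t := \{u : h(u)^2 > t\}$; each $S_t\subseteq\mathrm{supp}(h)$ has volume at most $\vol(V)/2$, hence $\abs{E(S_t,V\setminus S_t)}\ge h_G\,\vol(S_t)$. Integrating over $t$ and using the co-area identities $\int_0^\infty\vol(S_t)\,dt=\sum_u\dv(u)h(u)^2$ and $\int_0^\infty\abs{E(S_t,V\setminus S_t)}\,dt=\sum_{\{u,v\}\in E}w(u,v)\abs{h(u)^2-h(v)^2}$, I obtain $h_G\sum_u\dv(u)h(u)^2\le\sum_{\{u,v\}\in E}w(u,v)\abs{h(u)^2-h(v)^2}$. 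Factoring $\abs{h(u)^2-h(v)^2}=\abs{h(u)-h(v)}\cdot\abs{h(u)+h(v)}$ and applying Cauchy--Schwarz (\cref{thm:cauchy-schwarz}) together with $(h(u)+h(v))^2\le 2(h(u)^2+h(v)^2)$ bounds the right-hand side by $\sqrt{\sum w(u,v)(h(u)-h(v))^2}\cdot\sqrt{2\sum_u\dv(u)h(u)^2}$; dividing and squaring yields $h_G^2\le 2R(h)\le 2(1-\lambda_2)$. The main obstacle is precisely this direction: carrying out the median-shift and positive-part reduction cleanly, and then pairing the co-area integration with Cauchy--Schwarz so that exactly the factor $2$ (and not a worse constant) survives.
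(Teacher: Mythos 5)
Your proposal cannot be compared against a proof in the paper, because the paper offers none: \cref{thm:cheeger} is imported verbatim from Chung's monograph \cite{chung96} and used as a black box (in \cref{rmk1} and in the proof of \cref{lem:sign_difference}). What you have written is, in effect, the standard proof from that reference, and it is correct. The easy direction checks out exactly as you describe: with $\bm{f}=\bone_S-c\bone$, $c=\vol(S)/\vol(V)$, the Rayleigh numerator is $\abs{E(S,V\setminus S)}$ and the denominator is $\vol(S)\vol(V\setminus S)/\vol(V)\ge\vol(S)/2$, giving $1-\lambda_2\le 2h_G$. For the hard direction, the three ingredients you list (median shift, which leaves the numerator fixed and can only increase the denominator because $\sum_u\dv(u)\bm{f}(u)=0$; the edge-by-edge inequality $(g(u)-g(v))^2\ge(g_+(u)-g_+(v))^2+(g_-(u)-g_-(v))^2$ plus the exact splitting of the denominator, followed by the mediant inequality; and the co-area/sweep-cut computation with Cauchy--Schwarz and $(h(u)+h(v))^2\le 2(h(u)^2+h(v)^2)$) assemble into the classical bound $h_G^2\le 2R(h)\le 2(1-\lambda_2)$ with exactly the constant $2$. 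Two trivial points worth making explicit if you write this out in full: the part $h\in\{\bm{g}_+,\bm{g}_-\}$ selected by the mediant inequality must have nonzero denominator (at least one of the two parts is nonzero since $\bm{f}$ is nonconstant, and if one part vanishes identically you simply take the other); and in the sweep-cut step the inequality $\abs{E(S_t,V\setminus S_t)}\ge h_G\vol(S_t)$ is applied only for those $t$ with $S_t\neq\emptyset$, which is harmless since both sides vanish otherwise.
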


\begin{theorem}[Berry-Esseen's theorem \cite{berry1941accuracy}]\label{thm:berry-esseen_non_iid}
	Let $X_1,\ldots,X_n$ be independent random variables
	with mean $\mu_i=0$, variance $\sigma_i^2 > 0$, and third absolute moment
	$\rho_i < \infty$, for every $i=1,\ldots,n$.
	Let $S_n = \sum_{i=1}^{n}X_i$ and let $\sigma = \sqrt{\sum_{i=1}^n \sigma_i^2}$
	be the standard deviation of $S_n$;
	let $F_n$ be the cumulative distribution function of $\frac{S_n}{\sigma}$;
	let $\Phi$ the cumulative distribution function of the standard normal distribution.
	Then, there exists a positive constant $C$ such that, for all $x$ and for all $n$,
	\[
		\vert F_n(x) - \Phi(x) \vert \leq \frac{C\psi}{\sigma},
	\]
	where $\psi := \max_{i\in\{1,\ldots,n\}} \frac{\rho_i}{\sigma_i^2}$.
\end{theorem}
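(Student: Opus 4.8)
The plan is to prove the statement by the classical Fourier-analytic route: reduce the uniform distance between $F_n$ and $\Phi$ to an estimate on characteristic functions via Esseen's smoothing inequality, and then convert the resulting Lyapunov ratio into the claimed quantity $\psi/\sigma$. First I would set up the transforms. Writing $\varphi_j(t) = \Ex{\exp(\mathrm{i} t X_j/\sigma)}$, independence gives that the characteristic function of $S_n/\sigma$ is $\hat F(t) = \prod_{j=1}^n \varphi_j(t)$, while the standard normal has characteristic function $e^{-t^2/2}$. Esseen's smoothing inequality then states that for every $T>0$,
\[
\sup_x \Abs{F_n(x) - \Phi(x)} \le \frac{1}{\pi}\int_{-T}^{T} \Abs{\frac{\hat F(t) - e^{-t^2/2}}{t}}\,dt + \frac{c_0}{T},
\]
where $c_0$ is an absolute constant arising from $\sup_x |\Phi'(x)| = 1/\sqrt{2\pi}$. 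This isolates the whole problem into bounding $\Abs{\hat F(t) - e^{-t^2/2}}$ on a window $|t| \le T$ and then choosing $T$.

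Second I would estimate each factor. A third-order Taylor expansion with the elementary bound $\Abs{e^{\mathrm{i}\theta} - 1 - \mathrm{i}\theta + \theta^2/2} \le |\theta|^3/6$ gives $\varphi_j(t) = 1 - \frac{\sigma_j^2 t^2}{2\sigma^2} + r_j(t)$ with $\Abs{r_j(t)} \le \frac{\rho_j |t|^3}{6\sigma^3}$. Introducing the Lyapunov fraction $L := \sum_{j=1}^n \rho_j/\sigma^3$ and using the power-mean inequality $\sigma_j^3 \le \rho_j$ (which, since $\rho_j \le L\sigma^3$, forces $\sigma_j/\sigma \le L^{1/3}$), each term $\sigma_j^2 t^2/(2\sigma^2)$ is small on the window. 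The key step is then the telescoping bound $\Abs{\prod_j \varphi_j(t) - \prod_j e^{-\sigma_j^2 t^2/(2\sigma^2)}} \le \sum_j \Abs{\varphi_j(t) - e^{-\sigma_j^2 t^2/(2\sigma^2)}}$, valid because all factors have modulus at most $1$, combined with the exact identity $\prod_j e^{-\sigma_j^2 t^2/(2\sigma^2)} = e^{-t^2/2}$ since $\sum_j \sigma_j^2 = \sigma^2$. Bounding each summand by $|r_j(t)|$ plus a quadratic-remainder correction yields, for $|t|$ up to a threshold of order $1/L$,
\[
\Abs{\hat F(t) - e^{-t^2/2}} \le c_1\, L\, |t|^3\, e^{-t^2/4}.
\]

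Third I would assemble the pieces. Plugging this into the smoothing inequality, dividing by $|t|$ and integrating against the Gaussian factor $e^{-t^2/4}$ (whose integral against $t^2$ is finite), and choosing $T$ of order $1/L$ to balance the two contributions, gives $\sup_x \Abs{F_n(x) - \Phi(x)} \le C_0\, L$ for an absolute constant $C_0$. It then remains only to convert $L$ into the stated quantity: since $\rho_j \le \psi\,\sigma_j^2$ by the definition $\psi = \max_i \rho_i/\sigma_i^2$, we have $\sum_j \rho_j \le \psi \sum_j \sigma_j^2 = \psi\,\sigma^2$, whence $L = \sum_j \rho_j/\sigma^3 \le \psi/\sigma$, and the claim follows with $C = C_0$.

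The main obstacle is the uniform characteristic-function estimate on the window $|t| \le T$: the telescoping step requires every factor to stay within the unit disk and every Taylor remainder to be controlled simultaneously, which forces the window length $T$ to scale like $1/L$ and hinges on the inequality $\sigma_j^3 \le \rho_j$ to guarantee that no single ratio $\sigma_j/\sigma$ dominates. Extracting a clean absolute constant here (rather than merely a $\bigO(L)$ bound) is what the sharper treatments in the literature optimize; for the qualitative statement as worded, however, only the existence of an absolute $C$ is required.
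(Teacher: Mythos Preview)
The paper does not actually prove this statement: it appears in the appendix ``Useful inequalities'' as a cited background result (Berry~1941), with no proof given. There is therefore nothing to compare your argument against in the paper itself.

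That said, your sketch is the standard Fourier-analytic proof via Esseen's smoothing inequality and is essentially correct. One small point worth tightening: in the telescoping step you bound $\Abs{\varphi_j(t) - e^{-\sigma_j^2 t^2/(2\sigma^2)}}$ by ``$|r_j(t)|$ plus a quadratic-remainder correction,'' but you should be explicit that this correction (coming from $\Abs{1 - \frac{\sigma_j^2 t^2}{2\sigma^2} - e^{-\sigma_j^2 t^2/(2\sigma^2)}} \le \frac{1}{2}\bigl(\frac{\sigma_j^2 t^2}{2\sigma^2}\bigr)^2$) is itself dominated by a constant times $\rho_j|t|^3/\sigma^3$ on the window, which again uses $\sigma_j^3 \le \rho_j$. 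With that made explicit, the argument goes through and yields the qualitative bound stated; the paper only needs the existence of an absolute $C$ (and in fact only invokes the crude numerical value $C \le 1.88$ in the proof of Lemma~4.3), so your level of detail is more than sufficient.
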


\begin{theorem}[Littlewood-Offord's small ball~\cite{erdos1945lemma}]\label{thm:littlewood-offord}
	Let $a_1, \dots, a_n \in \mathbb{R}$ be real numbers with $|a_i| \geq 1$ 
	for every $i=1, \dots, n$ and let $r \in \mathbb{R}$ be any real number.  Let
	$\{X_i \,:\, i = 1, \dots, n \}$ be a family of independent Rademacher random
	variables (taking values $\pm 1$ with probability $1/2$) and let $X$ be their
	sum weighted with the $a_i$s, i.e., $X = \sum_{i=1}^{n} a_i X_i$, then
	\[
		\Pr(\Abs{X - r} < 1) = \bigO\left(\frac{1}{\sqrt{n}}\right).
	\]
\end{theorem}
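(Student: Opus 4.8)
The plan is to reduce this small-ball estimate to a purely combinatorial fact about antichains in the Boolean lattice, following the classical argument of Erdős. First I would exploit the symmetry of the Rademacher variables to assume, without loss of generality, that $a_i \geq 1$ for every $i$: since each $X_i$ is symmetric, $a_i X_i$ has the same distribution as $\abs{a_i} X_i$, and these remain independent across $i$, so $X$ has the same law as $\sum_i \abs{a_i} X_i$. This normalisation (all coefficients positive and at least $1$) is exactly what makes the following monotonicity argument work.

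Next I would identify each of the $2^n$ equally likely sign vectors $\bm{\epsilon} \in \{-1,+1\}^n$ with the subset $A = \{i : \epsilon_i = +1\}$ of $[n]$, and view $X$ as a function of this subset. The key observation is a gap property: if $A \subsetneq B$, then passing from $A$ to $B$ flips the coordinates in $B \setminus A$ from $-1$ to $+1$, so the corresponding values of $X$ differ by $\sum_{i \in B \setminus A} 2 a_i \geq 2$. Hence two sign vectors whose sums both lie in the open interval $(r-1, r+1)$ cannot be comparable under inclusion, so the family of subsets realising a value in this interval is an \emph{antichain}.

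At this point I would invoke Sperner's theorem, which bounds the cardinality of any antichain in the lattice of subsets of $[n]$ by $\binom{n}{\lfloor n/2 \rfloor}$. Since each sign vector carries probability $2^{-n}$, this gives
\[
\Pr(\Abs{X - r} < 1) \leq \frac{1}{2^n}\binom{n}{\lfloor n/2 \rfloor},
\]
a bound that is uniform in $r$. Stirling's approximation then yields $\binom{n}{\lfloor n/2 \rfloor} = \Theta(2^n / \sqrt{n})$, so the right-hand side is $\bigO(1/\sqrt{n})$, which is the claim.

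The conceptual crux is the antichain observation coupled with the reduction to Sperner's theorem: once one recognises that the normalisation $\abs{a_i} \geq 1$ forces any two comparable subsets to have sums separated by at least $2$, the combinatorial bound is immediate, and the remaining steps — the sign reduction and the Stirling estimate — are routine.
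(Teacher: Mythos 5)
Your proof is correct: the paper states this result without proof, citing Erd\H{o}s's 1945 paper, and your argument (symmetrize to $a_i \geq 1$, observe that the subsets realising a value in the open length-$2$ interval form an antichain because comparable subsets have sums differing by at least $2$, then apply Sperner's theorem and Stirling) is precisely the classical proof from that reference. No gaps.
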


\bibliographystyle{alpha}
\bibliography{community}

\newcommand{\etalchar}[1]{$^{#1}$}
\begin{thebibliography}{KMM{\etalchar{+}}13}

\bibitem[AAE08]{AAE07}
Dana Angluin, James Aspnes, and David Eisenstat.
\newblock {A Simple Population Protocol for Fast Robust Approximate Majority}.
\newblock {\em Distributed Computing}, 21(2):87--102, 2008.
\newblock (Preliminary version in DISC'07).

\bibitem[ABH14]{abbe2014exact}
Emmanuel Abbe, Afonso~S. Bandeira, and Georgina Hall.
\newblock Exact recovery in the stochastic block model.
\newblock {\em IEEE Trans. on Information Theory}, 62(1):471--487, 2014.

\bibitem[AS15]{abbe2015detection}
Emmanuel Abbe and Colin Sandon.
\newblock Detection in the stochastic block model with multiple clusters: proof
  of the achievability conjectures, acyclic bp, and the information-computation
  gap.
\newblock {\em arXiv preprint arXiv:1512.09080}, 2015.

\bibitem[BC09]{barber2009detecting}
Michael~J. Barber and John~W. Clark.
\newblock Detecting network communities by propagating labels under
  constraints.
\newblock {\em Physical Review E}, 80(2):026129, 2009.

\bibitem[BCM{\etalchar{+}}18]{BCMNPRT18}
Luca Becchetti, Andrea~E.F. Clementi, Pasin Manurangsi, Emanuele Natale,
  Francesco Pasquale, Prasad Raghavendra, and Luca Trevisan.
\newblock Average whenever you meet: Opportunistic protocols for community
  detection.
\newblock In {\em 26th Annual European Symposium on Algorithms, {ESA} 2018,
  August 20-22, 2018, Helsinki, Finland}, pages 7:1--7:13, 2018.

\bibitem[BCN{\etalchar{+}}17]{becchetti2017find}
Luca Becchetti, Andrea~E.F. Clementi, Emanuele Natale, Francesco Pasquale, and
  Luca Trevisan.
\newblock Find your place: Simple distributed algorithms for community
  detection.
\newblock In {\em Proceedings of the Twenty-Eighth Annual {ACM-SIAM} Symposium
  on Discrete Algorithms, {SODA} 2017, Barcelona, Spain, Hotel Porta Fira,
  January 16-19}, pages 940--959, 2017.

\bibitem[BCPR19]{becchetti2019step}
Luca Becchetti, Emilio Cruciani, Francesco Pasquale, and Sara Rizzo.
\newblock Step-by-step community detection in volume-regular graphs.
\newblock In {\em 30th International Symposium on Algorithms and Computation
  (ISAAC 2019)}. Schloss Dagstuhl-Leibniz-Zentrum fuer Informatik, 2019.

\bibitem[Ber41]{berry1941accuracy}
Andrew~C Berry.
\newblock The accuracy of the gaussian approximation to the sum of independent
  variates.
\newblock {\em Transactions of the american mathematical society},
  49(1):122--136, 1941.

\bibitem[BLM15]{BLM15}
Charles Bordenave, Marc Lelarge, and Laurent Massouli{\'e}.
\newblock Non-backtracking spectrum of random graphs: community detection and
  non-regular {R}amanujan graphs.
\newblock In {\em Proc. of 56rd Ann. {IEEE} Symp. on Foundations of Computer
  Science ({FOCS}'15)}, pages 1347--1357. IEEE, 2015.

\bibitem[Bop87]{boppana1987eigenvalues}
Ravi~B. Boppana.
\newblock Eigenvalues and graph bisection: An average-case analysis.
\newblock In {\em Proc. of the 28th Ann. IEEE Symp. on Foundations of Computer
  Science (FOCS'87)}, pages 280--285. IEEE, 1987.

\bibitem[Chu96]{chung96}
Fan~R.K. Chung.
\newblock Laplacians of graphs and {Cheeger}'s inequalities.
\newblock In {\em Combinatorics, {Paul} {Erd\"os} is eighty}, volume~2, pages
  157--172. J{\'a}nos Bolyai Math. Soc., 1996.

\bibitem[CNNS18]{CNNS18}
Emilio Cruciani, Emanuele Natale, Andr{\'{e}} Nusser, and Giacomo Scornavacca.
\newblock Phase transition of the {2-Choices} dynamics on core-periphery
  networks.
\newblock In {\em Proceedings of the 17th International Conference on
  Autonomous Agents and MultiAgent Systems, {AAMAS} 2018, Stockholm, Sweden,
  July 10-15, 2018}, pages 777--785, 2018.

\bibitem[CNS19]{CNS18}
Emilio Cruciani, Emanuele Natale, and Giacomo Scornavacca.
\newblock Distributed community detection via metastability of the {2-Choices}
  dynamics.
\newblock In {\em Proceedings of the 33rd AAAI Conference on Artificial
  Intelligence, {AAAI} 2019, Honolulu, Hawaii, USA, January 27 - February 1,
  2019}, 2019.

\bibitem[CO05]{coja-oghlan_spectral_2005}
Amin Coja-Oghlan.
\newblock {\em Spectral techniques, semidefinite programs, and random graphs}.
\newblock Habilitation thesis, Humboldt University Berlin, 2005.

\bibitem[CO10]{coja2010graph}
Amin Coja-Oghlan.
\newblock Graph partitioning via adaptive spectral techniques.
\newblock {\em Combinatorics, Probability and Computing}, 19(02):227--284,
  2010.

\bibitem[DF89]{dyer1989solution}
Martin~E. Dyer and Alan~M. Frieze.
\newblock The solution of some random {NP}-hard problems in polynomial expected
  time.
\newblock {\em Journal of Algorithms}, 10(4):451--489, 1989.

\bibitem[DKMZ11]{decelle_asymptotic_2011}
Aurelien Decelle, Florent Krzakala, Cristopher Moore, and Lenka Zdeborov{\'a}.
\newblock Asymptotic analysis of the stochastic block model for modular
  networks and its algorithmic applications.
\newblock {\em Physical Review E}, 84(6):066106, 2011.

\bibitem[DP09]{dubhashi2009concentration}
Devdatt~P Dubhashi and Alessandro Panconesi.
\newblock {\em Concentration of measure for the analysis of randomized
  algorithms}.
\newblock Cambridge University Press, 2009.

\bibitem[Erd45]{erdos1945lemma}
Paul Erd{\"o}s.
\newblock On a lemma of littlewood and offord.
\newblock {\em Bulletin of the American Mathematical Society}, 51(12):898--902,
  1945.

\bibitem[Fie89]{fiedler1989laplacian}
Miroslav Fiedler.
\newblock Laplacian of graphs and algebraic connectivity.
\newblock {\em Banach Center Publications}, 25(1):57--70, 1989.

\bibitem[For10]{fortunato_community_2010}
Santo Fortunato.
\newblock Community detection in graphs.
\newblock {\em Physics Reports}, 486(3):75--174, 2010.

\bibitem[HLL83]{holland1983stochastic}
Paul~W. Holland, Kathryn~Blackmond Laskey, and Samuel Leinhardt.
\newblock Stochastic blockmodels: First steps.
\newblock {\em Social networks}, 5(2):109--137, 1983.

\bibitem[HP01]{hassin2001distributed}
Yehuda Hassin and David Peleg.
\newblock Distributed probabilistic polling and applications to proportionate
  agreement.
\newblock {\em Information and Computation}, 171(2):248--268, 2001.

\bibitem[JS98]{jerrum_metropolis_1998}
Mark Jerrum and Gregory~B. Sorkin.
\newblock The metropolis algorithm for graph bisection.
\newblock {\em Discrete Applied Mathematics}, 82(1):155--175, 1998.

\bibitem[KM04]{kempe2004decentralized}
David Kempe and Frank McSherry.
\newblock A decentralized algorithm for spectral analysis.
\newblock In {\em Proc. of the 36th Ann. ACM Symp. on Theory of Computing
  (STOC'04)}, pages 561--568, 2004.

\bibitem[KMM{\etalchar{+}}13]{krzakala2013spectral}
Florent Krzakala, Cristopher Moore, Elchanan Mossel, Joe Neeman, Allan Sly,
  Lenka Zdeborov{\'a}, and Pan Zhang.
\newblock Spectral redemption in clustering sparse networks.
\newblock {\em Proceedings of the National Academy of Sciences},
  110(52):20935--20940, 2013.

\bibitem[KPS13]{kothapalli2013analysis}
Kishore Kothapalli, Sriram~V. Pemmaraju, and Vivek Sardeshmukh.
\newblock On the analysis of a label propagation algorithm for community
  detection.
\newblock In {\em Proc. of the 14th Int. Conf. on Distributed Computing and
  Networking (ICDCN'13)}, pages 255--269, 2013.

\bibitem[KS60]{kemeny1960finite}
John~G. Kemeny and J.~Laurie Snell.
\newblock {\em Finite Markov chains}.
\newblock D. van Nostrand Company, inc., Princeton, N.J., 1960.

\bibitem[Lan50]{lanczos1950iteration}
Cornelius Lanczos.
\newblock {\em An iteration method for the solution of the eigenvalue problem
  of linear differential and integral operators}.
\newblock United States Governm. Press Office Los Angeles, CA, 1950.

\bibitem[LGT14]{lee2014multiway}
James~R Lee, Shayan~Oveis Gharan, and Luca Trevisan.
\newblock Multiway spectral partitioning and higher-order cheeger inequalities.
\newblock {\em Journal of the ACM (JACM)}, 61(6):37, 2014.

\bibitem[LM10]{liu2010advanced}
Xin Liu and Tsuyoshi Murata.
\newblock Advanced modularity-specialized label propagation algorithm for
  detecting communities in networks.
\newblock {\em Physica A: Statistical Mechanics and its Applications},
  389(7):1493--1500, 2010.

\bibitem[LRU14]{rajaraman2011mining}
Jure Leskovec, Anand Rajaraman, and Jeffrey~David Ullman.
\newblock {\em Mining of massive datasets}.
\newblock Cambridge university press, 2014.

\bibitem[Mac03]{MAC03}
David~J.C. MacKay.
\newblock {\em Information theory, inference and learning algorithms}.
\newblock Cambridge university press, 2003.

\bibitem[McS01]{mcsherry2001spectral}
Frank McSherry.
\newblock Spectral partitioning of random graphs.
\newblock In {\em Proc. of the 42nd IEEE Symp. on Foundations of Computer
  Science (FOCS'01)}, pages 529--537, 2001.

\bibitem[MK07]{MK07}
Joris~M. Mooij and Hilbert~J. Kappen.
\newblock Sufficient conditions for convergence of the sum--product algorithm.
\newblock {\em IEEE Transactions on Information Theory}, 53(12):4422--4437,
  2007.

\bibitem[MMM18]{TMM18}
Frederik Mallmann{-}Trenn, Cameron Musco, and Christopher Musco.
\newblock Eigenvector computation and community detection in asynchronous
  gossip models.
\newblock In {\em 45th International Colloquium on Automata, Languages, and
  Programming, {ICALP} 2018, July 9-13, 2018, Prague, Czech Republic}, pages
  159:1--159:14, 2018.

\bibitem[MNS13]{mossel_proof_2013}
Elchanan Mossel, Joe Neeman, and Allan Sly.
\newblock A proof of the block model threshold conjecture.
\newblock {\em arXiv preprint arXiv:1311.4115}, 2013.

\bibitem[MNS16]{MNS14}
Elchanan Mossel, Joe Neeman, and Allan Sly.
\newblock Belief propagation, robust reconstruction and optimal recovery of
  block models.
\newblock {\em Ann. Appl. Probab.}, 26(4):2211--2256, 08 2016.

\bibitem[NJW02]{ng2002spectral}
Andrew~Y Ng, Michael~I Jordan, and Yair Weiss.
\newblock On spectral clustering: Analysis and an algorithm.
\newblock In {\em Advances in neural information processing systems}, pages
  849--856, 2002.

\bibitem[PSZ17]{PSZ17}
Richard Peng, He~Sun, and Luca Zanetti.
\newblock Partitioning well-clustered graphs: Spectral clustering works!
\newblock {\em SIAM Journal on Computing}, 46(2):710--743, 2017.

\bibitem[RAK07]{raghavan2007near}
Usha~Nandini Raghavan, R{\'e}ka Albert, and Soundar Kumara.
\newblock Near linear time algorithm to detect community structures in
  large-scale networks.
\newblock {\em Physical Review E}, 76(3):036106, 2007.

\bibitem[SM00]{shi2000normalized}
Jianbo Shi and Jitendra Malik.
\newblock Normalized cuts and image segmentation.
\newblock {\em Departmental Papers (CIS)}, page 107, 2000.

\bibitem[SZ17]{SZ18}
He~Sun and Luca Zanetti.
\newblock Distributed graph clustering and sparsification.
\newblock {\em CoRR}, abs/1711.01262, 2017.

\bibitem[TK06]{tian2006lumpability}
Jianjun~Paul Tian and D~Kannan.
\newblock Lumpability and commutativity of markov processes.
\newblock {\em Stochastic analysis and Applications}, 24(3):685--702, 2006.

\bibitem[VL07]{von2007tutorial}
Ulrike Von~Luxburg.
\newblock A tutorial on spectral clustering.
\newblock {\em Statistics and computing}, 17(4):395--416, 2007.

\bibitem[Wei00]{WEI00}
Yair Weiss.
\newblock Correctness of local probability propagation in graphical models with
  loops.
\newblock {\em Neural computation}, 12(1):1--41, 2000.

\end{thebibliography}

\end{document}